\DeclareMathAlphabet\mathbfcal{OMS}{cmsy}{b}{n}
\newcommand{\ket}[1]{\ensuremath{|#1\rangle}}
\newcommand{\bra}[1]{\ensuremath{\langle #1|}}
\newcommand{\braket}[2]{\langle #1|#2\rangle}
\newcommand{\proj}[1]{\ket{#1}\bra{#1}}
\newcommand{\be}{\begin{equation}}
\newcommand{\ee}{\end{equation}}
\newcommand{\ba}{\begin{eqnarray}}
\newcommand{\ea}{\end{eqnarray}}
\newcommand{\norm}[1]{\left\|#1\right\|}
\newcommand{\id}{\mathbb{I}}
\newtheorem{alemma}{Lemma}
\newtheorem{aproposition}{Proposition}
\newtheorem{atheorem}{Theorem}
\newtheorem{afact}{Fact}
\newtheorem{acorollary}{Corollary}
\newtheorem{adefinition}{Definition}
\newtheorem{question}{Question}
\definecolor{nred}{rgb}{0.9,0.1,0.1}
\definecolor{nblack}{rgb}{0,0,0}
\definecolor{nblue}{rgb}{0.2,0.2,0.8}
\definecolor{ngreen}{rgb}{0.2,0.6,0.2}
\definecolor{ublue}{rgb}{0,0,0.5}
\definecolor{pur}{rgb}{0.75,0,0.5}
\definecolor{nngrn}{rgb}{0,0.5,0.5}
\definecolor{CitingColor}{rgb}{0,0.3,1}
\newcommand{\blu}{\color{nblue}}
\newcommand{\CY}[1]{{\color{black}#1}}
\newcommand{\CYtwo}[1]{{\color{black}#1}}
\newcommand{\CYnew}[1]{{\color{black}#1}}
\newcommand{\CYthree}[1]{{\color{black}#1}}
\begin{document}
\title{Dynamical Landauer principle: Thermodynamic criteria of transmitting classical information}

\author{Chung-Yun Hsieh}
\email{chung-yun.hsieh@bristol.ac.uk}
\affiliation{H.H. Wills Physics Laboratory, University of Bristol, Tyndall Avenue, Bristol BS8 1TL, United Kingdom}
\affiliation{ICFO - Institut de Ci\`encies Fot\`oniques, The Barcelona Institute of Science and Technology, 08860 Castelldefels, Spain}

\date{\today}

\begin{abstract}
Transmitting energy and information are two essential aspects of nature.
Recent findings suggest they are closely related, while a quantitative equivalence between them is still unknown.
This thus motivates us to ask: {\em Can information transmission tasks equal certain energy transmission tasks?}
We answer this question positively by bounding various one-shot classical capacities via different energy transmission tasks.
Such bounds provide the physical implication that, in the one-shot regime, transmitting $n$ bits of classical information {\em is equivalent to} $n\times k_BT\ln2$ transmitted energy.
Unexpectedly, these bounds further uncover a dynamical version of Landauer's principle, showing the strong link between {\em transmitting} (rather than {\em erasing}) information and energy.
Finally, in the asymptotic regime, our findings further provide thermodynamic meanings for Holevo-Schumacher-Westmoreland Theorem and a series of strong converse properties as well as no-go theorems.
\end{abstract}

\maketitle

\section{Introduction}
Transmitting energy and information are two essential aspects of our everyday lives.
They are not just foundations of nature's functionalities but also key underpinnings of the broad sciences and technologies.
Even though they seem to be unrelated, several hints have suggested the {\em opposite}.
For instance, photons (i.e., light's quantised energy) can send classical messages, meaning that transmitting energy can provide information transmission.
On the other hand, the thermodynamic effects in transmitting/maintaining information~\cite{Plenio1999,Plenio2001,Schumacher2002,Maruyama2005,Hsieh2020,Hsieh2021,Narasimhachar2019,Korzekwa2019,Biswas2021,Auffeves2022} and the energy cost of information processing~\cite{Faist2015,Faist2018,Chiribella2021,Chiribella2022} jointly suggest that transmitting information may potentially be accompanied by certain types of energy transmission.
Still, a clear, quantitative equivalence between transmitting information and energy is still missing in the literature.
This thus motivates us to ask the following question:
\begin{center}
{\em {\bf(Central Question)} Can information transmission tasks be equivalent to certain energy transmission tasks?}
\end{center}
A suitable answer to the above question can uncover the foundational link between transmitting information and energy.

This work answers this question by proving the first such equivalence.
We first formulate information transmission via various one-shot classical communication tasks.
In such tasks, the ability to send classical information is quantified by different types of one-shot classical capacities.
Utilising entropic quantities introduced in Refs.~\cite{Aberg2013,Wang2013}, we prove entropic bounds on these one-shot classical capacities (Theorems~\ref{AppThm:MainResult} and~\ref{AppThm:CC}).
Then, we introduce a novel class of (one-shot) energy transmission tasks, whose figure-of-merits are equivalent to the one-shot classical capacities (Theorem~\ref{Result:TCTCI})---this thus answers the central question.
Surprisingly, Theorem~\ref{Result:TCTCI} provides an unexpected application---a dynamical version of {\em Landauer's principle}~\cite{Landauer1961} (Corollaries~\ref{coro:weak dynamical Landauer} and~\ref{coro:strong dynamical Landauer}), which largely strengthens the finding reported in Ref.~\cite{Hsieh2021}.
Finally, we show that Theorem~\ref{Result:TCTCI} can reproduce the {\em Holevo-Schumacher-Westmoreland} (HSW) Theorem~\cite{Holevo1973,Holevo1998,Schumacher1997} in the asymptotic regime, further revealing its thermodynamics meaning (Proposition~\ref{AppResult:HSWThermo}) and several no-go results (Corollaries~\ref{coro:strong converse property chi} and~\ref{coro:no-go}).
Please also see Fig.~\ref{Fig:summary}, which schematically explains how this paper is structured.

This work is the companion paper of Ref.~\cite{Companion2}.
As we aim to bridge the communities of thermodynamics and quantum communication, we detail the mathematical frameworks and provide thorough, step-by-step proofs of all results for a pedagogical purpose. 
The companion paper~\cite{Companion2} focuses more on our results' physical implications.

\section{Framework}\label{App:Proof-MainResult}

\begin{figure*}
\begin{center}
\scalebox{0.8}{\includegraphics{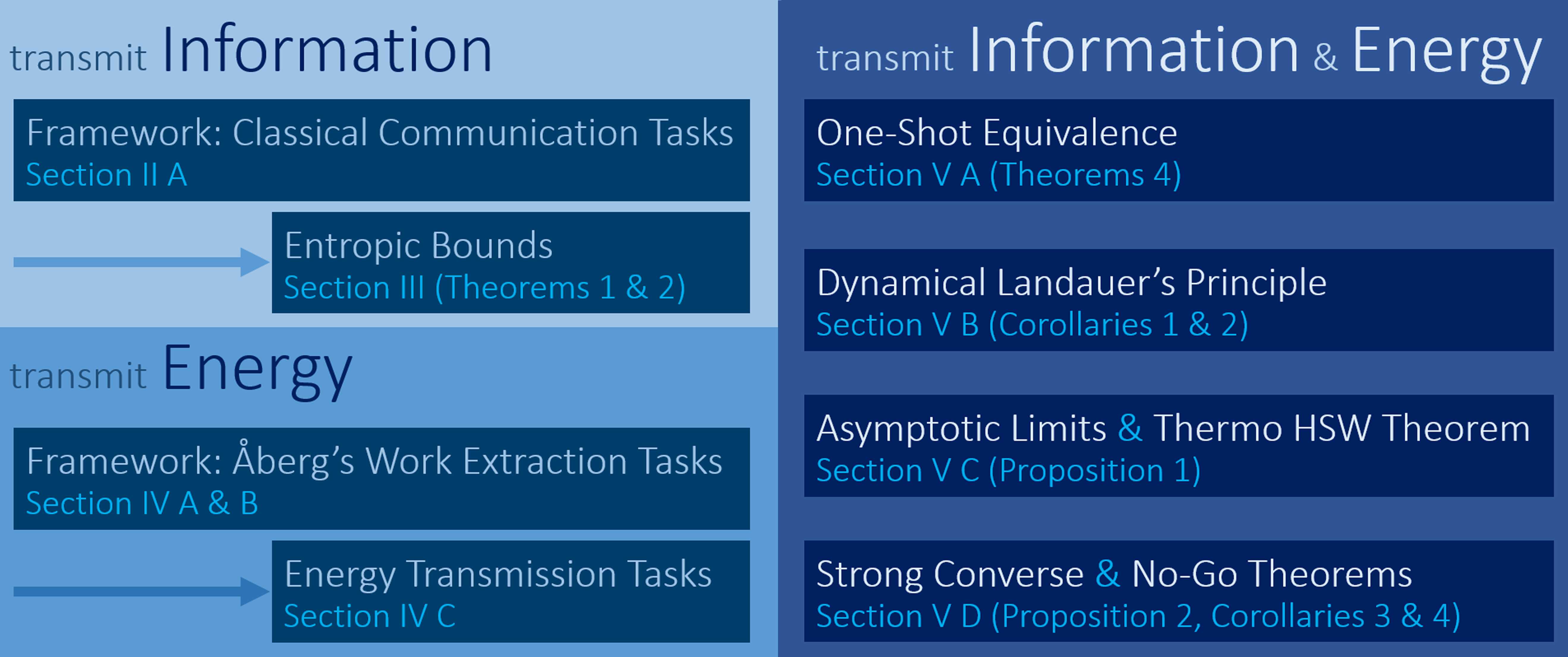}}
\caption{
{\bf Summary of this work.}
This paper is structured as follows.
Section~\ref{App:Proof-MainResult} contains preliminary notions, including the framework of classical communication tasks.
Section~\ref{Sec:Entropic bound on CC} provides entropic bounds on one-shot classical capacities (Theorems~\ref{AppThm:MainResult} and~\ref{AppThm:CC}).
Section~\ref{Sec:energy transmission} details the energy transmission tasks.
Section~\ref{Sec:TCTCI} contains results bridging information and energy transmission. 
Section~\ref{App:Proof} provides thermodynamic bounds on one-shot classical capacities (Theorem~\ref{Result:TCTCI}). 
Section~\ref{Sec:Landauer} uncovers the dynamical version of Landauer's principle (Corollaries~\ref{coro:weak dynamical Landauer} and~\ref{coro:strong dynamical Landauer}).
Section~\ref{App:AsymptoticLimit} provides the thermodynamic meaning of the HSW Theorem (Proposition~\ref{AppResult:HSWThermo}).
Section~\ref{Sec:no-go} reports strong converse properties and no-go results (Corollaries~\ref{coro:strong converse property chi} and~\ref{coro:no-go}).
Section~\ref{Sec:Conclusion} concludes the paper.
}
\label{Fig:summary} 
\end{center}
\end{figure*}

In this paper, we always consider quantum systems with finite dimensions.
We now start with a quick recap of basic notions from quantum information theory.

First, for a given quantum system (denoted by $S$), a {\em quantum state}, or simply {\em state} (also known as {\em mixed state} or {\em density matrix}), is a semi-definite positive operator $\rho\ge0$ acting on $S$ with ${\rm tr}(\rho)=1$~\cite{QIC-book}.
By the spectral decomposition theorem~\cite{QIC-book}, every (mixed) state can be written as a convex mixture of {\em pure states}~\footnote{A state $\rho$ is {\em pure} if ${\rm tr}(\rho^2)=1$~\cite{QIC-book}. In this case, we will write $\rho = \proj{\phi}$, where $\ket{\phi}$ is a so-called ket vector.} as $\rho = \sum_ip_i\proj{\phi_i}$ with \mbox{$p_i\ge0$}, $\sum_ip_i=1$, and $\braket{\phi_i}{\phi_j}=\delta_{ij}$.
Physically, this means that such a (mixed) state can be prepared by generating the pure state $\ket{\phi_i}$ with probability $p_i$ in a multi-trial experiment---it is a ``statistical'' mixture of pure states.

Second, a physical measurement can be described by a {\em positive operator-valued measure} (POVM)~\cite{QIC-book}, which is a set $\{E_m\}_m$ of operators acting on $S$ with $E_m\ge0$ and $\sum_mE_m=\id_S$ ($\id_S$ is the identity operator acting on $S$).
For an input state $\rho$, it describes the process that the measurement outputs the $m$-th outcome with probability ${\rm tr}(E_m\rho)$.

Third, quantum dynamics can be described by {\em channels}, which are {\em completely-positive trace-preserving linear maps}~\cite{QIC-book}.
For an input state $\rho$, a channel $\mathcal{N}$ describes the process $\rho\mapsto\mathcal{N}(\rho)$, where $\mathcal{N}(\rho)$ is the output state.
The notion of channels provides a mathematical way to describe general quantum information processing.

Finally, deterministic ways to manipulate a channel are described by {\em superchannels}~\cite{Chiribella2008,Chiribella2008-2}, which are linear maps bringing a channel to another channel. For an initial channel $\mathcal{N}$, a superchannel $\Pi$ outputs another channel denoted as $\Pi(\mathcal{N})$.
Mathematically, as characterised by Refs.~\cite{Chiribella2008,Chiribella2008-2}, every physically relevant superchannel $\Pi$ can be written as
$
\Pi(\mathcal{N})= \mathcal{E}_{\rm post}\circ(\mathcal{N}\otimes\mathcal{I}_{\rm auxiliary})\circ\mathcal{E}_{\rm pre},
$
where $\mathcal{E}_{\rm pre}$ and $\mathcal{E}_{\rm post}$ are some channels, and $\mathcal{I}_{\rm auxillary}$ is the identity channel acting on a finite-dimensional auxiliary system.
Physically, this means that all (physically relevant) superchannels can be realised by adding some pre- and post-processing channels ($\mathcal{E}_{\rm pre}$ and $\mathcal{E}_{\rm post}$) to the given channel ($\mathcal{N}$) with the help of some finite-size auxiliary systems.
In this work, we will use the notation ``$\Theta$'' to denote a generic set of superchannels.
A simple yet important example is the one that only contains the identity superchannel, which can be written as
\begin{align}\label{Eq:standard C}
\Theta =\Theta_{\rm C} \coloneqq \{(\cdot)\mapsto(\cdot)\}.
\end{align}
This special set of superchannels will play a crucial role later.

\begin{figure*}
\begin{center}
\scalebox{0.8}{\includegraphics{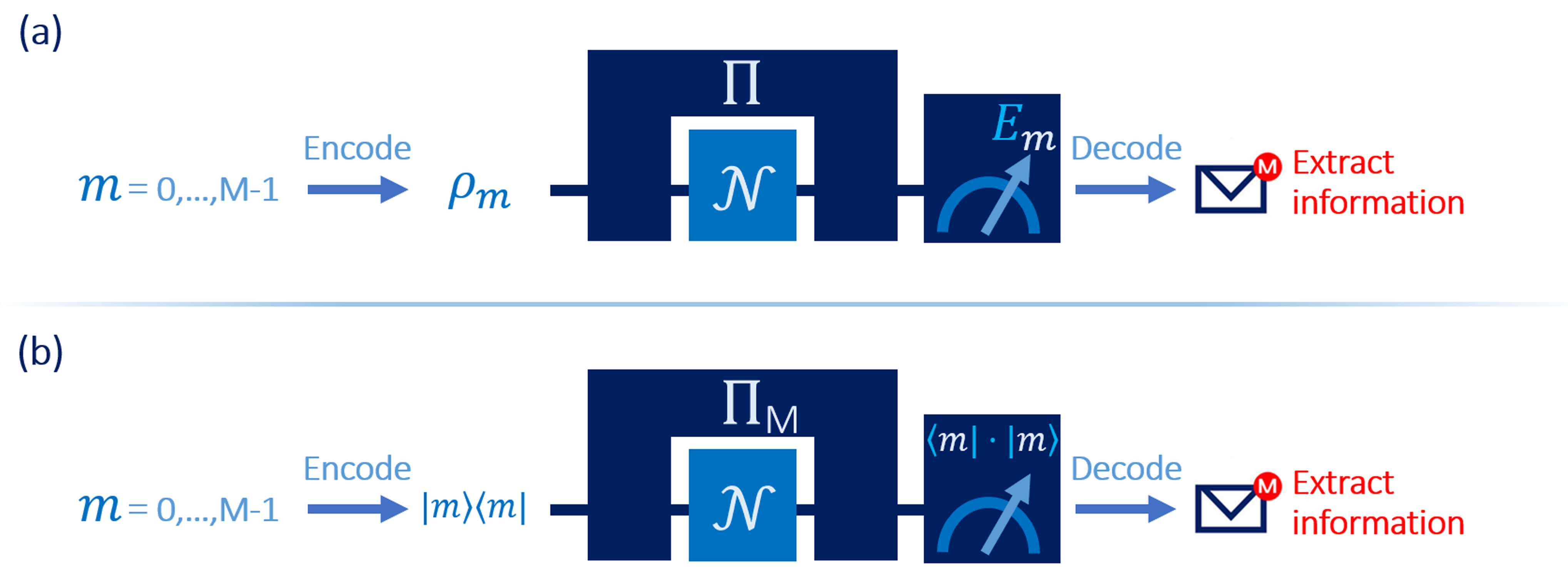}}
\caption{
\CY{\bf Two equivalent formulations of classical communication.}
\CY{(a) The task corresponding to Definition~\ref{AppDef:CCapacity}. The sender encodes the classical index $m$ into the state $\rho_m$. After sending it via $\Pi(\mathcal{N})$ for some $\Pi\in\Theta$, the receiver decodes it by the measurement $\{E_{m'}\}_{m'=0}^{M-1}$. The communication is successful if $m'=m$.
(b) The task corresponding to Fact~\ref{AppLemma:AlternativeCCapacity} (which is equivalent to Definition~\ref{AppDef:CCapacity}). The sender encodes $m$ into a pre-fixed computational basis $\ket{m}$.
Then, they send it via a classical version of $\mathcal{N}$ assisted by $\Theta$ as given in Definition~\ref{Def:classical_ver}; namely, the classical-to-classical channel $\Pi_M(\mathcal{N})$. Finally, the receiver decodes the information by applying the projective measurement $\{\proj{m'}\}_{m'=0}^{M-1}$.}
}
\label{Fig:CCtasks} 
\end{center}
\end{figure*}

\subsection{Classical Communication via Quantum Channels}\label{App:Proof-MainResult-Preliminary}
With the notions just introduced, we can now detail the {\em $\Theta$-assisted classical communication task}, containing a sender and a receiver [see also Fig.~\ref{Fig:CCtasks} (a)].
The sender's goal is to send a set of classical indices $\{m\}_{m=0}^{M-1}$ (termed {\em classical data} or {\em classical information}) to the receiver via a given channel $\mathcal{N}$.
Whenever such a set can be reliably sent to the receiver in a multi-trial experiment, the sender can ``inform'' the receiver of certain nontrivial things (e.g. the sender's date of birth, age, or anything that can be represented by a finite combination of integers $m$'s).
In other words, the sender can {\em communicate} with the receiver.
To do so via the quantum channel $\mathcal{N}$, the sender will first choose a set of quantum states $\{\rho_m\}_{m=0}^{M-1}$ to represent the classical indices $m$'s---this is the so-called {\em encoding}.
That is, the sender {\em encodes} the classical index $m$ into some quantum state $\rho_m$, making it a valid input for the channel.
Without any additional assistance, the sender will send $\rho_m$ to the receiver through $\mathcal{N}$. Then, to extract the classical indices from $\mathcal{N}$'s output (which is still quantum), the receiver measures the output $\mathcal{N}(\rho_m)$ via some POVM $\{E_{m'}\}_{m'=0}^{M-1}$.
The communication is successful if the measurement outcome $m'$ coincides with the originally encoded index $m$; namely, when $m'=m$.
This is the so-called {\em decoding}.

Now, suppose the sender and receiver are allowed to use some superchannels to assist their communication, and the allowed superchannels are collectively described by the set $\Theta$.
Mathematically, this means that the sender can send $\rho_m$'s to the receiver through a channel $\Pi(\mathcal{N})$ with some $\Pi\in\Theta$.
Hence, a successful communication corresponds to the following transformation:
\begin{align}
m\mapsto{\rm tr}[E_m\Pi(\mathcal{N})(\rho_m)]\quad\forall\,m.
\end{align}
We call this a {\em $\Theta$-assisted scenario}.
In the literature, a commonly used measure of a channel's performance in a communication task is the so-called capacity.
Now, we can define this measure for a $\Theta$-assisted scenario, which quantifies $\mathcal{N}$'s ability to transmit (classical) information with the assistance of superchannels from $\Theta$ (since now, $\mathbb{N}$ denotes the set of positive integers):

\begin{adefinition}\label{AppDef:CCapacity}
{\em
{(One-Shot $\Theta$-Assisted Classical Capacity)}
For a given set of superchannels $\Theta$ and an error parameter \mbox{$0\le\epsilon\le1$}, the {\em one-shot $\Theta$-assisted classical capacity subject to an error $\epsilon$} of a given channel $\mathcal{N}$ is defined by
\begin{eqnarray}
\begin{aligned}
C_{\Theta,(1)}^\epsilon(\mathcal{N})\coloneqq\max_{\substack{M,\Pi,\\\{\rho_m\},\{E_m\}}}&\log_2M\\
{\rm s.t.}\quad&M\in\mathbb{N};\;\Pi\in\Theta;\\
&\{\rho_m\}_{m=0}^{M-1}\;\text{is a set of states};\\
&\{E_m\}_{m=0}^{M-1}\;\text{is a POVM};\\
&\sum_{m=0}^{M-1}\frac{1}{M}{\rm tr}\left[E_m\Pi(\mathcal{N})(\rho_m)\right]\ge1-\epsilon.
\end{aligned}
\end{eqnarray}
}
\end{adefinition}
As mentioned in Ref.~\cite{Wang2013}, one-shot classical capacity provides richer knowledge than the {\em asymptotic} classical capacity. This is because applying the multi-copy limit can reproduce the asymptotic result (more details in Sec.~\ref{App:AsymptoticLimit}).
This explains our motivation to use $C_{\Theta,(1)}$ as the figure-of-merit to measure a channel's ability to transmit information.

The classical communication tasks mentioned above have an alternative formulation, which is crucial for this work.
To introduce it, we need to make sense of a channel's ``classical version.''
First, from now on, we will use the symbol $M$ (and, in some limited cases, $L$) to denote the size of classical messages. 
It can be given by the dimension of a system spanned by a pre-defined orthonormal basis $\{\ket{m}\}_{m=0}^{M-1}$.
Any state diagonal in this given basis, which is of the form $\sum_{m=0}^{M-1}p_m\proj{m}$, can be equivalently described by classical probability distributions $\{p_m\}_{m=0}^{M-1}$.
Due to this reason, we call such a system a {\em classical system}.
For a classical system with dimension $M$, we again use the symbol ``$M$'' to denote the system.
Note that when we say a system $M$ is ``classical,'' it is understood that an orthonormal basis $\{\ket{m}\}_{m=0}^{M-1}$ has been assigned, and we mainly focus on states in this system diagonal in the given basis $\{\ket{m}\}_{m=0}^{M-1}$ (even though, ultimately, it is still physically a ``quantum'' system).
This notion helps us to define {\em classical behaviours} of quantum channels as follows:

\begin{adefinition}{\em
{(Classical-to-Quantum and Quantum-to-Classical Channels)}\label{Def:classical_ver}
${\rm CQ}_{M\to S}$ denotes the set of all {\em classical-to-quantum channels} from a classical system $M$ to a system $S$ of the form 
\begin{align}\label{AppEq:CQ}
{(\cdot)_M\mapsto}\sum_{m=0}^{M-1}\rho_{m|S}\bra{m}(\cdot)_M\ket{m}_M,
\end{align} 
where $\{\rho_{m|S}\}_{m=0}^{M-1}$ are states in $S$.

${\rm QC}_{S \to M}$ denotes the set of all {\em quantum-to-classical channels} from a system $S$ to a classical system $M$ of the form 
\begin{align}\label{AppEq:QC}
{(\cdot)_S\mapsto}\sum_{m=0}^{M-1}\proj{m}_M{\rm tr}[E_{m|S}(\cdot)_S],
\end{align} 
where $\{E_{m|S}\}_{m=0}^{M-1}$ is a POVM in $S$.
}
\end{adefinition}
Finally, a channel is called a {\em classical-to-classical channel} in a classical system $M$, or simply a {\em $M$-to-$M$ classical channel}, if it can be written as $\mathcal{L}\circ\mathcal{K}$ for some $\mathcal{K}\in{\rm CQ}_{M\to S}$ and $\mathcal{L}\in{\rm QC}_{S \to M}$.
Using Definition~\ref{Def:classical_ver}, we can now define the {\em classical versions} of the set $\Theta$:

\begin{adefinition}\label{AppDef:ClassicalVersion}
{\em
{(Classical Versions of Superchannels)}
Given a set of superchannels $\Theta$ and a classical system $M$.
The set of {\em $M$-to-$M$ classical versions of $\Theta$}, denoted by $\Theta_M$, is given by
\begin{align}
&\Theta_M\coloneqq\nonumber\\
&\left\{\mathcal{L}\circ[\Pi(\cdot)]\circ\mathcal{K}\,\middle|\,\Pi\in\Theta,\mathcal{K}\in{\rm CQ}_{M\to S_{\rm in|\Pi}},\mathcal{L}\in{\rm QC}_{S_{\rm out|\Pi} \to M}\right\}.
\end{align}
$S_{\rm in|\Pi}$ ($S_{\rm out|\Pi}$) is the input (output) space of $\Pi$'s output channel.
}
\end{adefinition}
The set $\Theta_M$ characterises all possible classical-to-classical realisations induced by $\Theta$ in a $M$-dimensional setting.
Hence, every element $\Pi_M\in\Theta_M$ maps \CY{a channel} to some $M$-to-$M$ classical channel.
\CY{For this reason, we also call the mapping $\Pi_M(\mathcal{N})$ a ($M$-to-$M$) {\em classical version of $\mathcal{N}$} ({\em assisted by $\Theta$})~\footnote{\CY{We remark that the channels \mbox{``$\mathcal{E}_{{\rm en}|M}:A\to S_{\rm in}$''} and ``$\mathcal{E}_{{\rm de}|M}:S_{\rm out}\to A$'' defined in the companion paper~\cite{Companion2} are in ${\rm CQ}_{M\to S_{\rm in}}$ and ${\rm QC}_{S_{\rm out} \to M}$.
}}.}

\CY{
Now, we can rewrite Definition~\ref{AppDef:CCapacity} as follows:}

\begin{afact}\label{AppLemma:AlternativeCCapacity}
{\em(Alternative Form of One-Shot $\Theta$-Assisted Classical Capacity)}
\CY{With} the same setting as in Definition~\ref{AppDef:CCapacity}, \CY{we have}
\CY{\begin{eqnarray}
\begin{aligned}\label{AppEq:CCapacityDef}
C_{\Theta,(1)}^\epsilon(\mathcal{N})\coloneqq\max_{M,\Pi_M}&\log_2M\\
{\rm s.t.}\quad&M\in\mathbb{N};\;\Pi_M\in\Theta_M;\\
&P_s[\Pi_M(\mathcal{N})]\ge1-\epsilon,
\end{aligned}
\end{eqnarray}}
where \CY{the average success probability reads}
\begin{align}
\CY{P_s[\Pi_M(\mathcal{N})]\coloneqq\sum_{m=0}^{M-1}\frac{1}{M}\bra{m}\Pi_M(\mathcal{N})(\proj{m})\ket{m}}.
\end{align}
\end{afact}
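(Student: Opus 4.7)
The plan is to establish the equivalence by exhibiting an explicit correspondence between the two sets of optimization variables and checking that the objective and the constraints match under this correspondence. Concretely, for any $M$, the triple $(\{\rho_m\}_{m=0}^{M-1},\Pi,\{E_m\}_{m=0}^{M-1})$ appearing in Definition~\ref{AppDef:CCapacity} will be translated into a single element $\Pi_M \in \Theta_M$ and vice versa, in such a way that the quantity $\sum_m \tfrac{1}{M}\mathrm{tr}[E_m \Pi(\mathcal{N})(\rho_m)]$ equals $P_s[\Pi_M(\mathcal{N})]$. Because the objective $\log_2 M$ only depends on the integer $M$, the two maxima coincide.

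For the ``$\leq$'' direction, given a feasible $(\{\rho_m\},\Pi,\{E_m\})$ for Definition~\ref{AppDef:CCapacity}, I would define the encoder $\mathcal{K} \in \mathrm{CQ}_{M \to S_{\rm in|\Pi}}$ by
\begin{align}
\mathcal{K}(\cdot) \coloneqq \sum_{m=0}^{M-1} \rho_m\,\langle m|(\cdot)|m\rangle,
\end{align}
and the decoder $\mathcal{L} \in \mathrm{QC}_{S_{\rm out|\Pi} \to M}$ by
\begin{align}
\mathcal{L}(\cdot) \coloneqq \sum_{m'=0}^{M-1} |m'\rangle\langle m'|\,\mathrm{tr}[E_{m'}(\cdot)].
\end{align}
These are valid CPTP maps of the required form in Definition~\ref{Def:classical_ver}, so $\Pi_M \coloneqq \mathcal{L}\circ [\Pi(\cdot)]\circ\mathcal{K}$ belongs to $\Theta_M$. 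Substituting into $P_s$ gives $\langle m|\Pi_M(\mathcal{N})(|m\rangle\langle m|)|m\rangle = \mathrm{tr}[E_m \Pi(\mathcal{N})(\rho_m)]$ by direct computation using $\mathcal{K}(|m\rangle\langle m|)=\rho_m$ and $\langle m|\mathcal{L}(\sigma)|m\rangle = \mathrm{tr}(E_m \sigma)$; hence the success-probability constraint is preserved.

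For the ``$\geq$'' direction, any $\Pi_M \in \Theta_M$ is, by Definition~\ref{AppDef:ClassicalVersion}, of the form $\mathcal{L}\circ[\Pi(\cdot)]\circ\mathcal{K}$ for some $\Pi \in \Theta$, $\mathcal{K} \in \mathrm{CQ}_{M \to S_{\rm in|\Pi}}$, and $\mathcal{L} \in \mathrm{QC}_{S_{\rm out|\Pi} \to M}$. Reading off the states $\{\rho_m\}$ from the representation \eqref{AppEq:CQ} of $\mathcal{K}$ and the POVM $\{E_m\}$ from the representation \eqref{AppEq:QC} of $\mathcal{L}$, the same computation as above shows $P_s[\Pi_M(\mathcal{N})] = \sum_m \tfrac{1}{M}\mathrm{tr}[E_m \Pi(\mathcal{N})(\rho_m)]$, so $(\{\rho_m\},\Pi,\{E_m\})$ is feasible for Definition~\ref{AppDef:CCapacity} with the same $M$. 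Both directions together yield the desired equality.

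There is no genuine obstacle here; the statement is essentially a rewriting of the encoding–decoding procedure in channel-theoretic language. The only point demanding a little care is verifying that the dimensions of the input and output systems of $\Pi$ match the domain/codomain of $\mathcal{K}$ and $\mathcal{L}$ — which is built into Definition~\ref{AppDef:ClassicalVersion} via the notation $S_{\rm in|\Pi}, S_{\rm out|\Pi}$ — and confirming that the classical-to-quantum and quantum-to-classical forms of \eqref{AppEq:CQ} and \eqref{AppEq:QC} are in one-to-one correspondence with arbitrary finite state ensembles and POVMs, respectively, which is immediate from Definition~\ref{Def:classical_ver}.
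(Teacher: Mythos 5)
Your proposal is correct and follows essentially the same route as the paper: both establish the bijective correspondence between state ensembles $\{\rho_m\}$ / POVMs $\{E_m\}$ and the classical-to-quantum / quantum-to-classical channels of Definition~\ref{Def:classical_ver}, then invoke Definition~\ref{AppDef:ClassicalVersion} to identify the resulting composition with an element of $\Theta_M$. The only difference is presentational — you spell out the two inequality directions separately, while the paper states the correspondence once and concludes directly.
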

\begin{proof}
The validity of Eq.~\eqref{AppEq:CCapacityDef} can be seen by noting that
\begin{itemize}
\item For every set of states $\{\rho_{m|S}\}_{m=0}^{M-1}$ in $S$, there exists a channel $\mathcal{K}_{M\to S}\in{\rm CQ}_{M\to S}$ given by Eq.~\eqref{AppEq:CQ} such that 
\mbox{$
\mathcal{E}(\rho_m) = \mathcal{E}\circ\mathcal{K}_{M\to S}(\proj{m})
$}
for every index $m$ and channel $\mathcal{E}$.
\item For every POVM $\{E_{m|S}\}_{m=0}^{M-1}$ in $S$, there exists a channel $\mathcal{L}_{S\to M}\in{\rm QC}_{S\to M}$ given by Eq.~\eqref{AppEq:QC} such that 
\mbox{$
{\rm tr}\left(E_{m|S}\rho\right) = \bra{m}\mathcal{L}_{S\to M}(\rho)\ket{m}
$}
for every index $m$ and state $\rho$.
\end{itemize}
\CY{Hence,} from Definition~\ref{AppDef:CCapacity}, \CY{we have} ($S_{\rm in|\Pi}$ \CY{and} $S_{\rm out|\Pi}$ \CY{are} the input and output spaces of \CY{$\Pi$'s} output channel, \CY{respectively})
\CY{
\begin{eqnarray}
\begin{aligned}\label{Eq:useful formula C Theta}
C_{\Theta,(1)}^\epsilon(\mathcal{N})\coloneqq\max_{M,\Pi,\mathcal{K},\mathcal{L}}&\log_2M\\
{\rm s.t.}\quad&M\in\mathbb{N};\;\Pi\in\Theta;\\
&\mathcal{K}\in{\rm CQ}_{M\to S_{\rm in|\Pi}};\;\mathcal{L}\in{\rm QC}_{S_{\rm out|\Pi}\to M};\\
&P_s[\mathcal{L}\circ\Pi(\mathcal{N})\circ\mathcal{K}]\ge1-\epsilon,
\end{aligned}
\end{eqnarray}}
and the claim follows by using Definition~\ref{AppDef:ClassicalVersion}.
\end{proof}
\CY{An important example is when we allow {\em no additional assistance}; namely, $\Theta = \Theta_{\rm C}$ as given in Eq.~\eqref{Eq:standard C}.
Then,} we have $C_{\Theta_{\rm C},(1)}^\epsilon = C_{(1)}^\epsilon$, which is the (standard) one-shot classical capacity~\cite{Wang2013}.
\CY{It measures the channel's ``primal''} ability to transmit classical information in the one-shot regime.
\CY{Hence, intuitively,} every $\Theta$-assisted scenario should be linked to this fundamental case.
\CY{Indeed, using Fact~\ref{AppLemma:AlternativeCCapacity} and Eq.~\eqref{Eq:useful formula C Theta}, we have}
\begin{widetext}
\begin{align}
\sup_{\Pi\in\Theta}C_{(1)}^\epsilon[\Pi(\mathcal{N})]&\coloneqq\sup_{\Pi\in\Theta}\max\left\{\log_2M\,\middle|\,\exists\,\mathcal{K}\in{\rm CQ}_{M\to S_{\rm in|\Pi}},\mathcal{L}\in{\rm QC}_{S_{\rm out|\Pi}\to M}\;{\rm s.t.}\;P_s[\mathcal{L}\circ\Pi(\mathcal{N})\circ\mathcal{K}]\ge1-\epsilon\right\}\nonumber\\
&=\max\{\log_2M\,|\,\exists\,\Pi_M\in\Theta_M\;{\rm s.t.}\;P_s[\Pi_M(\mathcal{N})]\ge1-\epsilon\}=C_{\Theta,(1)}^\epsilon(\mathcal{N}).
\end{align}
\end{widetext}
\CY{We thus obtain the following useful fact:}
\begin{afact}\label{AppLemma:C-CTheta}
{\em(Standard Classical Capacity Representation of $\Theta$-Assisted Capacity)}
For a given set of superchannels $\Theta$, a channel $\mathcal{N}$, and an error $0\le\epsilon\le1$, we have that
\begin{align}
\CY{C_{\Theta,(1)}^\epsilon(\mathcal{N}) = \sup_{\Pi\in\Theta}C_{(1)}^\epsilon[\Pi(\mathcal{N})].}
\end{align}
\end{afact}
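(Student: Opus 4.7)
The plan is to derive the identity by rewriting both sides using the alternative formulation of the one-shot capacity established in Fact~\ref{AppLemma:AlternativeCCapacity}, and then matching the optimization variables on the two sides by invoking Definition~\ref{AppDef:ClassicalVersion}.

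First, I would apply Fact~\ref{AppLemma:AlternativeCCapacity} to the right-hand side. For each fixed $\Pi\in\Theta$, the standard one-shot classical capacity of the channel $\Pi(\mathcal{N})$ is the special case $\Theta=\Theta_{\rm C}$ of the general definition; by the reasoning leading to Eq.~\eqref{Eq:useful formula C Theta}, it equals the maximum of $\log_2 M$ over $M\in\mathbb{N}$, encoders $\mathcal{K}\in{\rm CQ}_{M\to S_{\rm in|\Pi}}$, and decoders $\mathcal{L}\in{\rm QC}_{S_{\rm out|\Pi}\to M}$, subject to $P_s[\mathcal{L}\circ\Pi(\mathcal{N})\circ\mathcal{K}]\ge 1-\epsilon$.

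Next, I would take the supremum over $\Pi\in\Theta$. Since $\Pi$, $\mathcal{K}$, and $\mathcal{L}$ are independent choices, the outer supremum over $\Pi$ and the inner maximum over $(M,\mathcal{K},\mathcal{L})$ combine into a joint optimization over all four variables simultaneously. By Definition~\ref{AppDef:ClassicalVersion}, the set of channels of the form $\mathcal{L}\circ\Pi(\cdot)\circ\mathcal{K}$, as $\Pi$, $\mathcal{K}$, $\mathcal{L}$ range over $\Theta$, ${\rm CQ}_{M\to S_{\rm in|\Pi}}$, and ${\rm QC}_{S_{\rm out|\Pi}\to M}$ respectively, is exactly $\Theta_M$. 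Hence the joint optimization collapses to a maximization over $(M,\Pi_M)$ with $\Pi_M\in\Theta_M$, subject to $P_s[\Pi_M(\mathcal{N})]\ge 1-\epsilon$, which by Fact~\ref{AppLemma:AlternativeCCapacity} is precisely $C_{\Theta,(1)}^\epsilon(\mathcal{N})$.

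There is no substantive obstacle: the statement is essentially a bookkeeping identity that expresses the fact that assisting $\mathcal{N}$ by a superchannel from $\Theta$ before running the standard encode-transmit-decode protocol is the same as encoding and decoding through a classical version of $\mathcal{N}$ as defined in Definition~\ref{AppDef:ClassicalVersion}. The only mildly delicate point is keeping track of the input and output spaces $S_{\rm in|\Pi}$, $S_{\rm out|\Pi}$ when $\Pi$ varies over $\Theta$, which is handled automatically by the definition of $\Theta_M$.
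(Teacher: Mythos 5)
Your proposal is correct and takes essentially the same route as the paper: the paper likewise expands $C_{(1)}^\epsilon[\Pi(\mathcal{N})]$ via Fact~\ref{AppLemma:AlternativeCCapacity} and Eq.~\eqref{Eq:useful formula C Theta}, merges the supremum over $\Pi$ with the maximisation over $(M,\mathcal{K},\mathcal{L})$, and identifies the resulting joint optimisation with the one over $\Pi_M\in\Theta_M$ via Definition~\ref{AppDef:ClassicalVersion}. No gaps.
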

\CY{We thus have a thorough mathematical framework to analyse and quantify a channel's ability to transmit information.}

\CY{\subsection{Thermodynamics and Informational Non-equilibrium}\label{Sec:Thermo QRT}}

\CY{
As we aim to bridge communication and thermodynamics, we briefly review key ingredients from the resource-theoretic approach to thermodynamics (see, e.g., Ref.~\cite{Lostaglio2019} for a pedagogical review).
To start with, consider a given $d$-dimensional system (with $d<\infty$).
A {\em quantum resource theory}, or simply {\em resource theory}, is a pair of sets $(\mathcal{F}_R,\mathcal{O}_R)$, where ``$R$'' denotes the given resource. $\mathcal{F}_R$ is called the set of {\em free states}, which contains all states that do not possess the given resource $R$.
$\mathcal{O}_R$ is the set of {\em allowed operations} (also known as {\em free operations} and {\em available operations}), which are physically allowed ways to manipulate the given resource $R$.
Crucially, when an allowed operation is a channel, a well-accepted necessary condition is that it {\em cannot generate $R$ from any free state}; namely,
\begin{align}\label{Eq:golden rule}
\mathcal{E}(\eta)\in\mathcal{F}_R\quad\forall\,\eta\in\mathcal{F}_R\;\&\;\mathcal{E}\in\mathcal{O}_R.
\end{align}
That is, such channels cannot generate useful resources from nothing.
This is sometimes called the {\em golden rule} of quantum resource theories~\cite{ChitambarRMP2019}.
For instance, when we set $R$ as entanglement, then $\mathcal{F}_R$ is the set of all separable states, and a valid option of $\mathcal{O}_R$ is the set of all local operations and classical communication channels, which satisfies Eq.~\eqref{Eq:golden rule}.
We refer the reader to Ref.~\cite{ChitambarRMP2019} for a general review.
Here, we focus on its application to thermodynamics---when we set $R$ as the {\em status of non-equilibrium}, also known as {\em athermality}.}

\CY{
To mathematically describe athermality, we first need to know how to describe the thermal equilibrium state, termed thermal state.
Formally, with a given background temperature $T$ and system Hamiltonian $H$, the {\em thermal state} is defined by
\begin{align}\label{Eq: thermal state}
\gamma_H \coloneqq \frac{e^{-H/k_BT}}{{\rm tr}\left(e^{-H/k_BT}\right)}, 
\end{align}
where $k_B$ is the Boltzmann constant. 
It describes a system in thermal equilibrium according to the Boltzmann distribution.
In this work, we always consider a fixed temperature; hence, we only keep the Hamiltonian dependence explicit.
Now, since every state that is not thermal contains certain non-equilibrium effect, the resource theory of thermodynamics has a unique free state $\gamma_H$, which is the {\em only} member in the set $\mathcal{F}_R$.
Hence, Eq.~\eqref{Eq:golden rule} now takes a simple form, which defines the so-called {\em Gibbs-preserving channels}---channels keeping thermal equilibrium untouched, a common type of allowed operations for thermodynamics:
\begin{align}
\mathcal{E}(\gamma_H) = \gamma_H.
\end{align}
Notably, this is only a thermodynamic constraint for processing information, while manipulating Hamiltonians is also crucial in thermodynamics.
This type of physical manipulations will be addressed in Sec.~\ref{App:Proof-MainResult-WorkExtraction}. 
For now, we focus on the thermodynamic effects of information processing.

A special case of athermality is when $H=0$. 
Namely, we turn off the energy differences in $H$ to isolate the informational contribution to thermodynamics.
In this case, we have $\gamma_{H=0} = \id/d$, and any state that is not thermal is in the so-called {\em informational non-equilibrium}~\cite{Gour2015,Stratton2023,Hsieh2024-3}.
This notion of non-equilibrium is of particular importance for this work since it characterises how information influences thermodynamics.
One example is the following observation:}\\

\begin{afact}\label{Fact:A1}
{\em(Preserving the Gibbs-Preserving Property)}
\CY{Consider} a set of superchannels $\Theta$, a channel $\mathcal{N}$, and an error \mbox{$0\le\epsilon\le1$}.
Then, for every $\Pi_M\in\Theta_M$ satisfying \mbox{$P_s[\Pi_M(\mathcal{N})]\ge1-\epsilon$}, we have that 
\begin{align}
\CY{\norm{\Pi_M(\mathcal{N})\left(\frac{\id_M}{M}\right) - \frac{\id_M}{M}}_1\le2\epsilon,}
\end{align} 
\CY{where, for an operator $A$, its trace norm is defined as~\cite{QIC-book}}
\begin{align}\label{Eq:trace norm}
\CY{\norm{A}_1\coloneqq{\rm tr}\left(\sqrt{A^\dagger A}\right).}
\end{align}
\end{afact}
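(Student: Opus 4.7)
The plan is to exploit the fact that $\Pi_M(\mathcal{N})$ is, by construction, a classical-to-classical channel on the $M$-dimensional classical system spanned by $\{\ket{m}\}_{m=0}^{M-1}$. Concretely, since $\Pi_M(\mathcal{N})=\mathcal{L}\circ\Pi(\mathcal{N})\circ\mathcal{K}$ with $\mathcal{K}\in{\rm CQ}_{M\to S_{\rm in|\Pi}}$ and $\mathcal{L}\in{\rm QC}_{S_{\rm out|\Pi}\to M}$, its action is fully characterised by a column-stochastic transition matrix
\[
P(m'|m)\coloneqq\bra{m'}\Pi_M(\mathcal{N})(\proj{m})\ket{m'},
\]
with $P(m'|m)\ge 0$ and $\sum_{m'}P(m'|m)=1$ for all $m$. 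Hence both $\Pi_M(\mathcal{N})(\id_M/M)$ and $\id_M/M$ are diagonal in $\{\ket{m'}\}$, so the trace norm in Eq.~\eqref{Eq:trace norm} collapses to the $\ell_1$-distance between the corresponding probability vectors.

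Next I would compute the output distribution explicitly: linearity gives
\[
\Pi_M(\mathcal{N})\!\left(\tfrac{\id_M}{M}\right)=\sum_{m'}q_{m'}\proj{m'},\qquad q_{m'}=\tfrac{1}{M}\sum_m P(m'|m),
\]
so the quantity to bound becomes $\sum_{m'}\bigl|q_{m'}-\tfrac{1}{M}\bigr|=\tfrac{1}{M}\sum_{m'}\bigl|\sum_m P(m'|m)-1\bigr|$. Separating the diagonal and off-diagonal contributions I would write
\[
\sum_m P(m'|m)-1=-\bigl(1-P(m'|m')\bigr)+\sum_{m\neq m'}P(m'|m),
\]
and apply the triangle inequality termwise.

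The key step is then to translate the success-probability hypothesis into bounds on these two nonnegative quantities. From $P_s[\Pi_M(\mathcal{N})]\ge 1-\epsilon$ we have $\sum_m P(m|m)\ge M(1-\epsilon)$, hence
\[
\sum_{m'}\bigl(1-P(m'|m')\bigr)\le M\epsilon.
\]
Column-stochasticity gives $\sum_{m'}\sum_m P(m'|m)=M$, and subtracting the diagonal sum yields exactly
\[
\sum_{m'}\sum_{m\neq m'}P(m'|m)=M-\sum_m P(m|m)\le M\epsilon.
\]
Combining the two bounds through the triangle inequality produces $\sum_{m'}\bigl|\sum_m P(m'|m)-1\bigr|\le 2M\epsilon$, and dividing by $M$ delivers the claimed $\le 2\epsilon$.

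The only conceptual subtlety, rather than a real obstacle, is the bookkeeping that ensures the \emph{diagonal deficit} and the \emph{total off-diagonal mass} of the stochastic matrix are controlled by the \emph{same} quantity $M\epsilon$; once this symmetry is noticed, the factor of $2$ in the statement is transparent and the argument is essentially a one-line calculation on probability vectors.
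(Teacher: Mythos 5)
Your proposal is correct and follows essentially the same route as the paper's own proof: both reduce the trace norm to an $\ell_1$-distance of diagonal probability vectors, split $\sum_m P(m'|m)-1$ into the diagonal deficit and the off-diagonal mass, and bound each by $M\epsilon$ (equivalently, by $M(1-P_s[\Pi_M(\mathcal{N})])$) using the triangle inequality and stochasticity. The only cosmetic difference is that the paper keeps the exact identity $2(1-P_s[\Pi_M(\mathcal{N})])$ before invoking the hypothesis, whereas you bound the two pieces separately; the substance is identical.
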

In other words, when the system Hamiltonian is fully degenerate \CY{and $\Pi_M$ is ``good enough'' in communication in the sense that \mbox{$P_s[\Pi_M(\mathcal{N})]\ge1-\epsilon$}, then} $\Pi_M$ maps the channel $\mathcal{N}$ to some output channel that is \CY{``almost''} Gibbs-preserving.
\begin{proof}
For such a $\Pi_M\in\Theta_M$, direct computation shows that
\begin{widetext}
\begin{align}
\norm{\Pi_M(\mathcal{N})\left(\frac{\id_{M}}{M}\right) - \frac{\id_{M}}{M}}_1&=\norm{\frac{1}{M}\sum_{n=0}^{M-1}\left[\sum_{m=0}^{M-1}\bra{n}\Pi_M(\mathcal{N})(\proj{m})\ket{n} - 1\right]\proj{n}}_1\nonumber\\
&=\norm{\frac{1}{M}\sum_{n=0}^{M-1}\left[\left(\bra{n}\Pi_M(\mathcal{N})(\proj{n})\ket{n} - 1\right) + \sum_{\substack{m=0\\m\neq n}}^{M-1}\bra{n}\Pi_M(\mathcal{N})(\proj{m})\ket{n}\right]\proj{n}}_1\nonumber\\
&=\frac{1}{M}\sum_{n=0}^{M-1}\left|\left(\bra{n}\Pi_M(\mathcal{N})(\proj{n})\ket{n} - 1\right) + \sum_{\substack{m=0\\m\neq n}}^{M-1}\bra{n}\Pi_M(\mathcal{N})(\proj{m})\ket{n}\right|\nonumber\\
&\le\frac{1}{M}\sum_{n=0}^{M-1}\left[\left|\bra{n}\Pi_M(\mathcal{N})(\proj{n})\ket{n} - 1\right| + \sum_{\substack{m=0\\m\neq n}}^{M-1}\bra{n}\Pi_M(\mathcal{N})(\proj{m})\ket{n}\right]\nonumber\\
&=2\left(1 - \frac{1}{M}\sum_{n=0}^{M-1}\bra{n}\Pi_M(\mathcal{N})(\proj{n})\ket{n}\right) = 2\left(1-P_s[\Pi_M(\mathcal{N})]\right)\le2\epsilon.
\end{align}
\end{widetext}
In the first line, we have used the fact that the output of $\Pi_M(\mathcal{N})$ is diagonal in the basis $\{\ket{m}\}_{m=0}^{M-1}$.
\end{proof}

\CY{With the framework of communication formalised in this section, we can formally prove this work's first main result, which are entropic bounds on the one-shot $\Theta$-assisted classical capacities, as detailed in the following section.}

\CY{
\section{Entropic Bounds on Classical Capacities
}\label{Sec:Entropic bound on CC}
}

\CY{
\subsection{Bounding One-Shot $\Theta$-Assisted Classical Capacities}
}
\CY{This section aims to} show upper and lower bounds on \CY{the one-shot $\Theta$-assisted} classical \CY{capacities via two closely related entropic quantities, as defined below.} 
The first one is the {\em $\epsilon$-smoothed relative R\'enyi $0$-entropy} (see Supplementary Definition 6 in Ref.~\cite{Aberg2013}) defined for two commuting states $\eta=\sum_jq_j\proj{j},\CY{\xi}=\sum_jr_j\proj{j}$ as 
\begin{align}\label{AppEq:D0epsilon}
D_0^\CY{\epsilon}(\eta\,\|\,\CY{\xi})\coloneqq\max_{\Lambda:\sum_{j\in\Lambda}q_j>1-\CY{\epsilon}}\log_2\frac{1}{\sum_{j\in\Lambda}r_j},
\end{align}
where the maximisation is taken over every possible index set $\Lambda$ satisfying the strict inequality $\sum_{j\in\Lambda}q_j>1-\CY{\epsilon}$.
\CY{In Appendix~\ref{App}, we discuss and prove several mathematical properties of $D_0^\epsilon$, including its relation with the so-called min-relative entropy~\cite{Datta2013}, data-processing inequality, smoothness and continuity.}
This entropy can be extended to the {\em hypothesis testing relative entropy} with error $\epsilon$, which is defined as follows for two states $\rho,\sigma$~\cite{Wang2013}:
\begin{align}\label{AppEq:HypothesisTesting}
D_h^\epsilon(\rho\,\|\,\sigma)\coloneqq\max_{\substack{0\le Q\le\id\\{\rm tr}(Q\rho)\ge1-\epsilon}}\log_2\frac{1}{{\rm tr}(Q\sigma)}.
\end{align}
As a direct observation \CY{from} Eq.~\eqref{AppEq:D0epsilon}, one can see that 
\begin{align}\label{Eq:useful D_0 D_h}
\CY{D_0^\epsilon(\rho\,\|\,\sigma)\le D_h^\epsilon(\rho\,\|\,\sigma)}
\end{align} 
whenever they are both well-defined.
%
%
Now, we \CY{present this section's} main result.
In what follows, \CY{the symbol $M'$ denotes an auxiliary} classical system \CY{with the same dimension as $M$.}

\begin{widetext}

\begin{atheorem}\label{AppThm:MainResult} {\em(Bounding One-Shot $\Theta$-Assisted Classical Capacity)}
\CY{For} a set of superchannels $\Theta$, a channel $\mathcal{N}$, and errors $0<\delta\le\omega<\epsilon\le\CY{1/2}$, we have that 
\begin{align}\label{Eq:AppThm:MainResult}
&\sup_{\substack{M\in\mathbb{N},\Pi\in\Theta\\\eta_{S_{\rm in|\Pi}S'}=\sum_{x=0}^{M-1}p_x\sigma_{x|S_{\rm in|\Pi}}\otimes\kappa_{x|S'}}}D_h^{\omega}\left[(\Pi(\mathcal{N})\otimes\mathcal{I}_{S'})(\eta_{S_{\rm in|\Pi}S'})\,\middle\|\,\Pi(\mathcal{N})\left(\eta_{S_{\rm in|\Pi}}\right)\otimes\eta_{S'}\right]-\log_2\frac{4\epsilon}{(\epsilon - \omega)^2}\le\nonumber\\
&\quad\quad\quad\quad\quad\quad  C_{\Theta,(1)}^\epsilon(\mathcal{N})\le\sup_{\substack{M\in\mathbb{N},\Pi_M\in\Theta_M\\\norm{\Pi_M(\mathcal{N})\left(\frac{\id_M}{M}\right) - \frac{\id_M}{M}}_1\le2(\epsilon+\delta)}}D_0^{\epsilon+\delta}\left[(\Pi_M(\mathcal{N})\otimes\mathcal{I}_{M'})(\Phi_{MM'})\,\middle\|\,\Pi_M(\mathcal{N})\left(\frac{\id_{M}}{M}\right)\otimes\frac{\id_{M'}}{M}\right],
\end{align}
where $S'$ is some finite-dimensional \CY{auxiliary} system which is not necessarily of the same size with $S_{\rm in|\Pi}$, $\eta_{S_{\rm in|\Pi}S'}$ is a separable state in $S_{\rm in|\Pi}S'$ that can be written as a convex combination of $M$ product states (i.e., both $\sigma_{x|S_{\rm in|\Pi}}$, $\kappa_{x|S'}$ are states), and
\begin{align}\label{Eq: max corr state}
\CY{\Phi_{MM'}\coloneqq\frac{1}{M}\sum_{m=0}^{M-1}\proj{m}_{M}\otimes\proj{m}_{M'}}
\end{align} 
is the maximally (classically) correlated state diagonal in the bipartite basis associated with the classical systems $M,M'$.
\end{atheorem}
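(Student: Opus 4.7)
The plan is to prove the upper and lower bounds independently, since they rest on different machinery. For the upper bound, I would first use Fact~\ref{AppLemma:AlternativeCCapacity} to recast $C_{\Theta,(1)}^\epsilon(\mathcal{N})$ as an optimization over $\Pi_M \in \Theta_M$ satisfying $P_s[\Pi_M(\mathcal{N})] \geq 1-\epsilon$. Fact~\ref{Fact:A1} then immediately implies that every such feasible $\Pi_M$ obeys $\norm{\Pi_M(\mathcal{N})(\id_M/M) - \id_M/M}_1 \leq 2\epsilon \leq 2(\epsilon+\delta)$, so the right-hand supremum of Eq.~\eqref{Eq:AppThm:MainResult} may legitimately be restricted to such $\Pi_M$. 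The key step is then to upper-bound $\log_2 M$ for each feasible $\Pi_M$ by exhibiting a specific index set in the definition of $D_0^{\epsilon+\delta}$ from Eq.~\eqref{AppEq:D0epsilon}: take $\Lambda = \{(m,m) : 0 \leq m \leq M-1\}$ in the product basis of $MM'$. Since the output of $\Pi_M(\mathcal{N})$ is diagonal in $\{\ket{m}\}$, the probability mass placed on $\Lambda$ by $(\Pi_M(\mathcal{N})\otimes\mathcal{I}_{M'})(\Phi_{MM'})$ equals $P_s[\Pi_M(\mathcal{N})] \geq 1-\epsilon$, which strictly exceeds $1-(\epsilon+\delta)$ because $\delta > 0$, satisfying the strict inequality in Eq.~\eqref{AppEq:D0epsilon}. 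Meanwhile, the mass placed on $\Lambda$ by the reference state $\Pi_M(\mathcal{N})(\id_M/M)\otimes\id_{M'}/M$ equals $1/M$, since $\Pi_M(\mathcal{N})(\id_M/M)$ is diagonal with unit trace. Hence $\log_2 M$ is a feasible value in the maximization defining $D_0^{\epsilon+\delta}$, and taking the supremum over $M$ and $\Pi_M$ yields the upper bound.

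For the lower bound, Fact~\ref{AppLemma:C-CTheta} reduces the problem to lower-bounding the standard one-shot classical capacity $C_{(1)}^\epsilon[\Pi(\mathcal{N})]$ for each $\Pi \in \Theta$ and then taking the supremum over $\Pi$. For this I would invoke the one-shot Holevo--Schumacher--Westmoreland-type bound of Wang and Renner~\cite{Wang2013}, which asserts that for any channel $\mathcal{M}$ and any classical--quantum state $\tilde\eta_{SS'} = \sum_x p_x \sigma_x \otimes \ket{x}\bra{x}$,
\[
C_{(1)}^\epsilon(\mathcal{M}) \geq D_h^\omega\bigl[(\mathcal{M}\otimes\mathcal{I}_{S'})(\tilde\eta)\,\|\,\mathcal{M}(\tilde\eta_S)\otimes\tilde\eta_{S'}\bigr] - \log_2\frac{4\epsilon}{(\epsilon-\omega)^2}.
\]
To cover the theorem's more general separable inputs $\eta = \sum_x p_x \sigma_x \otimes \kappa_x$, I would note that $\eta$ arises from the corresponding $\tilde\eta$ by applying the classical-to-quantum channel $\ket{x}\bra{x} \mapsto \kappa_x$ on $S'$; data processing for $D_h^\omega$ then gives $D_h^\omega[\text{separable}] \leq D_h^\omega[\text{cq}]$, so the cq form of Wang--Renner automatically implies the separable form. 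Setting $\mathcal{M} = \Pi(\mathcal{N})$ and taking the supremum over $\Pi \in \Theta$ and admissible $\eta$ produces the claimed lower bound.

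The main technical obstacle is in the upper bound: enforcing the \emph{strict} inequality $\sum_{j\in\Lambda} q_j > 1-(\epsilon+\delta)$ demanded by the definition of $D_0^{\epsilon+\delta}$, rather than the weak $P_s \geq 1-\epsilon$ provided by the capacity definition; the auxiliary slack parameter $\delta > 0$ is introduced precisely to bridge this gap, which also explains why the smoothing in the right-hand supremum is $\epsilon + \delta$ instead of $\epsilon$. A secondary, more routine point is to verify (or harvest from the appendix treatment of $D_0^\epsilon$ and $D_h^\epsilon$ based on Refs.~\cite{Aberg2013,Wang2013}) the data-processing property used in reducing the general-separable lower bound to the classical--quantum Wang--Renner bound.
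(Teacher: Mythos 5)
Your proposal is correct, and its two halves sit differently relative to the paper. The upper bound is essentially the paper's own argument: the paper likewise passes through Fact~\ref{AppLemma:AlternativeCCapacity} and Fact~\ref{Fact:A1}, rewrites the success-probability constraint as ${\rm tr}[(M\Phi_{MM'})(\Pi_M(\mathcal{N})\otimes\mathcal{I}_{M'})(\Phi_{MM'})]\ge1-\epsilon$ and the reference overlap as $1/M$, and then recognises the diagonal index set $\Lambda=\{(m,m)\}_{m=0}^{M-1}$ as feasible for $D_0^{\epsilon+\delta}$ exactly as you do; your point about the strict inequality in Eq.~\eqref{AppEq:D0epsilon} being absorbed by $\delta>0$ is also the paper's. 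The lower bound is where you genuinely diverge: the paper does \emph{not} cite the achievability bound of Ref.~\cite{Wang2013} as a black box but re-derives it via a random-coding argument with the Hayashi--Nagaoka inequality (Lemma~\ref{AppLemma:HayashiLemma}), and it does so \emph{directly} for separable inputs $\eta=\sum_xp_x\sigma_{x}\otimes\kappa_{x}$ by building the decoder from $A_x={\rm tr}_{S'}[(\id_{S_{\rm out}}\otimes\kappa_{x|S'})Q]$; the optimisation of the Hayashi--Nagaoka parameter at $c=(\epsilon-\omega)/(\epsilon+\omega)$ is what produces the constant $4\epsilon/(\epsilon-\omega)^2$. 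Your alternative---invoke the classical--quantum form and reduce the separable case to it by data processing of $D_h^\omega$ under the channel $\proj{x}\mapsto\kappa_x$ acting on $S'$---is valid, and it makes explicit something the paper leaves implicit: the supremum over separable inputs equals the supremum over cq inputs, so the apparent generalisation costs nothing. What your route buys is brevity; what it costs is that the entire quantitative content of the lower bound, including the precise one-shot correction $\log_2[4\epsilon/(\epsilon-\omega)^2]$, rests on the cited result being available in exactly that form, whereas a self-contained treatment (which is what the paper opts for) still needs the random-coding derivation.
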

\CY{Note that} 
$[\Pi_M(\mathcal{N})\otimes\mathcal{I}_{M'}](\Phi_{MM'})$ and $\Pi_M(\mathcal{N})\left(\frac{\id_M}{M}\right)\otimes\frac{\id_{M'}}{M}$ are indeed simultaneously diagonalisable, \CY{and} the smoothed relative R\'enyi $0$-entropy is well-defined here.
Importantly, \CY{in Eq.~\eqref{Eq:AppThm:MainResult},} the upper bound in the second line is further {\em upper bounded by} the lower bound in the first line, up to the one-shot error term $-\log_2\CY{[4\epsilon/(\epsilon - \omega)^2]}$.
\CY{Hence,} both the upper and lower bounds will converge to the same quantity in the asymptotic limit \CY{(as detailed in Sec.~\ref{App:AsymptoticLimit})}.

As the first observation, suppose \CY{Theorem~\ref{AppThm:MainResult}} holds for the one-shot (standard) classical capacity $C_{(1)}$. 
Then Fact~\ref{AppLemma:C-CTheta} \CY{implies}
\begin{align}
C_{\Theta,(1)}^\epsilon(\mathcal{N})& = \sup_{\Pi\in\Theta}C_{(1)}^\epsilon[\Pi(\mathcal{N})]\nonumber\\
&\le\sup_{\Pi\in\Theta}\sup_{\substack{M\in\mathbb{N}\\\mathcal{K}\in{\rm CQ}_{M\to S_{\rm in|\Pi}},\\\mathcal{L}\in{\rm QC}_{S_{\rm out|\Pi}\to M},\\\norm{\mathcal{L}\circ\Pi(\mathcal{N})\circ\mathcal{K}\left(\frac{\id_M}{M}\right) - \frac{\id_M}{M}}_1\le2(\epsilon+\delta)}}D_0^{\epsilon+\delta}\left[((\mathcal{L}\circ\Pi(\mathcal{N})\circ\mathcal{K})\otimes\mathcal{I}_{M'})(\Phi_{MM'})\,\middle\|\,(\mathcal{L}\circ\Pi(\mathcal{N})\circ\mathcal{K})\left(\frac{\id_{M}}{M}\right)\otimes\frac{\id_{M'}}{M}\right]\nonumber\\
&\CY{=}\sup_{\substack{M\in\mathbb{N},\Pi_M\in\Theta_M\\\norm{\Pi_M(\mathcal{N})\left(\frac{\id_M}{M}\right) - \frac{\id_M}{M}}_1\le2(\epsilon+\delta)}}D_0^{\epsilon+\delta}\left[(\Pi_M(\mathcal{N})\otimes\mathcal{I}_{M'})(\Phi_{MM'})\,\middle\|\,\Pi_M(\mathcal{N})\left(\frac{\id_{M}}{M}\right)\otimes\frac{\id_{M'}}{M}\right].
\end{align}
On the other hand, \CY{utilising Fact~\ref{AppLemma:C-CTheta} again, Theorem~\ref{AppThm:MainResult}'s} lower bound can be rewritten as
\begin{align}
\sup_{\Pi\in\Theta}\sup_{M\in\mathbb{N},\eta_{S_{\rm in|\Pi}S'}}D_h^{\omega}\left[(\Pi(\mathcal{N})\otimes\mathcal{I}_{S'})(\eta_{S_{\rm in|\Pi}S'})\,\middle\|\,\Pi(\mathcal{N})\left(\eta_{S_{\rm in|\Pi}}\right)\otimes\eta_{S'}\right]-\log_2\frac{4\epsilon}{(\epsilon - \omega)^2}\le \sup_{\Pi\in\Theta}C_{(1)}^\epsilon[\Pi(\mathcal{N})] = C_{\Theta,(1)}^\epsilon(\mathcal{N}),
\end{align}
Hence, it suffices to prove the case for the standard scenario with $C_{(1)}$.
\CY{We thus} single out this special case as the following theorem, whose validity implies Theorem~\ref{AppThm:MainResult}. In what follows, $S'$ again denotes an \CY{auxiliary} quantum \CY{system;} $S_{\rm in|\mathcal{N}}$ \CY{and} $S_{\rm out|\mathcal{N}}$ are the input and output systems of the given channel $\mathcal{N}$, respectively.

\begin{atheorem}\label{AppThm:CC} {\em(Bounding One-Shot Classical Capacity)}
Given a channel $\mathcal{N}$ and errors $0<\delta\le\omega<\epsilon\le\CY{1/2}$, we have that
\begin{align}\label{Eq:AppThm:CC}
&\sup_{\substack{M\in\mathbb{N}\\\eta_{S_{\rm in|\mathcal{N}}S'}=\sum_{x=0}^{M-1}p_x\sigma_{x|S_{\rm in|\mathcal{N}}}\otimes\kappa_{x|S'}}}D_h^{\omega}\left[(\mathcal{N}\otimes\mathcal{I}_{S'})(\eta_{S_{\rm in|\mathcal{N}}S'})\,\middle\|\,\mathcal{N}\left(\eta_{S_{\rm in|\mathcal{N}}}\right)\otimes\eta_{S'}\right]-\log_2\frac{4\epsilon}{(\epsilon - \omega)^2}\le\nonumber\\
&\quad\quad\quad\quad\quad\quad  C_{(1)}^\epsilon(\mathcal{N})\le\sup_{\substack{M\in\mathbb{N}\\\mathcal{K}\in{\rm CQ}_{M\to S_{\rm in|\mathcal{N}}}\\\mathcal{L}\in{\rm QC}_{S_{\rm out|\mathcal{N}}\to M}\\\norm{\mathcal{L}\circ\mathcal{N}\circ\mathcal{K}\left(\frac{\id_M}{M}\right) - \frac{\id_M}{M}}_1\le2(\epsilon+\delta)}}D_0^{\epsilon+\delta}\left[((\mathcal{L}\circ\mathcal{N}\circ\mathcal{K})\otimes\mathcal{I}_{M'})(\Phi_{MM'})\,\middle\|\,(\mathcal{L}\circ\mathcal{N}\circ\mathcal{K})\left(\frac{\id_{M}}{M}\right)\otimes\frac{\id_{M'}}{M}\right].
\end{align}
\end{atheorem}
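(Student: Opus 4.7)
The plan is to prove the two bounds in Eq.~\eqref{Eq:AppThm:CC} by quite different arguments: the upper bound is a direct ``code-to-test'' construction, while the lower bound is a randomised ``test-to-code'' construction of Hayashi--Nagaoka type.

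For the upper bound I would fix any triple $(M,\mathcal{K},\mathcal{L})$ with $\mathcal{K}\in{\rm CQ}_{M\to S_{\rm in|\mathcal{N}}}$, $\mathcal{L}\in{\rm QC}_{S_{\rm out|\mathcal{N}}\to M}$, and $P_s[\mathcal{L}\circ\mathcal{N}\circ\mathcal{K}]\ge 1-\epsilon$, and show it is feasible in the right-hand supremum with objective value at least $\log_2 M$. Setting $\Lambda\coloneqq \mathcal{L}\circ\mathcal{N}\circ\mathcal{K}$, the trace-norm constraint $\norm{\Lambda(\id_M/M)-\id_M/M}_1\le 2(\epsilon+\delta)$ is automatic from Fact~\ref{Fact:A1}, and both states $(\Lambda\otimes\mathcal{I}_{M'})(\Phi_{MM'})$ and $\Lambda(\id_M/M)\otimes(\id_{M'}/M)$ are diagonal in the product computational basis of $M\otimes M'$, so $D_0^{\epsilon+\delta}$ is well-defined through Eq.~\eqref{AppEq:D0epsilon}. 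I would witness the max there by the diagonal index subset $\{(m,m)\}_{m=0}^{M-1}$: a direct calculation shows its weight under the first state equals $P_s \ge 1-\epsilon > 1-(\epsilon+\delta)$ (the strict inequality required by Eq.~\eqref{AppEq:D0epsilon} is precisely why $\delta>0$ is needed) and its weight under the second state equals $1/M$, yielding $\log_2 M \le D_0^{\epsilon+\delta}$. Taking the supremum over codes delivers the right inequality of Eq.~\eqref{Eq:AppThm:CC}.

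For the lower bound I would fix $M$ and a separable state $\eta_{S_{\rm in|\mathcal{N}}S'} = \sum_{x=0}^{M-1} p_x\,\sigma_x\otimes\kappa_x$, and use position-based random coding. By Eq.~\eqref{AppEq:HypothesisTesting} there is an operator $0\le T\le\id$ on $S_{\rm out|\mathcal{N}}S'$ with ${\rm tr}[T(\mathcal{N}\otimes\mathcal{I}_{S'})(\eta)]\ge 1-\omega$ and ${\rm tr}[T(\mathcal{N}(\eta_{S_{\rm in|\mathcal{N}}})\otimes\eta_{S'})] = 2^{-D_h^\omega}$. I would then draw $K$ codewords $X_1,\ldots,X_K$ i.i.d.\ from $\{p_x\}$, transmit $\sigma_{X_J}$ when message $J$ is encoded, and for the $j$-th POVM element plug $\kappa_{X_j}$ into the $S'$-slot of $T$ on the receiver's side. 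Applying the Hayashi--Nagaoka operator inequality to normalise these candidate elements into a genuine POVM and taking expectation over the random codebook bounds the average error by a constant multiple of $\omega + K\cdot 2^{-D_h^\omega}$. Choosing $K\approx (\epsilon-\omega)^2/(4\epsilon)\cdot 2^{D_h^\omega}$ and Markov-expurgating half the codewords converts this into a code with maximum error at most $\epsilon$ and size $\lfloor K/2 \rfloor$, giving the stated lower bound after the supremum over $(M,\eta)$ is taken.

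The main obstacle is the lower bound: the reference $\kappa_x$ is genuinely quantum rather than a classical label, so the decoder cannot simply read off $x$ from the output. Running position-based decoding in this setting, verifying $\sum_j E_j \le \id$ via Hayashi--Nagaoka, and bookkeeping the numerical constants through the expectation bound and the average-to-maximum-error reduction are what produce the precise penalty $\log_2[4\epsilon/(\epsilon-\omega)^2]$; a minor but real subtlety is that Eq.~\eqref{AppEq:D0epsilon} demands a strict inequality in its constraint on $\sum_{j\in\Lambda}q_j$, which is exactly the role of the auxiliary parameter $\delta>0$ in the upper bound of Theorem~\ref{AppThm:CC}.
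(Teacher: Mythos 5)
Your proposal matches the paper's proof in essentially every respect: the upper bound is obtained exactly as you describe, by taking the diagonal index set as a feasible witness for $D_0^{\epsilon+\delta}$ (with Fact~\ref{Fact:A1} supplying the trace-norm constraint for free and $\delta>0$ absorbing the strict inequality in Eq.~\eqref{AppEq:D0epsilon}), and the lower bound is the same Hayashi--Nagaoka random-coding construction with decoder elements built from $A_x={\rm tr}_{S'}[(\id_{S_{\rm out}}\otimes\kappa_x)Q]$, optimised over the Hayashi--Nagaoka parameter $c=(\epsilon-\omega)/(\epsilon+\omega)$ to produce the penalty $\log_2[4\epsilon/(\epsilon-\omega)^2]$. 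The one deviation is your final expurgation to maximum error: since Definition~\ref{AppDef:CCapacity} only demands \emph{average} success probability at least $1-\epsilon$, that step is unnecessary, and performing it would cost a factor of two in both the codebook size and the error budget and hence spoil the exact constant---the paper simply stops at the existence of a codebook whose average failure probability is at most $\epsilon$.
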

\CY{Before the proof, let us compare Theorem~\ref{AppThm:CC} with the existing bounds 
reported in Ref.~\cite{Wang2013}, which shows that 
$\sup_{M\in\mathbb{N},\sigma_{S_{\rm in|\mathcal{N}}M}}D_h^{\omega}\left[(\mathcal{N}\otimes\mathcal{I}_{M})(\sigma_{S_{\rm in|\mathcal{N}}M})\,\middle\|\,\mathcal{N}\left(\sigma_{S_{\rm in|\mathcal{N}}}\right)\otimes\sigma_{M}\right]$ 
optimising over 
$
\sigma_{S_{\rm in|\mathcal{N}}M} = \sum_{m=0}^{M-1}p_m\rho_{m}\otimes\proj{m}_M
$
can simultaneously upper/lower bound $C_{(1)}^\epsilon(\mathcal{N})$, up to one-shot error terms.
Theorem~\ref{AppThm:CC}'s upper bound implies Ref.~\cite{Wang2013}'s upper bound due to Eq.~\eqref{Eq:useful D_0 D_h},}
\CY{and Theorem~\ref{AppThm:CC}'s lower bound also implies Ref.~\cite{Wang2013}'s lower bound.}
Theorem~\ref{AppThm:MainResult} \CY{thus} generalises \CY{Ref.~\cite{Wang2013}'s results} by extending it to an arbitrary $\Theta$-assisted scenario as well as providing \CY{tighter bounds.}

\CY{
\subsection{Proof of Theorem~\ref{AppThm:CC}}
}
\noindent\CY{\em Proof of the upper bound.}
By definition, we can write $C_{(1)}^\epsilon(\mathcal{N})$ as the following maximisation
\begin{eqnarray}
\begin{aligned}
\max_{M,\mathcal{K},\mathcal{L}}\quad&\log_2M\\
{\rm s.t.}\quad&\frac{1}{M}\sum_{m=0}^{M-1}\bra{m}(\mathcal{L}\circ\mathcal{N}\circ\mathcal{K})(\proj{m})\ket{m}\ge1-\epsilon;\quad\CY{\mathcal{K}\in{\rm CQ}_{M\to S_{\rm in|\mathcal{N}}};\quad\mathcal{L}\in{\rm QC}_{S_{\rm out|\mathcal{N}}\to M};\quad M\in\mathbb{N}.}
\end{aligned}
\end{eqnarray}
\CY{Using Fact~\ref{Fact:A1},} the above optimisation is invariant if we add \CY{the constraint} $\norm{\mathcal{L}\circ\mathcal{N}\circ\mathcal{K}\left(\frac{\id_M}{M}\right) - \frac{\id_M}{M}}_1\le2\epsilon$, which reads
\begin{eqnarray}\label{Eq:max01}
\begin{aligned}
\max_{M,\mathcal{K},\mathcal{L}}\quad&\log_2M\\
{\rm s.t.}\quad&\frac{1}{M}\sum_{m=0}^{M-1}\bra{m}(\mathcal{L}\circ\mathcal{N}\circ\mathcal{K})(\proj{m})\ket{m}\ge1-\epsilon;\\
&\norm{\mathcal{L}\circ\mathcal{N}\circ\mathcal{K}\left(\frac{\id_M}{M}\right) - \frac{\id_M}{M}}_1\le2\epsilon;\quad\CY{\mathcal{K}\in{\rm CQ}_{M\to S_{\rm in|\mathcal{N}}};\quad\mathcal{L}\in{\rm QC}_{S_{\rm out|\mathcal{N}}\to M};\quad M\in\mathbb{N}.}
\end{aligned}
\end{eqnarray}
Now, one can observe that
\begin{align}
{\rm tr}\left[\left(M\Phi_{MM'}\right)\left((\mathcal{L}\circ\mathcal{N}\circ\mathcal{K})\otimes\mathcal{I}_{M'}\right)\left(\Phi_{MM'}\right)\right] &= \frac{1}{M}{\rm tr}\left[\sum_{n=0}^{M-1}\proj{n}_{M}\otimes\proj{n}_{M'}\sum_{m=0}^{M-1}(\mathcal{L}\circ\mathcal{N}\circ\mathcal{K})(\proj{m}_{M})\otimes\proj{m}_{M'}\right]\nonumber\\
& = \frac{1}{M}\sum_{m=0}^{M-1}\bra{m}(\mathcal{L}\circ\mathcal{N}\circ\mathcal{K})(\proj{m})\ket{m}.
\end{align}
On the other hand,
\begin{align}
{\rm tr}\left\{\left(M\Phi_{MM'}\right)\left[(\mathcal{L}\circ\mathcal{N}\circ\mathcal{K})\left(\frac{\id_{M}}{M}\right)\otimes\frac{\id_{M'}}{M}\right]\right\} &= {\rm tr}\left\{\sum_{m=0}^{M-1}\proj{m}_{M}\otimes\proj{m}_{M'}\left[(\mathcal{L}\circ\mathcal{N}\circ\mathcal{K})\left(\frac{\id_{M}}{M}\right)\otimes\frac{\id_{M'}}{M}\right]\right\}\nonumber\\
&= \frac{1}{M}{\rm tr}\left\{\sum_{m=0}^{M-1}\proj{m}_{M}\left[(\mathcal{L}\circ\mathcal{N}\circ\mathcal{K})\left(\frac{\id_{M}}{M}\right)\right]\right\}=\frac{1}{M},
\end{align}
where we have used the fact that $\sum_{m=0}^{M-1}\proj{m}_{M} = \id_{M}$.
\CY{Equation~\eqref{Eq:max01} can thus be rewritten as} 
\begin{eqnarray}
\begin{aligned}\label{Eq:Computation04}
\max_{M,\mathcal{K},\mathcal{L}}\quad&\log_2\frac{1}{{\rm tr}\left\{\left(M\Phi_{MM'}\right)\left[(\mathcal{L}\circ\mathcal{N}\circ\mathcal{K})\left(\frac{\id_{M}}{M}\right)\otimes\frac{\id_{M'}}{M}\right]\right\}}\\
{\rm s.t.}\quad&{\rm tr}\left[\left(M\Phi_{MM'}\right)\left((\mathcal{L}\circ\mathcal{N}\circ\mathcal{K})\otimes\mathcal{I}_{M'}\right)\left(\Phi_{MM'}\right)\right]\ge1-\epsilon;\\
&\norm{\mathcal{L}\circ\mathcal{N}\circ\mathcal{K}\left(\frac{\id_M}{M}\right) - \frac{\id_M}{M}}_1\le2\epsilon;\quad\CY{\mathcal{K}\in{\rm CQ}_{M\to S_{\rm in|\mathcal{N}}};\quad\mathcal{L}\in{\rm QC}_{S_{\rm out|\mathcal{N}}\to M};\quad M\in\mathbb{N}.}
\end{aligned}
\end{eqnarray}
Now, based on our setting, $\mathcal{L}\circ\mathcal{N}\circ\mathcal{K}$ is a $M$-to-$M$ classical channel.
This means the states $\left((\mathcal{L}\circ\mathcal{N}\circ\mathcal{K})\otimes\mathcal{I}_{M'}\right)\left(\Phi_{MM'}\right)$ and $(\mathcal{L}\circ\mathcal{N}\circ\mathcal{K})\left(\frac{\id_{M}}{M}\right)\otimes\frac{\id_{M'}}{M}$ are both diagonal in the basis $\{\ket{n}_M\otimes\ket{m}_{M'}\}_{n,m=0}^{M-1}$.
In other words, one can write
\begin{align}
&\left((\mathcal{L}\circ\mathcal{N}\circ\mathcal{K})\otimes\mathcal{I}_{M'}\right)\left(\Phi_{MM'}\right) = \sum_{n,m}q_{(n,m)}\proj{n}_{M}\otimes\proj{m}_{M'};\\
&(\mathcal{L}\circ\mathcal{N}\circ\mathcal{K})\left(\frac{\id_{M}}{M}\right)\otimes\frac{\id_{M'}}{M} = \sum_{n,m}r_{(n,m)}\proj{n}_{M}\otimes\proj{m}_{M'},
\end{align}
where the weights $q_{(n,m)},r_{(n,m)}$ are given by
\begin{align}\label{AppEq:qnm}
&q_{(n,m)}\coloneqq\bra{n}_M\otimes\bra{m}_{M'}\left((\mathcal{L}\circ\mathcal{N}\circ\mathcal{K})\otimes\mathcal{I}_{M'}\right)\left(\Phi_{MM'}\right)\ket{n}_M\otimes\ket{m}_{M'};\\
&r_{(n,m)}\coloneqq\bra{n}_M\otimes\bra{m}_{M'}\left[(\mathcal{L}\circ\mathcal{N}\circ\mathcal{K})\left(\frac{\id_{M}}{M}\right)\otimes\frac{\id_{M'}}{M}\right]\ket{n}_M\otimes\ket{m}_{M'}.\label{AppEq:rnm}
\end{align}
Then Eq.~\eqref{Eq:Computation04} can be rewritten as
\begin{eqnarray}
\begin{aligned}
\max_{M,\mathcal{K},\mathcal{L}}\quad&\log_2\frac{1}{\sum_{n=m}r_{(n,m)}}\\
{\rm s.t.}\quad&\CY{\sum_{n=m}q_{(n,m)}\ge1-\epsilon;\quad\norm{\mathcal{L}\circ\mathcal{N}\circ\mathcal{K}\left(\frac{\id_M}{M}\right) - \frac{\id_M}{M}}_1\le2\epsilon;\quad\mathcal{K}\in{\rm CQ}_{M\to S_{\rm in|\mathcal{N}}};\quad\mathcal{L}\in{\rm QC}_{S_{\rm out|\mathcal{N}}\to M};\quad M\in\mathbb{N},}
\end{aligned}
\end{eqnarray}
which is upper bounded by
\begin{eqnarray}
\begin{aligned}\label{Eq:Computation05}
\max_{M,\mathcal{K},\mathcal{L},\Lambda}\quad&\log_2\frac{1}{\sum_{(n,m)\in\Lambda}r_{(n,m)}}\\
{\rm s.t.}\quad&\CY{\sum_{(n,m)\in\Lambda}q_{(n,m)}\ge1-\epsilon;\quad\Lambda:{\rm a\;subset\;of\;indices}\;(n,m);}\\
&\CY{\norm{\mathcal{L}\circ\mathcal{N}\circ\mathcal{K}\left(\frac{\id_M}{M}\right) - \frac{\id_M}{M}}_1\le2\epsilon;\quad\mathcal{K}\in{\rm CQ}_{M\to S_{\rm in|\mathcal{N}}};\quad\mathcal{L}\in{\rm QC}_{S_{\rm out|\mathcal{N}}\to M};\quad M\in\mathbb{N}.}
\end{aligned}
\end{eqnarray}
\CY{Using} 
Eq.~\eqref{AppEq:D0epsilon}, we can further upper bound Eq.~\eqref{Eq:Computation05} by the following for every $0<\delta<\epsilon<\CY{1/2}$:
\begin{eqnarray}
\begin{aligned}\label{Eq:Computation06}
\max_{M,\mathcal{K},\mathcal{L}}\quad&D_0^{\epsilon+\delta}\left[\left((\mathcal{L}\circ\mathcal{N}\circ\mathcal{K})\otimes\mathcal{I}_{M'}\right)\left(\Phi_{MM'}\right)\,\middle\|\,(\mathcal{L}\circ\mathcal{N}\circ\mathcal{K})\left(\frac{\id_{M}}{M}\right)\otimes\frac{\id_{M'}}{M}\right]\\
{\rm s.t.}\quad&\CY{\norm{\mathcal{L}\circ\mathcal{N}\circ\mathcal{K}\left(\frac{\id_M}{M}\right) - \frac{\id_M}{M}}_1\le2\epsilon;\quad\mathcal{K}\in{\rm CQ}_{M\to S_{\rm in|\mathcal{N}}};\quad\mathcal{L}\in{\rm QC}_{S_{\rm out|\mathcal{N}}\to M};\quad M\in\mathbb{N}.}
\end{aligned}
\end{eqnarray}
This implies the desired upper bound by extending $2\epsilon$ to $2(\epsilon+\delta)$ in the constraint.
\CY{\hfill$\square$}\\

\noindent\CY{\em Proof of the lower bound.}
The proof follows the same strategy adopted in Ref.~\cite{Wang2013}.
For the completeness of this work, we still provide a detailed proof.
First, we recall the following inequality from Lemma 2 in Ref.~\cite{Hayashi2003}:\\

\begin{alemma}{\em(Hayashi-Nagaoka Inequality~\cite{Hayashi2003})}\label{AppLemma:HayashiLemma}
For every $0\le A\le\id$, $B\ge0$, and positive value $c>0$, we have that
\begin{align}
\id - (A+B)^{-\frac{1}{2}}A(A+B)^{-\frac{1}{2}}\le (1+c)(\id-A)+(2+c+c^{-1})B.
\end{align}
\end{alemma}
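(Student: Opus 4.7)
The plan is to establish this as the Hayashi--Nagaoka operator inequality (originally Lemma~2 of Ref.~\cite{Hayashi2003}). First I would reduce to the case where $A+B$ is strictly positive by a perturbation argument: replace $A+B$ by $A+B+\varepsilon\,\id$, prove the bound for the perturbed operator, and invoke continuity of the functional calculus as $\varepsilon\to 0^{+}$. With $M\coloneqq(A+B)^{-1/2}$ then well-defined, the relation $M(A+B)M=\id$ immediately yields the identity
\begin{equation}
\id-MAM=MBM,
\end{equation}
so the task reduces to bounding $MBM$ from above by $(1+c)(\id-A)+(2+c+c^{-1})B$. This already rewrites the LHS of the claimed bound as a manifestly positive operator, but still via the potentially ill-conditioned resolvent $M$.

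The core tool is the elementary operator inequality
\begin{equation}
(X+Y)(X+Y)^{\dagger}\le(1+c)\,XX^{\dagger}+(1+c^{-1})\,YY^{\dagger},\qquad c>0,
\end{equation}
which is an immediate consequence of $\bigl(c^{1/2}X-c^{-1/2}Y\bigr)\bigl(c^{1/2}X-c^{-1/2}Y\bigr)^{\dagger}\ge 0$. The free parameter $c$ appearing in the lemma is precisely this one, and the asymmetry of the final bound---the $2+c+c^{-1}$ in front of $B$ versus the $1+c$ in front of $\id-A$---traces back to combining the $(1+c)$ and $(1+c^{-1})$ contributions with the identity $MBM+MAM=\id$. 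Concretely, I would write $MBM=(MB^{1/2})(MB^{1/2})^{\dagger}$, split $MB^{1/2}$ into a $B^{1/2}$ term plus a correction depending on $M$, apply the Cauchy--Schwarz-type estimate once, and then re-express the resulting second-order remainder in terms of $A$ and $B$ using $M^{-2}=A+B$.

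After collecting terms and invoking $0\le A\le\id$ (so $\id-A\ge 0$ and square roots exist), the residual $MBM$ pieces cancel from both sides, leaving the stated bound with precise coefficients $1+c$ and $2+c+c^{-1}$. The main obstacle is the non-commutativity of $A$, $B$ and $M$: these three operators cannot be simultaneously diagonalised, so every algebraic rearrangement must preserve operator ordering and self-adjointness, and naive scalar-style manipulations are not available. The only non-routine ingredient is the choice of two-term splitting that drives the Cauchy--Schwarz step---this is what forces the coefficients to emerge exactly as stated; after that, the proof is a careful but mechanical algebraic expansion.
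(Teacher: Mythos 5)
The paper does not prove this lemma at all---it imports it verbatim as Lemma~2 of Ref.~\cite{Hayashi2003}---so your proposal can only be measured against the standard published argument, whose skeleton you have correctly reproduced: reduce to invertible $A+B$, use $\id-MAM=MBM$ with $M=(A+B)^{-1/2}$, split $MB^{1/2}$ as $B^{1/2}+(M-\id)B^{1/2}$, and apply the Cauchy--Schwarz-type operator inequality once. The coefficient bookkeeping you describe is also consistent: placing $1+c^{-1}$ on the $B$ term and $1+c$ on the remainder yields exactly $(1+c)(\id-A)+(2+c+c^{-1})B$, \emph{provided} the remainder is bounded by $(\id-A)+B$.

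That last bound is precisely where your sketch has a genuine gap, and it is the only non-algebraic ingredient of the proof. After the Cauchy--Schwarz step you are left with the second-order remainder $(M-\id)B(M-\id)$; using $B\le A+B$ it is dominated by
\begin{align}
(M-\id)(A+B)(M-\id)=\bigl(\id-(A+B)^{1/2}\bigr)^{2}=\id-2(A+B)^{1/2}+(A+B),
\end{align}
which contains $(A+B)^{1/2}$ and therefore cannot be ``re-expressed in terms of $A$ and $B$ using $M^{-2}=A+B$'' by any algebraic rearrangement; nor do any ``residual $MBM$ pieces cancel from both sides''---no $MBM$ term ever appears on the right-hand side. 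The missing ingredient is the operator inequality $A\le(A+B)^{1/2}$, obtained from $A^{2}\le A\le A+B$ (here is where $0\le A\le\id$ is really used) together with the L\"owner monotonicity of the square root; it gives $\id-2(A+B)^{1/2}+(A+B)\le\id-2A+A+B=(\id-A)+B$ and closes the proof. Beware also of the tempting shortcut $\bigl(\id-(A+B)^{1/2}\bigr)^{2}\le\bigl|\id-(A+B)\bigr|\le(\id-A)+B$: the first inequality is fine by functional calculus, but the second is false for non-commuting positive operators (e.g.\ $X=\proj{0}$ and $Y=\proj{+}$ violate $|X-Y|\le X+Y$), so the monotonicity route is genuinely required. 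Your perturbation reduction to invertible $A+B$ is workable, though the direct argument with the support projection $\Pi$ of $A+B$ costs only the extra observation $\id-\Pi\le\id-A$ and matches how the lemma is actually applied in the paper, where the generalized inverse is used.
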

For the rest of the proof, we let $S_{\rm in} = S_{\rm in|\mathcal{N}}$ and $S_{\rm out} = S_{\rm out|\mathcal{N}}$ with the given channel $\mathcal{N}$.
To begin with, let us fix errors $0<\omega<\epsilon$, a given $M\in\mathbb{N}$, a finite dimensional \CY{auxiliary} system $S'$ (which is {\em not necessary} of the same size with $S_{\rm in}$), a separable state of the form $\eta_{S_{\rm in}S'} = \sum_{x=0}^{M-1}p_x\sigma_{x|S_{\rm in}}\otimes\kappa_{x|S'}$ with a probability distribution $\{p_x\}_{x=0}^{M-1}$ and states $\sigma_{x|S_{\rm in}}$, $\kappa_{x|S'}$, and an operator $0\le Q\le\id_{S_{\rm out}S'}$ satisfying 
\begin{align}\label{AppEq:LowerBoundAssumption}
{\rm tr}\left[Q(\mathcal{N}\otimes\mathcal{I}_{S'})(\eta_{S_{\rm in}S'})\right]\ge1-\omega.
\end{align}
Now, for a given $L\in\mathbb{N}$ and a {\em codebook} $\mathcal{C} = \{m_i\}_{i=1}^{L}$ (namely, it is a mapping, $i\mapsto m_i$, from the set of classical information $\{i\}_{i=1}^L$ with size $L$ to the set $\{m\}_{m=0}^{M-1}$), we consider the encoding and decoding scheme for the classical messages of the size $L$ given by $\{\rho_{m_i}\}_{i=1}^L$ and $\{E_{i|\mathcal{C}}\}_{i=1}^L$, which are defined by:
\CY{\begin{align}
\rho_{x}\coloneqq\sigma_{x|S_{\rm in}};\quad\quad\quad E_{i|\mathcal{C}}\coloneqq\left(\sum_{k=1}^LA_{m_k}\right)^{-\frac{1}{2}}A_{m_i}\left(\sum_{k=1}^LA_{m_k}\right)^{-\frac{1}{2}}\quad{\rm if}\;i>1;\quad\quad\quad E_{1|\mathcal{C}}\coloneqq\id_{{\rm S}_{\rm out}} - \sum_{k=2}^LE_{k|\mathcal{C}},
\end{align}}
where
\begin{align}
A_{x}\coloneqq{\rm tr}_{S'}\left[\left(\id_{S_{\rm out}}\otimes\kappa_{x|S'}\right)Q\right].
\end{align}
Note that for a normal operator $A$, we define $A^{-1}$ to be the inverse of $A$ in its support; namely, $A^{-1}A = AA^{-1}$ will be the projection onto its support (which is the space ${\rm span}\{\ket{\phi_i}\}_{i=1}^{N_A}$, where $\ket{\phi_i}$'s are eigenstates of $A$ with non-zero eigenvalue satisfying $\braket{\phi_i}{\phi_j} = \delta_{ij}$, and $N_A$ is the number of non-zero eigenvalues that $A$ has, including degeneracy).
This means that in general we have $AA^{-1}\le\id$, and thus
\begin{align}\label{AppEq:ComputationDetail001}
E_{1|\mathcal{C}}& = \left(\sum_{k=1}^LA_{m_k}\right)^{-\frac{1}{2}}A_{m_1}\left(\sum_{k=1}^LA_{m_k}\right)^{-\frac{1}{2}} + \id_{{\rm S}_{\rm out}} - \sum_{l=1}^L\left(\sum_{k=1}^LA_{m_k}\right)^{-\frac{1}{2}}A_{m_l}\left(\sum_{k=1}^LA_{m_k}\right)^{-\frac{1}{2}}\CY{\ge\left(\sum_{k=1}^LA_{m_k}\right)^{-\frac{1}{2}}A_{m_1}\left(\sum_{k=1}^LA_{m_k}\right)^{-\frac{1}{2}}\ge0.}
\end{align}
Hence, we learn that $\sum_{i=1}^LE_{i|\mathcal{C}} = \id_{{\rm S}_{\rm out}}$ and $E_{i|\mathcal{C}}\ge0$ for every $i$; i.e., $\{E_{i|\mathcal{C}}\}_{i=1}^L$ is a POVM in the system ${\rm S}_{\rm out}$.

Now, we define the {\em average failure probability} to be one minus the average success probability (recall from Definition~\ref{AppDef:CCapacity}).
For a given codebook $\mathcal{C}$ and the given channel $\mathcal{N}$, the failure probability of transmitting {\em $L$ bits of} classical information, when using the corresponding encoding $\{\rho_{m_i}\}_{i=1}^L$ and decoding $\{E_{i|\mathcal{C}}\}_{i=1}^L$, is given by
\begin{align}\label{AppEq:PfailDef}
P_{\rm fail}(\mathcal{N},\mathcal{C}) \coloneqq \frac{1}{L}\sum_{i=1}^LP_{\rm fail}(\mathcal{N},\mathcal{C}|m_i)\coloneqq\frac{1}{L}\sum_{i=1}^L{\rm tr}\left[(\id_{S_{\rm out}} - E_{i|\mathcal{C}})\mathcal{N}(\rho_{m_i})\right]=1-\frac{1}{L}\sum_{i=1}^L{\rm tr}\left[E_{i|\mathcal{C}}\mathcal{N}(\rho_{m_i})\right],
\end{align}
where 
$
P_{\rm fail}(\mathcal{N},\mathcal{C}|m_i)\coloneqq{\rm tr}\left[(\id_{S_{\rm out}} - E_{i|\mathcal{C}})\mathcal{N}(\rho_{m_i})\right]=1-{\rm tr}\left[E_{i|\mathcal{C}}\mathcal{N}(\rho_{m_i})\right]
$
is the probability to {\em wrongly} decode the $i$th output.
Now we compute the failure probability {\em averaging over all possible codebooks} according to the probability distribution $\{p_x\}_{x=0}^{M-1}$ associated with the state $\eta_{S_{\rm in}S'}$.
To this end, in what follows, we will treat each $m_i$ as a random variable that draws an element from $\{m\}_{m=0}^{M-1}$ based on the probability distribution $\{p_m\}_{m=0}^{M-1}$.
To compute the average, we adopt the notation 
\begin{align}
\CY{\mathbb{E}_{x}f(x)\coloneqq\sum_{x=0}^{M-1}p_xf(x)}
\end{align} 
for every function $f(x)$ in $x$ (note that this notation depends on the given probability distribution $\{p_x\}_{x=0}^{M-1}$, while we keep this dependency implicit for simplicity).
Then the average reads
\begin{align}
\left(\prod_{i=1}^L\mathbb{E}_{m_i}\right)P_{\rm fail}(\mathcal{N},\mathcal{C}) = \left(\prod_{i=1}^L\mathbb{E}_{m_i}\right)\sum_{k=1}^L\frac{1}{L}P_{\rm fail}(\mathcal{N},\mathcal{C}|m_k)=\sum_{k=1}^L\frac{1}{L}\mathbb{E}_{m_k}\left(\prod_{\substack{i=1\\i\neq k}}^L\mathbb{E}_{m_i}\right)P_{\rm fail}(\mathcal{N},\mathcal{C}|m_k).
\end{align}
Now we note that, for every $k=1,...,L$, using Lemma~\ref{AppLemma:HayashiLemma} with $A = A_{m_k}$ and $B = \sum_{\substack{i=1\\i\neq k}}^LA_{m_i}$ gives
\begin{align}
\id_{S_{\rm out}} - E_{k|\mathcal{C}}\le(1+c)\left(\id_{S_{\rm out}}-A_{m_k}\right)+(2+c+c^{-1})\sum_{\substack{i=1\\i\neq k}}^LA_{m_i},
\end{align}
which holds for every fixed $c>0$, and we will optimise over $c$ in the end.
Note that for $k=1$ we have used the lower bound in Eq.~\eqref{AppEq:ComputationDetail001}.
Hence, we have, for every $k=1,...,L$,
\begin{align}
P_{\rm fail}(\mathcal{N},\mathcal{C}|m_k)&={\rm tr}\left[(\id_{S_{\rm out}} - E_{k|\mathcal{C}})\mathcal{N}(\rho_{m_k})\right]\CY{\le(1+c){\rm tr}\left[\left(\id_{S_{\rm out}} - A_{m_k}\right)\mathcal{N}(\rho_{m_k})\right] + (2+c+c^{-1}){\rm tr}\left[\left(\sum_{\substack{i=1\\i\neq k}}^LA_{m_i}\right)\mathcal{N}(\rho_{m_k})\right].}
\end{align}
Now we note that both $\rho_{m_k}$ and $A_{m_k}$ only depend on $m_k$, rather than the whole codebook $\mathcal{C}$.
This means 
\begin{align}
\mathbb{E}_{m_k}\left(\prod_{\substack{i=1\\i\neq k}}^L\mathbb{E}_{m_i}\right)P_{\rm fail}(\mathcal{N},\mathcal{C}|m_k)&\le(1+c)\left[1-\mathbb{E}_{m_k}{\rm tr}\left(A_{m_k}\mathcal{N}(\rho_{m_k})\right)\right] + (2+c+c^{-1}){\rm tr}\left[\left(\mathbb{E}_{m_k}\mathcal{N}(\rho_{m_k})\right)\left(\prod_{\substack{i=1\\i\neq k}}^L\mathbb{E}_{m_i}\sum_{\substack{i=1\\i\neq k}}^LA_{m_i}\right)\right]\nonumber\\
&\le(1+c)\left[1-\mathbb{E}_{x}{\rm tr}\left(A_{x}\mathcal{N}(\rho_{x})\right)\right] + (2+c+c^{-1}){\rm tr}\left[\left(\mathbb{E}_{x}\mathcal{N}(\rho_{x})\right)\left(\sum_{\substack{i=1\\i\neq k}}^L\mathbb{E}_{m_i}A_{m_i}\right)\right]\nonumber\\
&=(1+c)\left[1-\mathbb{E}_{x}{\rm tr}\left(A_{x}\mathcal{N}(\rho_{x})\right)\right] + (2+c+c^{-1}){\rm tr}\left[\left(\mathbb{E}_{x}\mathcal{N}(\rho_{x})\right)\times(L-1)\times\left(\mathbb{E}_{x}A_{x}\right)\right].
\end{align}
Following Ref.~\cite{Wang2013}, we note that, 
\begin{align}
\mathbb{E}_{x}{\rm tr}\left[A_{x}\mathcal{N}(\rho_{x})\right] = \sum_{x=0}^{M-1}p_x{\rm tr}\left\{{\rm tr}_{S'}\left[\left(\id_{S_{\rm out}}\otimes\kappa_{x|S'}\right)Q\right]\mathcal{N}(\sigma_{x|S_{\rm in}})\right\}={\rm tr}\left[Q(\mathcal{N}\otimes\mathcal{I}_{S'})(\eta_{S_{\rm in}S'})\right]\ge1-\omega,
\end{align}
Which is due to our assumption Eq.~\eqref{AppEq:LowerBoundAssumption}.
On the other hand,
\begin{align}
{\rm tr}\left[\left(\mathbb{E}_{x}A_{x}\right)\times\mathbb{E}_{x}\mathcal{N}(\rho_{x})\right]={\rm tr}\left\{{\rm tr}_{S'}\left[\left(\id_{S_{\rm out}}\otimes\sum_{x=0}^{M-1}p_x\kappa_{x|S'}\right)Q\right]\mathcal{N}\left(\sum_{y=0}^{M-1}p_y\sigma_{y|S'}\right)\right\}={\rm tr}\left[Q\left(\mathcal{N}\left(\eta_{S_{\rm in}}\right)\otimes\eta_{S'}\right)\right].
\end{align}
Combining everything, we learn that
\begin{align}\label{AppEq:NecessaryConditionUpperBound}
\left(\prod_{i=1}^L\mathbb{E}_{m_i}\right)P_{\rm fail}(\mathcal{N},\mathcal{C})=\sum_{k=1}^L\frac{1}{L}\mathbb{E}_{m_k}\left(\prod_{\substack{i=1\\i\neq k}}^L\mathbb{E}_{m_i}\right)P_{\rm fail}(\mathcal{N},\mathcal{C}|m_k)\le(1+c)\omega+(2+c+c^{-1})(L-1){\rm tr}\left[Q\left(\mathcal{N}\left(\eta_{S_{\rm in}}\right)\otimes\eta_{S'}\right)\right].
\end{align}
From here we observe that if Eq.~\eqref{AppEq:NecessaryConditionUpperBound} is further upper bounded by $\epsilon$, then $\left(\prod_{i=1}^L\mathbb{E}_{m_i}\right)P_{\rm fail}(\mathcal{N},\mathcal{C})\le\epsilon$, meaning that there must exist at least one codebook, denoted by $\mathcal{C}_L$, with certain combinations of $m_i$'s such that $P_{\rm fail}(\mathcal{N},\mathcal{C}_L)\le\epsilon$.
For this codebook, the corresponding encoding $\{\rho_{m_i}\}_{i=1}^L$ and decoding $\{E_{i|\mathcal{C}_L}\}_{i=1}^L$ form a feasible solution to the maximisation of $C_{(1)}^\epsilon(\mathcal{N})$ given in Definition~\ref{AppDef:CCapacity} (with $\Theta=\Theta_{\rm C}$).
Hence, we learn that [see also Eq.~\eqref{AppEq:PfailDef}]
\begin{align}
(1+c)\omega+(2+c+c^{-1})(L-1){\rm tr}\left[Q\left(\mathcal{N}\left(\eta_{S_{\rm in}}\right)\otimes\eta_{S'}\right)\right]\le\epsilon\quad\Rightarrow\quad\log_2L\le C_{(1)}^\epsilon(\mathcal{N}).
\end{align}
But since we know that no such feasible solution can exist when $L=2^{C_{(1)}^\epsilon(\mathcal{N})}+1$, we conclude that $\epsilon$ must be {\em upper bounded} by the upper bound given in Eq.~\eqref{AppEq:NecessaryConditionUpperBound} when $L=2^{C_{(1)}^\epsilon(\mathcal{N})}+1$.
This implies that
\begin{align}
\epsilon<(1+c)\omega+(2+c+c^{-1})2^{C_{(1)}^\epsilon(\mathcal{N})}\times{\rm tr}\left[Q\left(\mathcal{N}\left(\eta_{S_{\rm in}}\right)\otimes\eta_{S'}\right)\right].
\end{align}
Since this argument works for every $M\in\mathbb{N}$, $\eta_{S_{\rm in}S'}$ (with some finite dimensional \CY{auxiliary} system $S'$), and $0\le Q\le\id_{S_{\rm out}S'}$ satisfying ${\rm tr}\left[Q\left(\mathcal{N}\otimes\mathcal{I}_{S'}\right)\left(\eta_{S_{\rm in}S'}\right)\right]\ge1-\omega$, we conclude that, when $\epsilon>(1+c)\omega$,
\begin{align}\label{AppEq:ForAsymptoticProof}
\sup_{\substack{M\in\mathbb{N},\eta_{S_{\rm in}S'},0\le Q\le\id_{S_{\rm out}S'}\\{\rm tr}\left[Q\left(\mathcal{N}\otimes\mathcal{I}_{S'}\right)\left(\eta_{S_{\rm in}S'}\right)\right]\ge1-\omega}}\log_2\frac{1}{{\rm tr}\left[Q\left(\mathcal{N}\left(\eta_{S_{\rm in}}\right)\otimes\eta_{S'}\right)\right]}-\log_2\frac{2+c+c^{-1}}{\epsilon - (1+c)\omega}\le C_{(1)}^\epsilon(\mathcal{N}).
\end{align}
Using the definition Eq.~\eqref{AppEq:HypothesisTesting}, we have
\begin{align}
\sup_{M\in\mathbb{N},\eta_{S_{\rm in}S'}}D_h^{\omega}\left[(\mathcal{N}\otimes\mathcal{I}_{S'})(\eta_{S_{\rm in}S'})\,\middle\|\,\mathcal{N}\left(\eta_{S_{\rm in}}\right)\otimes\eta_{S'}\right]-\log_2\frac{2+c+c^{-1}}{\epsilon - (1+c)\omega}\le C_{(1)}^\epsilon(\mathcal{N}).
\end{align}
Now, by choosing $c = \frac{\epsilon - \omega}{\epsilon+\omega}$, we have, when $\epsilon>\omega>0$,  
\begin{align}
\epsilon-(1+c)\omega = \epsilon-\frac{2\epsilon\omega}{\epsilon+\omega}=\epsilon c,
\end{align}
This shows that we indeed have $(1+c)\omega<\epsilon$ with this $c$ value.
Finally, direct computation shows that
\begin{align}
\frac{2+c+c^{-1}}{\epsilon - (1+c)\omega} = \frac{1}{\epsilon(\epsilon-\omega)^2}\left[2\left(\epsilon^2 - \omega^2\right)+\left(\epsilon - \omega\right)^2+\left(\epsilon + \omega\right)^2\right]=\frac{4\epsilon}{(\epsilon - \omega)^2},
\end{align}
\CY{showing the desired lower bound.
\hfill$\square$}
\end{widetext}
\CY{Theorem~\ref{AppThm:MainResult} provides a general way to quantitatively describe one-shot $\Theta$-assisted classical capacities via entropic quantities.
These entropic quantities take key roles in bridging communication and energy transmission---as detailed in the next section, they can also characterise energy transmission tasks.
}

\CY{
\section{Energy Transmission via Channels
}\label{Sec:energy transmission}
}

\subsection{\AA berg's $\epsilon$-Deterministic Work Extraction}\label{App:Proof-MainResult-WorkExtraction}
\CY{To start with,} we briefly \CY{recap} \AA berg's formulation on one-shot work extraction~\cite{Aberg2013}. 
Consider a fixed energy \CY{eigenbasis}, denoted by $\{\ket{n}\}_{n=0}^{N-1}$, and a fixed temperature $T$.
Within this setup, all possible system Hamiltonians are of the form $H = \sum_{n=0}^{N-1}E_n\proj{n}$, and only \CY{\em energy-incoherent} states, i.e., states of the form $\eta = \sum_{n=0}^{N-1}\eta_n\proj{n}$ (which means that $[\eta,H]=0$) are considered.
Such a state can be equivalently characterised by a random variable $\proj{n}\mapsto n$ outputting the value $n$ with probability $\eta_n$.
For a given Hamiltonian $H = \sum_{n=0}^{N-1}E_n\proj{n}$, measuring energy in the state $\eta$ again gives a random \CY{variable}
\begin{align}\label{Eq:measuring energy}
\CY{
H_\eta:\proj{n}\mapsto E_n
}
\end{align}
with 
\begin{align}\label{Eq:measuring energy prob}
\CY{{\rm P}\left(\left\{H_\eta = E_n\right\}\right) = \eta_n,}
\end{align}
where $P(\{\cdot\})$ denotes the probability of the give event $\{\cdot\}$ to happen, and $\left\{H_\eta = E_n\right\}$ denotes the event that the energy is evaluated in the eigenstate $\ket{n}$ of $\eta$
\CY{(see Supplementary Note 2 and Eq.~(S1) in Ref.~\cite{Aberg2013})}.
Using $(\eta,H)$ to jointly indicate the system's state and Hamiltonian, 
\CY{we have:} 
\begin{adefinition}
{\em ({\AA berg's Work Extraction Scenario}~\cite{Aberg2013})
For an energy-incoherent state $\eta$, a {\em work extraction process of $\eta$ subject to Hamiltonian $H$} is a mapping that brings the pair $(\eta,H)$ to \CY{another} pair $(\cdot,H)$ with the same Hamiltonian that is composed by finitely many \CY{``allowed operations''}~\footnote{
\CY{Not to be confused with the allowed operations of quantum resource theories. Note that the thermalisation defined here is a valid channel and thus satisfies the golden rule of resource theories Eq.~\eqref{Eq:golden rule}.
However, level transformations are {\em not} channels at all, and thus cannot be captured by Eq.~\eqref{Eq:golden rule}.}
}. 
\CY{Here, the allowed operations are:}
\begin{enumerate}
\item ({\em Level Transformation}) One is allowed to change the Hamiltonian's energy levels.
Such an operation takes the form 
\begin{align}
\CY{(\rho,H)\mapsto(\rho,H'), }
\end{align}
where $H=\sum_{n=0}^{N-1}E_n\proj{n},H'=\sum_{n=0}^{N-1}E'_n\proj{n}$ are spanned by the same eigenbasis while with different (finite) energy gaps.
To realise this operation, one needs to tune the Hamiltonian promptly so that the system's state remains unchanged.
This can be interpreted as an isentropic process \CY{(a {\em quench} operation)}.
Importantly, for $\rho = \sum_{n=0}^{N-1}\rho_n\proj{n}$, one define the {\em work cost} as the following random variable during this operation:
\begin{align}
\CY{W = (H'_\rho - H_\rho):\proj{n}\mapsto E'_n - E_n}
\end{align}
with 
\begin{align}
\CY{{\rm P}\left(\left\{H'_\rho - H_\rho = E'_n - E_n\right\}\right) = \rho_n.}
\end{align}
The quantity $-W$ is the {\em extractable work}.
\item ({\em Thermalisation}) One is allowed to thermalise the system, which is the mapping 
\begin{align}
\CY{(\rho,H)\mapsto\left(\gamma_H,H\right),}
\end{align}
\CY{where $\gamma_H$ is the thermal state defined in Eq.~\eqref{Eq: thermal state}.}
Physically, it means that the system is in contact with a large bath with temperature $T$ and achieves thermal equilibrium.
During this operation, the Hamiltonian is invariant.
Also, we assume that there is no work cost associated with this operation.
\end{enumerate}
}
\end{adefinition}
In general, since one can combine two level transformations into one, and so do two thermalisation processes, the {\em total work cost} (which is again a random variable) of a work extraction process $\mathcal{P}$ of the given state-Hamiltonian pair $(\eta,H)$ is given by
\CY{
(see also Supplementary Note 2 of Ref.~\cite{Aberg2013})}
\begin{align}
W(\mathcal{P},\eta,H)&\coloneqq H^{(1)}_\eta - H_\eta + \sum_{k=1}^{\CY{K-1}}\left(H^{(k+1)}_{\gamma_{H^{(k)}}} - H^{(k)}_{\gamma_{H^{(k)}}}\right)\nonumber\\
&\quad+ H_{\gamma_{H^{(K)}}} - H^{(K)}_{\gamma_{H^{(K)}}}.
\end{align}
That is, it consists of $K+1$ level transformations changing Hamiltonians sequentially as 
\begin{align}
\CY{H\mapsto H^{(1)}\mapsto H^{(2)}\mapsto ...\mapsto H^{(K)}\mapsto H,}
\end{align}
\CY{and a thermalisation is inserted between every two level transformations.}
\CY{Here, 
$H^{(k')}_{\gamma_{H^{(k)}}}$ is the random variable of measuring energy in $\gamma_{H^{(k)}}$ with the Hamiltonian $H^{(k')}$ [Eqs.~\eqref{Eq:measuring energy} and~\eqref{Eq:measuring energy prob}].}
Note that there is no need to add an additional thermalisation in the end since it contributes zero work cost.
Denote by $\mathfrak{P}(\eta,H)$ all work extraction processes of $\eta$ subject to $H$, then one can define the {\em $(\epsilon,\delta)$-deterministic extractable work of $\eta$ subject to $H$} with $0<\epsilon<1$ and \mbox{$0\le\delta<\infty$} as the following quantity \CY{(which is a combination of Supplementary Definitions 4, 7, and 8 in Ref.~\cite{Aberg2013})}
\begin{align}\label{AppEq:Wext}
&W_{\rm ext,(1)}^{(\epsilon,\delta)}\left(\eta,H\right)\coloneqq\nonumber\\
&-\inf\bigcup_{\mathcal{P}\in\mathfrak{P}(\eta,H)}\left\{w\in\mathbb{R}\,\middle|\,P\left[\left\{\left|W(\mathcal{P},\eta,H) - w\right|\le\delta\right\}\right]>1-\epsilon\right\}.
\end{align}
$W_{\rm ext,(1)}^{(\epsilon,\delta)}\left(\eta,H\right)$ is the highest value which the work gain can be ``$\delta$-closed to'' with a probability no less than $1-\epsilon$.
By requesting an arbitrarily good precision $\delta\to0$, \CY{we obtain} 
the {\em one-shot $\epsilon$-deterministic extractable work}, which is given by
\begin{align}
W_{\rm ext,(1)}^{\epsilon}\left(\eta,H\right)\coloneqq\lim_{\delta\to0}W_{\rm ext,(1)}^{(\epsilon,\delta)}\left(\eta,H\right),
\end{align}
\CY{which} is an one-shot extractable work with high predictability.
\CY{The main theorem of Ref.~\cite{Aberg2013} (whose formal statement is given in Supplementary Corollary 1 in its supplementary information)} can then be summarised as follows:
\begin{atheorem}\label{AppThm:Aberg}{\em(\AA berg's One-Shot Work Extraction Theorem~\cite{Aberg2013})}
For an energy-incoherent state $\eta$ with Hamiltonian $H$, a temperature $0<T<\infty$, and an error $0<\epsilon\le1-\CY{1/\sqrt{2}}$, we have that 
\begin{align}
0\le W_{\rm ext,(1)}^\epsilon(\eta,H) - (k_BT\ln2)D_0^\epsilon\left(\eta\,\|\,\gamma_H\right)\le k_BT\ln\frac{1}{1-\epsilon}.
\end{align}
\end{atheorem}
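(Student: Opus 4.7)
The plan is to prove the two-sided bound in two pieces: first to exhibit a concrete work extraction protocol that realises $(k_BT\ln 2)\,D_0^\epsilon(\eta\|\gamma_H)$ as an achievable $\epsilon$-deterministic extractable work (the lower bound on the displayed difference), and then to establish a converse showing that no protocol can exceed this value by more than $k_BT\ln[1/(1-\epsilon)]$.

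For the achievability, I would first pick an index set $\Lambda^\star$ attaining the maximum in the definition~\eqref{AppEq:D0epsilon}, so that $\sum_{j\in\Lambda^\star}\eta_j>1-\epsilon$ and $D_0^\epsilon(\eta\|\gamma_H)=\log_2[1/\sum_{j\in\Lambda^\star}\gamma_j]$. I would then design a three-stage process $\mathcal{P}^\star\in\mathfrak{P}(\eta,H)$: (i) a level transformation that raises each energy with $j\notin\Lambda^\star$ to a large value $E_{\max}$ while leaving the energies in $\Lambda^\star$ unchanged; (ii) a thermalisation to $\gamma_{H^{(1)}}$, which under this modified Hamiltonian is (up to an $E_{\max}\to\infty$ limit) supported on $\Lambda^\star$ with Gibbs-weighted probabilities $\gamma_j/\sum_{k\in\Lambda^\star}\gamma_k$; (iii) a quasi-static return to $H$ implemented by many small alternating level transformations and thermalisations. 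Stage (i) has zero work cost whenever the initial energy eigenvalue lies in $\Lambda^\star$, an event of probability exceeding $1-\epsilon$; stage (ii) is gratis by definition; stage (iii) delivers a work gain equal to the Helmholtz free energy difference $k_BT\ln[Z(H)/Z(H^{(1)})]=k_BT\ln[1/\sum_{j\in\Lambda^\star}\gamma_j]$. Therefore, on the good event of probability exceeding $1-\epsilon$, the total random work cost concentrates around $-(k_BT\ln 2)\,D_0^\epsilon(\eta\|\gamma_H)$ with precision $\delta>0$ controlled by the discretisation in stage (iii). Sending $\delta\to 0$ after $E_{\max}\to\infty$ then yields the lower bound via Eq.~\eqref{AppEq:Wext}, while the outer inequality $W_{\rm ext,(1)}^\epsilon\geq(k_BT\ln 2)\,D_0^\epsilon(\eta\|\gamma_H)$ absorbs the fact that $\mathcal{P}^\star$ is just one feasible protocol.

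For the converse, I would argue via monotonicity of an $\epsilon$-smoothed non-equilibrium free energy functional under the two allowed operations. Thermalisation sends any such quantity of $(\rho,H)$ to that of $(\gamma_H,H)$, which is zero, while a level transformation $(\rho,H)\mapsto(\rho,H')$ changes it by exactly the expected value of the work random variable up to fluctuation corrections. Quantifying those corrections for a generic protocol $\mathcal{P}$ amounts to a Markov-type estimate: if the random variable $W(\mathcal{P},\eta,H)$ satisfies $P[\{|W-w|\le\delta\}]>1-\epsilon$, then combining the fluctuation identity $\langle e^{-W/k_BT}\rangle_{\mathcal{P}}$ with a truncation onto the good event yields $-(w-\delta)\leq (k_BT\ln 2)\,D_0^\epsilon(\eta\|\gamma_H)+k_BT\ln[1/(1-\epsilon)]$. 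Sending $\delta\to 0$ and taking the supremum over $w$ appearing in~\eqref{AppEq:Wext} gives the claimed upper bound; the restriction $\epsilon\leq 1-1/\sqrt{2}$ enters precisely to keep the good event large enough for this truncation argument to be meaningful.

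The main obstacle is ensuring that stage (iii) of the achievability protocol delivers a \emph{deterministic} (rather than merely fluctuating) work value: only then does the resulting $W$ concentrate in a $\delta$-window around a single $w$ as required by~\eqref{AppEq:Wext}. The standard route is to choose the infinitesimal steps so that, after the preparatory thermalisation of stage (ii), the per-step work cost depends only on the instantaneous Hamiltonian and not on which eigenstate is populated, so fluctuations vanish in the quasi-static limit. A secondary obstacle is the tight accounting of the constant $k_BT\ln[1/(1-\epsilon)]$ in the converse, which must arise from optimising the Markov truncation at exactly the $\epsilon$-failure threshold; slack parameters introduced earlier must be eliminated in a matched optimisation, as carried out explicitly in the Supplementary Information of Ref.~\cite{Aberg2013}, whose structure our plan mirrors.
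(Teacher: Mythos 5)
First, a point of reference: the paper itself offers \emph{no proof} of this statement. Theorem~\ref{AppThm:Aberg} is imported verbatim from Ref.~\cite{Aberg2013} (its Supplementary Corollary~1), and the surrounding text explicitly defers to Supplementary Notes 2, 8 and 9 of that reference for the argument. There is therefore no in-paper proof to compare yours against; your sketch can only be measured against \AA berg's original derivation, whose overall architecture --- tail-cutting level transformation, thermalisation, quasi-static isothermal return for achievability, and a converse producing the $k_BT\ln[1/(1-\epsilon)]$ slack --- you do reproduce, as you yourself acknowledge. Your identification of the achievable work as $k_BT\ln[Z(H)/Z(H^{(1)})]=(k_BT\ln2)D_0^\epsilon(\eta\,\|\,\gamma_H)$ for the optimal $\Lambda^\star$ is correct.

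That said, two steps of your sketch would fail as written. In stage (iii) you propose to make the work deterministic by choosing steps ``so that the per-step work cost depends only on the instantaneous Hamiltonian and not on which eigenstate is populated.'' A level transformation with population-independent work cost is a uniform shift $E_n\mapsto E_n+c$, which leaves the Gibbs state unchanged and so cannot implement a non-trivial isothermal return from $H^{(1)}$ to $H$: the levels outside $\Lambda^\star$ must be lowered from $E_{\max}$ while those inside are not, so the per-step work is unavoidably population-dependent. The correct mechanism (and the one in Ref.~\cite{Aberg2013}) is concentration of measure: with $K$ interleaved thermalisations the per-step work contributions are independent random variables of range $O(1/K)$, so the total work has variance $O(1/K)$ and concentrates around the free-energy difference as $K\to\infty$. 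Second, in the converse you invoke ``the fluctuation identity $\langle e^{-W/k_BT}\rangle_{\mathcal{P}}$'' as though a Jarzynski-type equality held for the whole protocol; it does not, because the first level transformation acts on $\eta$ rather than on a thermal state, and $\sum_n\eta_n e^{-\beta(E'_n-E_n)}$ admits no closed form. The actual converse tracks, for each candidate success event $\Lambda$ with probability exceeding $1-\epsilon$, the restricted Gibbs weight $\sum_{j\in\Lambda}e^{-\beta E_j}/Z$ through the protocol and shows it cannot shrink faster than $e^{-\beta w}$ per unit of extracted work; the constant $k_BT\ln[1/(1-\epsilon)]$ and the threshold $\epsilon\le1-1/\sqrt{2}$ (equivalently $(1-\epsilon)^2\ge1/2$) both emerge from that bookkeeping rather than from a Markov truncation of an exponential average. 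Your high-level plan is sound, but these two steps are precisely where the technical content of \AA berg's proof lives, and as proposed they do not go through.
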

Note that the entropy $D_0^\epsilon$ \CY{defined in Eq.~\eqref{AppEq:D0epsilon}} is understood to be computed in the energy eigenbasis, and the thermal state $\gamma_H$ is \CY{again defined by Eq.~\eqref{Eq: thermal state}.}
For the complete consideration and the detailed framework, we refer the reader to Ref.~\cite{Aberg2013} \CY{(especially Supplementary Notes 2, 8, and 9)}.

\CY{
\subsection{Work Extraction from Correlation}
After knowing how to quantify extractable work from states, we now analyse the extractable work from states' {\em correlation}.
Consider a given bipartite state $\rho_{AB}$.
Following Ref.~\cite{Perarnau-Llobet2015}'s approach, this can be achieved by preparing local Hamiltonians (with global Hamiltonian of the form $H_A\otimes\id_B + \id_A\otimes H_B$) so that $\rho_{AB}$ is {\em locally thermal}---its local states  $\rho_A = \gamma_{H_A}$ and $\rho_B = \gamma_{H_B}$ are thermal states [Eq.~\eqref{Eq: thermal state}] to the local Hamiltonians $H_A$ and $H_B$ (here, again, we assume a fixed background temperature $0<T<\infty$ is given).
Since no work can be extracted from thermal equilibrium (Theorem~\ref{AppThm:Aberg}), extracted work, if there is any, must come from global correlation.
In other words, we identify the extractable work from $\rho_{AB}$'s correlation as the one extracted from $\rho_{AB}$ when local Hamiltonians make it locally thermal.

To analytically characterise the extractable work from correlation via Theorem~\ref{AppThm:Aberg}, we focus on separable $\rho_{AB}$ satisfying \begin{align}\label{Eq:simultaneously diagonalisable}
[\rho_{AB},\rho_A\otimes\rho_B]=0.
\end{align}
Physically, this means that when $\rho_{AB}$ is locally thermal, it is also energy-incoherent~\footnote{\CY{More precisely, when the bipartite Hamiltonian $H_A\otimes\id_B+\id_A\otimes H_B$ makes $\rho_{AB}$ locally thermal, the bipartite thermal state is \mbox{$\rho_A\otimes\rho_B$}. Then, Eq.~\eqref{Eq:simultaneously diagonalisable} means that $\rho_{AB}$ and this bipartite thermal state are simultaneously diagonalisable.
Hence, there is a bipartite energy eigenbasis simultaneously diagonalising both $\rho_{AB}$ and $\rho_A\otimes\rho_B$, meaning that $\rho_{AB}$ is energy-incoherent to this energy eigenbasis.
We can then use this energy eigenbasis to apply \AA berg's work extraction scenario and Theorem~\ref{AppThm:Aberg}.}}.
Then, one can use Theorem~\ref{AppThm:Aberg} to bound the {\em one-shot $\epsilon$-deterministic extractable work from $\rho_{AB}$'s correlation}, denoted by $W_{\rm corr,(1)}^\epsilon(\rho_{AB})$: 
For every \mbox{$0<T<\infty$} and $0<\epsilon\le1-\CY{1/\sqrt{2}}$, we have [see Fig.~\ref{Fig:Wcorr_tasks} (a)]
\begin{align}\label{AppEq:App:Proof-Com}
0&\le W_{\rm corr,(1)}^\epsilon(\rho_{AB}) - (k_BT\ln2)D_0^\epsilon(\rho_{AB}\,\|\,\rho_A\otimes\rho_B)\nonumber\\
&\le k_BT\ln\frac{1}{1-\epsilon}.
\end{align}
These bounds will play a key role in characterising the energy transmission tasks that we are about to introduce.}

\begin{figure*}
\begin{center}
\scalebox{0.8}{\includegraphics{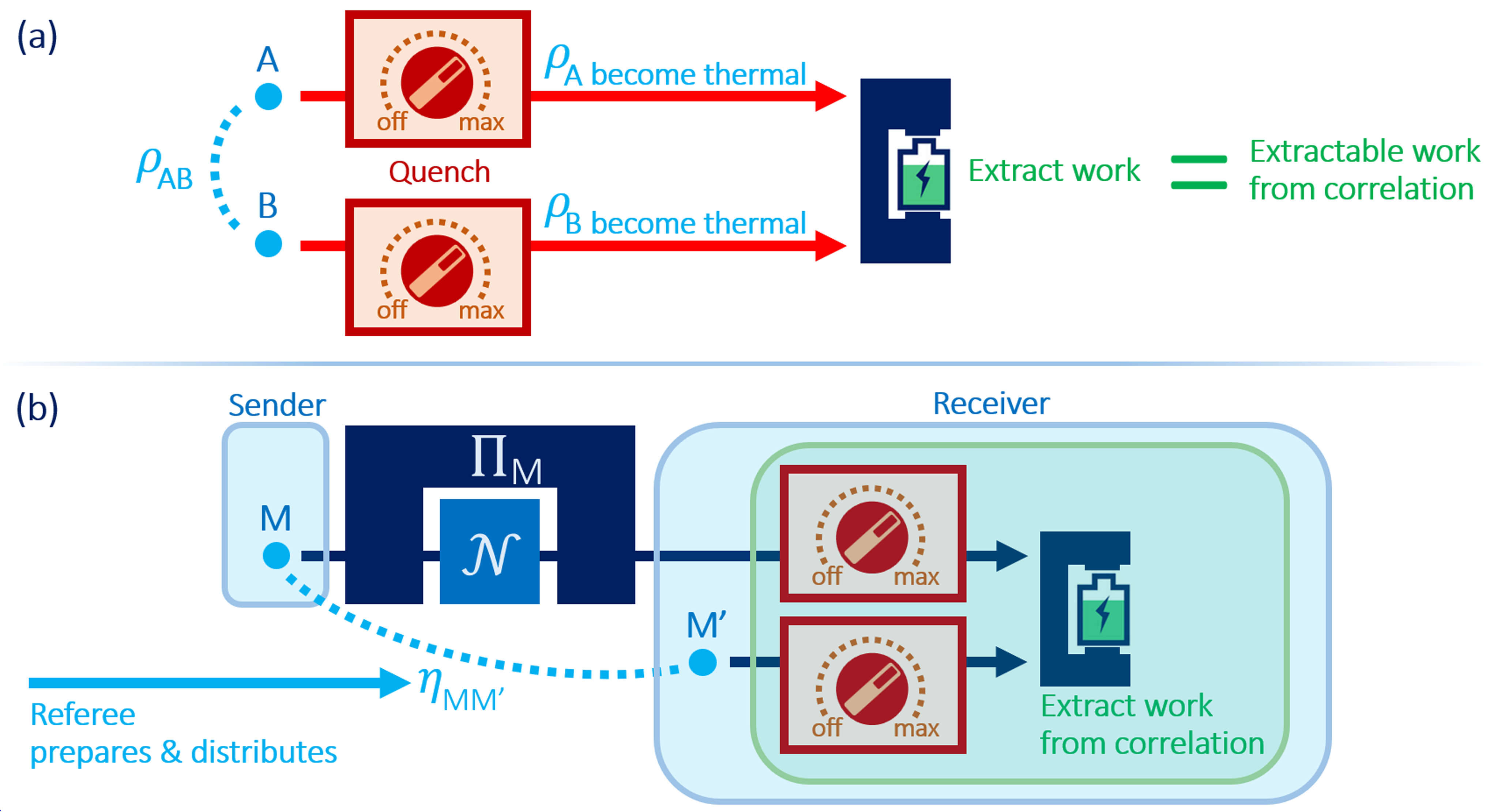}}
\caption{
\CY{\bf Tasks for extracting and transmitting work-like energy.}
\CY{(a) The task corresponding to Eq.~\eqref{AppEq:App:Proof-Com}, which aims to extract work from the correlation of a given state $\rho_{AB}$.
This can be done by first quenching the local Hamiltonians to make the state locally thermal (i.e., making both $\rho_A$ and $\rho_B$ the thermal states of the local systems $A$ and $B$) and then extracting work from the bipartite state.
\CYnew{(b) The task corresponding to Eq.~\eqref{Eq: one-shot transmitted energy}, which aims to measure the work-like energy that can {\em only} due to transmission via the channel $\mathcal{N}$ (with the assistance of $\Pi_M\in\Theta_M$).
As argued in the text, the extracted work from the bipartite output's correlation can only result from transmission by $\Pi_M(\mathcal{N})$.}}
}
\label{Fig:Wcorr_tasks} 
\end{center}
\end{figure*}

\CYnew{
\subsection{Energy Transmission Tasks}
Now, we introduce an operational task to analyse the work-like energy {\em definitely transmitted} by a channel $\mathcal{N}$, as detailed below [see also Fig.~\ref{Fig:Wcorr_tasks} (b)].
Consider a setting with referee, sender, and receiver. 
At the beginning, the referee prepares a bipartite state $\eta_{MM'}$ diagonal in a given computational basis \mbox{$\{\ket{n}_M\otimes\ket{m}_{M'}\}_{n,m=0}^{M-1}$} subject to some  initial Hamiltonian of the form $H_M\otimes\id_{M'} + \id_M\otimes H_{M'}$, where $\{\ket{n}_M\}_{n=0}^{M-1}$ ($\{\ket{m}_{M'}\}_{m=0}^{M-1}$) is an energy eigenbasis of $H_M$ ($H_{M'}$).
\CYthree{We further assume $H_M$ and $H_{M'}$ are of finite-energy.}

In the next step, the referee distributes the part $M$ ($M'$) of $\eta_{MM'}$ to the sender (receiver).
Then, the sender uses a classical version of $\mathcal{N}$, denoted by $\Pi_M(\mathcal{N})$, to locally send $\eta_{MM'}$'s part $M$ to the receiver [recall that $\Pi_M(\mathcal{N})$ is an $M$-to-$M$ classical channel induced by $\mathcal{N}$ and $\Theta$; see Definition~\ref{AppDef:ClassicalVersion}].
We demand that $\Pi_M(\mathcal{N})$'s output system is again a classical system described by the energy eigenbasis $\{\ket{n}_M\}_{n=0}^{M-1}$.

After completing the transmission step as mentioned above, the receiver possesses a bipartite state. 
Now, our goal is to identify the amount of energy that can {\em only} result from {\em transmission} rather than being {\em created} by the channel $\Pi_M(\mathcal{N})$.
For instance, if $\Pi_M(\mathcal{N})$ is a so-called erasure channel acting as $(\cdot)\mapsto\proj{0}$, then $\Pi_M(\mathcal{N})$ will {\em create} extractable energy which is not transmitted.
This also suggests that, in general, the extractable work from output's part $M$ is not solely resulting from transmission.
On the other hand, the extractable work from output's part $M'$ is also not due to transmission at all, since it was with the receiver {\em before} applying $\Pi_M(\mathcal{N})$. 

Hence, to isolate the receiver's energy gain that is {\em definitely} due to energy transmission, we check the extractable work from the receiver's output {\em bipartite correlation}.
This correlation can be quantified by $D_0^\epsilon$ via
\begin{align}
D_0^\epsilon\left[(\Pi_M(\mathcal{N})\otimes\mathcal{I}_{M'})(\eta_{MM'})\,\|\,\Pi_M(\mathcal{N})(\eta_M)\otimes\eta_{M'}\right],
\end{align}
which is a smoothed version of the min-mutual information~\cite{Datta2013}.
Since $\Pi_M(\mathcal{N})$ only locally acts on $\eta_{MM'}$'s part $M$, it cannot generate any bipartite correlation---a fact quantitatively captured by the data-processing inequality of $D_0^\epsilon$ for a small enough $\epsilon$ (see Fact~\ref{App: data-processing ineq D_0} in Appendix~\ref{App}):
\begin{align}\label{Eq: must be transmitted 01}
&D_0^\epsilon\left[(\Pi_M(\mathcal{N})\otimes\mathcal{I}_{M'})(\eta_{MM'})\,\|\,\Pi_M(\mathcal{N})(\eta_M)\otimes\eta_{M'}\right]\nonumber\\
&\quad\le D_0^\epsilon\left(\eta_{MM'}\,\|\,\eta_M\otimes\eta_{M'}\right).
\end{align}
Namely, the output bipartite correlation is upper bounded by the input bipartite correlation, i.e., $D_0^\epsilon\left(\eta_{MM'}\,\|\,\eta_M\otimes\eta_{M'}\right)$.
Together with Eq.~\eqref{AppEq:App:Proof-Com},
we thus obtain, for a small enough $\epsilon$,
\begin{align}\label{Eq: must be transmitted 02}
&W_{\rm corr,(1)}^\epsilon [(\Pi_M(\mathcal{N})\otimes\mathcal{I}_{M'})(\eta_{MM'})]\nonumber\\
&\quad\quad\le W_{\rm corr,(1)}^\epsilon(\eta_{MM'}) + k_BT\ln\frac{1}{1-\epsilon}.
\end{align}
Thus, the local channel $\Pi_M(\mathcal{N})$ {\em cannot generate} extractable work from global correlation.
Namely, work extracted from receiver's bipartite correlation can only be {\em preserved} or {\em maintained} by the channel $\Pi_M(\mathcal{N})$---it can only result from {\em transmission}.
We then measure the highest energy that is {\em definitely} transmitted by $\mathcal{N}$ by the {\em one-shot $\Theta$-assisted $\epsilon$-deterministic genuinely transmitted energy}
defined as
\begin{align}\label{Eq: one-shot transmitted energy}
&W^{\epsilon}_{\rm corr|\Theta,(1)}(\mathcal{N})\coloneqq\nonumber\\
&\quad\sup_{\substack{M\in\mathbb{N},\eta_{MM'}\\\Pi_M\in\Theta_M}}W_{\rm corr,(1)}^\epsilon\left[(\Pi_M(\mathcal{N})\otimes\mathcal{I}_{M'})(\eta_{MM'})\right]
\end{align}

Note that, importantly, for the validity of Eq.~\eqref{Eq: one-shot transmitted energy}, we need to check Eq.~\eqref{AppEq:App:Proof-Com} is indeed applicable.
Namely, we need to show that $(\Pi_M(\mathcal{N})\otimes\mathcal{I}_{M'})(\eta_{MM'})$ can satisfy Eq.~\eqref{Eq:simultaneously diagonalisable}. To see this, since $\eta_{MM'}$ is diagonal in the given computational basis \mbox{$\{\proj{n}_M\otimes\proj{m}_{M'}\}_{n,m=0}^{M-1}$,} we can write 
\mbox{$
\eta_{MM'} = \sum_{n,m}p_{nm}\proj{n}_M\otimes\proj{m}_{M'}. 
$}
Also, by the setting, $\Pi_M(\mathcal{N})$'s 
output is always diagonal in the given basis $\{\ket{n}_M\}_{n=0}^{M-1}$.
Hence, for every $n$, we can write
\mbox{$
\Pi_M(\mathcal{N})(\proj{n}_M) = \sum_{l}p'_{l|n}\proj{l}_M.
$}
Combining everything and defining $q_{lm}\coloneqq\sum_np'_{l|n}p_{nm}$, we obtain
\begin{align}
(\Pi_M(\mathcal{N})\otimes\mathcal{I}_{M'})(\eta_{MM'})= \sum_{l,m}q_{lm}\proj{l}_M\otimes\proj{m}_M,
\end{align}
which is indeed simultaneously diagonalisable with \mbox{$\Pi_M(\mathcal{N})(\eta_{M})\otimes\eta_{M'}$} in the given computational basis \mbox{$\{\proj{n}_M\otimes\proj{m}_{M'}\}_{n,m=0}^{M-1}$}.
Hence, Eq.~\eqref{Eq:simultaneously diagonalisable} is satisfied, and this computational basis acts as the fixed energy eigenbasis for \AA berg's work extraction scenario and Theorem~\ref{AppThm:Aberg}.}

\CY{
Finally, we consider a simplified task by imposing three additional constraints on the task defining Eq.~\eqref{Eq: one-shot transmitted energy}:
\begin{enumerate}
\item Initial Hamiltonians $H_M,H_{M'}$ are fully degenerate.
\item $\eta_{MM'}=\Phi_{MM'}$ is maximally correlated [Eq.~\eqref{Eq: max corr state}].
\item $\norm{\Pi_M(\mathcal{N})\left(\id_M/M\right) - \id_M/M}_1<2\epsilon$.
\end{enumerate}
Here, $\id_M/M$ describes thermal equilibrium when the system Hamiltonian is fully degenerate.
Hence, the third condition means we only allowed classical versions to generate informational non-equilibrium up to the order $O(\epsilon)$.
This task induces the following highest transmitted energy:
\begin{align}\label{Eq:Wcorr alternative def}
&W_{\rm \Phi|\Theta,(1)}^\epsilon(\mathcal{N})\coloneqq\nonumber\\
&\sup_{\substack{M\in\mathbb{N},\Pi_M\in\Theta_M\\\norm{\Pi_M(\mathcal{N})\left(\frac{\id_M}{M}\right) - \frac{\id_M}{M}}_1<2\epsilon}}W_{\rm corr,(1)}^\epsilon\left[(\Pi_M(\mathcal{N})\otimes\mathcal{I}_{M'})(\Phi_{MM'})\right].
\end{align}
In the next section, we will use the figure-of-merits introduced here to bridge the transmissions of information and energy.}

\CY{
\section{Thermodynamic Bounds on Classical Capacities
}\label{Sec:TCTCI}
}

\CY{
\subsection{Bounding One-Shot $\Theta$-Assisted Classical Capacities by Energy Transmission Tasks}\label{App:Proof}
}
\CY{We now present this work's first major result, which bridges the one-shot transmissions of information and energy (this is Theorem~1 in the companion paper~\cite{Companion2}):}\\

\begin{atheorem}{\em(Quantifying Classical Communication by Work Extraction~\cite{Companion2})}~\label{Result:TCTCI}
%
\CY{Consider} a set of superchannels $\Theta$ and a fixed temperature $0<T<\infty$.
For a channel $\mathcal{N}$ and errors $0<\delta\le\omega<\epsilon\le1-\CY{1/\sqrt{2}}$, we have that
\begin{align}
&W_{\rm corr|\Theta,(1)}^{\omega}(\mathcal{N})-k_BT\ln\frac{4\epsilon}{(\epsilon - \omega)^2(1-\omega)}\nonumber\\
&\quad\quad\le(k_BT\ln2)C_{\Theta,(1)}^\epsilon(\mathcal{N})\le W_{\rm \Phi|\Theta,(1)}^{\epsilon+\delta}(\mathcal{N}).
\end{align}
\end{atheorem}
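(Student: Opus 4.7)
The plan is to use Theorem~\ref{AppThm:MainResult} as the entropic bridge and convert its relative-entropy-type bounds on $C_{\Theta,(1)}^{\epsilon}$ into thermodynamic bounds via Eq.~\eqref{AppEq:App:Proof-Com}, which two-sidedly identifies $(k_BT\ln 2)D_0^{\omega}$ with the one-shot work $W_{\rm corr,(1)}^{\omega}$ extractable from a bipartite correlation, up to an additive slack $k_BT\ln[1/(1-\omega)]$. Applying Eq.~\eqref{AppEq:App:Proof-Com} to the $D_0^{\epsilon+\delta}$ appearing in the upper inequality of Theorem~\ref{AppThm:MainResult} should produce the right-hand side of the present theorem, while combining Eq.~\eqref{AppEq:App:Proof-Com} with Eq.~\eqref{Eq:useful D_0 D_h} and the lower inequality of Theorem~\ref{AppThm:MainResult} should produce the left-hand side.

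For the upper bound $(k_BT\ln 2)C_{\Theta,(1)}^{\epsilon}(\mathcal{N})\le W_{\Phi|\Theta,(1)}^{\epsilon+\delta}(\mathcal{N})$, I would invoke the upper inequality of Theorem~\ref{AppThm:MainResult}. The crucial observation is that $\Phi_{MM'}$ has uniform marginals, so the output $\rho_{MM'}\coloneqq(\Pi_M(\mathcal{N})\otimes\mathcal{I}_{M'})(\Phi_{MM'})$ has marginals $\rho_M=\Pi_M(\mathcal{N})(\id_M/M)$ and $\rho_{M'}=\id_{M'}/M$, making the reference state inside $D_0^{\epsilon+\delta}$ precisely $\rho_M\otimes\rho_{M'}$. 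Since (as already noted below Eq.~\eqref{Eq:Wcorr alternative def}) this $\rho_{MM'}$ is diagonal in the bipartite classical basis and hence satisfies Eq.~\eqref{Eq:simultaneously diagonalisable}, Eq.~\eqref{AppEq:App:Proof-Com} applies and yields $(k_BT\ln 2)D_0^{\epsilon+\delta}(\rho_{MM'}\,\|\,\rho_M\otimes\rho_{M'})\le W_{\rm corr,(1)}^{\epsilon+\delta}(\rho_{MM'})$. Taking the supremum over the common $(M,\Pi_M)$-domain then recovers $W_{\Phi|\Theta,(1)}^{\epsilon+\delta}(\mathcal{N})$ on the right, with the harmless gap between the non-strict constraint of Theorem~\ref{AppThm:MainResult} and the strict constraint of Eq.~\eqref{Eq:Wcorr alternative def} closed by a continuity/limiting argument.

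For the lower bound, I would fix any $\Pi_M\in\Theta_M$ with a Definition~\ref{AppDef:ClassicalVersion} decomposition $\Pi_M(\cdot)=\mathcal{L}\circ[\Pi(\cdot)]\circ\mathcal{K}$ and any classical state $\eta_{MM'}=\sum_{n,m}p_{nm}\proj{n}_M\otimes\proj{m}_{M'}$, then lift $\eta_{MM'}$ through the encoder by setting $\eta'\coloneqq(\mathcal{K}\otimes\mathcal{I}_{M'})(\eta_{MM'})=\sum_{n,m}p_{nm}\mathcal{K}(\proj{n}_M)\otimes\proj{m}_{M'}$. By construction $\eta'$ is a separable state of the form required by the lower inequality of Theorem~\ref{AppThm:MainResult} (with auxiliary $S'=M'$ and the counting parameter absorbed into the supremum). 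Plugging $(\Pi,\eta')$ into that inequality, then applying (i) data-processing of $D_h^{\omega}$ under the local post-processing channel $\mathcal{L}\otimes\mathcal{I}_{M'}$, which turns the $\Pi(\mathcal{N})$-scenario on $\eta'$ into the $\Pi_M(\mathcal{N})$-scenario on $\eta_{MM'}$, and (ii) $D_0^{\omega}\le D_h^{\omega}$ from Eq.~\eqref{Eq:useful D_0 D_h}, I obtain $(k_BT\ln 2)C_{\Theta,(1)}^{\epsilon}(\mathcal{N})\ge(k_BT\ln 2)D_0^{\omega}[(\Pi_M(\mathcal{N})\otimes\mathcal{I}_{M'})(\eta_{MM'})\,\|\,\Pi_M(\mathcal{N})(\eta_M)\otimes\eta_{M'}]-k_BT\ln\frac{4\epsilon}{(\epsilon-\omega)^2}$. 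Replacing $(k_BT\ln 2)D_0^{\omega}$ by $W_{\rm corr,(1)}^{\omega}[(\Pi_M(\mathcal{N})\otimes\mathcal{I}_{M'})(\eta_{MM'})]-k_BT\ln\frac{1}{1-\omega}$ via the lower direction of Eq.~\eqref{AppEq:App:Proof-Com}, merging the two logarithmic corrections into $k_BT\ln\frac{4\epsilon}{(\epsilon-\omega)^2(1-\omega)}$, and finally taking the sup over $(\Pi_M,\eta_{MM'})$ produces $W_{\rm corr|\Theta,(1)}^{\omega}(\mathcal{N})$ on the right.

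The main technical delicacy I expect is the bookkeeping around the data-processing step and, in particular, checking that every intermediate bipartite state of the form $(\Pi_M(\mathcal{N})\otimes\mathcal{I}_{M'})(\eta_{MM'})$ satisfies the commutation condition Eq.~\eqref{Eq:simultaneously diagonalisable} required for Eq.~\eqref{AppEq:App:Proof-Com} and for \AA berg's framework to apply; this holds because these states are diagonal in the bipartite classical basis, but it must be verified carefully. Matching the separable product structure of $\eta'$ with the convex-combination requirement of the lower bound of Theorem~\ref{AppThm:MainResult} through the encoder $\mathcal{K}$, and the strict-vs-non-strict trace-norm issue in Eq.~\eqref{Eq:Wcorr alternative def}, are minor technicalities already anticipated by the setup of Sec.~\ref{Sec:energy transmission}.
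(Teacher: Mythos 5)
Your proposal is correct and follows essentially the same route as the paper's proof: Theorem~\ref{AppThm:MainResult} supplies the entropic bounds, Eq.~\eqref{AppEq:App:Proof-Com} converts $D_0$ into extractable work, and the lower bound is obtained by restricting to encoder-lifted classical inputs, data-processing $D_h^{\omega}$ through the decoder, invoking $D_0^{\omega}\le D_h^{\omega}$, and merging the logarithmic error terms. Your explicit attention to the strict-versus-non-strict trace-norm constraint and to verifying Eq.~\eqref{Eq:simultaneously diagonalisable} is, if anything, slightly more careful than the paper's own write-up.
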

\begin{proof}
\CY{First, Eq.~\eqref{Eq:Wcorr alternative def} and Theorem~\ref{AppThm:MainResult} implies}
the upper bound.
To show the lower bound, \CY{note that} when $\eta_{MM'}$ is
\CY{diagonal in the given computational basis $\{\proj{n}_M\otimes\proj{m}_{M'}\}_{n,m=0}^{M-1}$,} 
one can always write 
\begin{align}\label{Eq: proof needed 001}
\CY{\eta_{MM'} = \sum_{m=0}^{M-1}p_m\sigma_{m|M}\otimes\kappa_{m|M'}}
\end{align}
with states $\sigma_{m|M}$ \CY{and} $\kappa_{m|M'}$.
\CY{Then we have}
\begin{widetext}
\begin{align}
&\sup_{\substack{M\in\mathbb{N},\Pi\in\Theta\\\eta_{S_{\rm in|\Pi}S'}=\sum_{x=0}^{M-1}p_x\sigma_{x|S_{\rm in|\Pi}}\otimes\kappa_{x|S'}}}D_h^{\omega}\left[(\Pi(\mathcal{N})\otimes\mathcal{I}_{S'})(\eta_{S_{\rm in|\Pi}S'})\,\middle\|\,\Pi(\mathcal{N})\left(\eta_{S_{\rm in|\Pi}}\right)\otimes\eta_{S'}\right]\nonumber\\
&\quad\quad\quad\ge\sup_{\substack{M\in\mathbb{N},\Pi\in\Theta\\\mathcal{K}\in{\rm CQ}_{M\to S_{\rm in|\Pi}}\\\eta_{MM'}:\;{\rm classical\;in\;}MM'}}D_h^{\omega}\left[(\Pi(\mathcal{N})\circ\mathcal{K}\otimes\mathcal{I}_{M'})(\eta_{MM'})\,\middle\|\,\Pi(\mathcal{N})\circ\mathcal{K}\left(\eta_{M}\right)\otimes\eta_{M'}\right]\nonumber\\
&\quad\quad\quad\ge\sup_{\substack{M\in\mathbb{N},\Pi\in\Theta\\\mathcal{K}\in{\rm CQ}_{M\to S_{\rm in|\Pi}}\\\mathcal{L}\in{\rm QC}_{S_{\rm out|\Pi}\to M}\\\eta_{MM'}:\;{\rm classical\;in\;}MM'}}D_h^{\omega}\left[(\mathcal{L}\circ\Pi(\mathcal{N})\circ\mathcal{K}\otimes\mathcal{I}_{M'})(\eta_{MM'})\,\middle\|\,\mathcal{L}\circ\Pi(\mathcal{N})\circ\mathcal{K}\left(\eta_{M}\right)\otimes\eta_{M'}\right]\nonumber\\
&\quad\quad\quad\ge\sup_{\substack{M\in\mathbb{N},\Pi_M\in\Theta_M\\\eta_{MM'}:\;{\rm classical\;in\;}MM'}}D_0^{\omega}\left[(\Pi_M(\mathcal{N})\otimes\mathcal{I}_{M'})(\eta_{MM'})\,\middle\|\,\Pi_M(\mathcal{N})\left(\eta_{M}\right)\otimes\eta_{M'}\right]\nonumber\\
&\quad\quad\quad\ge \frac{W_{\rm corr|\Theta,(1)}^\omega(\mathcal{N})}{k_BT\ln2} - \log_2\frac{1}{1-\omega}.
\end{align}
\end{widetext}
The first line is from 
\CY{Theorem~\ref{AppThm:MainResult}'s lower bound}, which is further lower bounded by the second line \CY{by restricting} the maximisation range \CY{via Eq.~\eqref{Eq: proof needed 001}.}
The third line is \CY{from $D_h^\omega$'s data-processing inequality, and the fourth line is due to Eq.~\eqref{Eq:useful D_0 D_h} and Definition~\ref{AppDef:ClassicalVersion}.}
\CY{The} last line follows from 
\CY{Eqs.~\eqref{Eq: one-shot transmitted energy} and~\eqref{AppEq:App:Proof-Com}, and}
the proof is completed by using Theorem~\ref{AppThm:MainResult} again.
\end{proof}
\CY{
Note that the term $k_BT\ln[4\epsilon/(\epsilon - \omega)^2(1-\omega)]$ cannot change the physical meaning.
To see this, consider $k$ copies of the channel $\mathcal{N}$; i.e., $\mathcal{N}^{\otimes k}$.
When $k\to\infty$, we have
\begin{align}\label{Eq: one-shot error meaning}
\frac{W_{\rm corr|\Theta,(1)}^{\omega}\left(\mathcal{N}^{\otimes k}\right)}{k}&\approx\frac{(k_BT\ln2)C_{\Theta,(1)}^\epsilon\left(\mathcal{N}^{\otimes k}\right)}{k}\nonumber\\
&\approx\frac{W_{\rm \Phi|\Theta,(1)}^{\epsilon+\delta}\left(\mathcal{N}^{\otimes k}\right)}{k}.
\end{align}
Namely, the contribution from \mbox{$k_BT\ln[4\epsilon/(\epsilon - \omega)^2(1-\omega)]$} vanishes when we consider sufficiently many (but still finite) copies of the channel ($\mathcal{N}^{\otimes k}$) and then take the average over the copy number ($k$).
This is the reason why a term like $k_BT\ln[4\epsilon/(\epsilon - \omega)^2(1-\omega)]$ is called an ``one-shot error term''---they cannot change the physics, especially in the asymptotic (i.e., many copies) regime.
Hence, we conclude that $W_{\rm corr|\Theta,(1)}$, $(k_BT\ln2)C_{\Theta,(1)}$, and $W_{\rm \Phi|\Theta,(1)}$ carry the {\em same} and {\em equivalent} physical meaning, and Theorem~\ref{Result:TCTCI} provides a quantitative equivalence between transmitting information and energy.}
\CY{In the next section, we discuss the physical implications of Theorem~\ref{Result:TCTCI}.
Especially, Theorem~\ref{Result:TCTCI} enables us to uncover a dynamical version of Landauer's principle~\cite{Landauer1961}.}

\CYnew{
\subsection{Dynamical Landauer's Principle}\label{Sec:Landauer}
}
\CYnew{As a surprising physical implication, Theorem~\ref{Result:TCTCI} enables us to uncover a {\em dynamical version} of Landauer's principle~\cite{Landauer1961}.
Loosely speaking, for a qubit with background temperature $T$, Landauer's principle states that preparing a pure state $\ket{\psi}$ from the maximally mixed one, i.e., $\ket{\psi}\mapsto\id/2$, must be accompanied by at least $k_BT\ln2$ energy cost.   
Together with Szilard engine~\cite{Szilard1929} (see, e.g., Section IV in Ref.~\cite{Hsieh2021}), one can equate informational and energetic properties of states: A state carries one bit of deterministic information (e.g., a qubit pure state) {\em if and only if} it possesses one unit of extractable work ($k_BT\ln2$).
These are two {\em static} properties of quantum states, which can be viewed as states' information and energy content.
Armed with Theorem~\ref{Result:TCTCI}, we can now equate informational and energetic properties of quantum channels, which are their abilities to transmit information and energy---we can equate two {\em dynamical} properties of quantum channels.}

\CYnew{
We start with the (one-shot) $\Theta$-assisted scenario (as in Fig.~\ref{Fig:CCtasks}).
A channel $\mathcal{N}$ can be viewed
from two different perspectives---information-theoretical and thermodynamic.
when we view it information-theoretically as a communication channel (Fig.~\ref{Fig:CCtasks}), it can transmit $n$ bits of (classical) information if and only if $n\le C_{\Theta,(1)}^\epsilon(\mathcal{N})$, as defined in Definition~\ref{AppDef:CCapacity}.
Now, in exactly the same scenario, by setting appropriate Hamiltonians, we can also view it thermodynamically as an energy-transmitting process as in Eqs.~\eqref{Eq: one-shot transmitted energy} and~\eqref{Eq:Wcorr alternative def}. 
Theorem~\ref{Result:TCTCI} then implies that {\em necessarily and sufficiently}, 
the channel $\mathcal{N}$ must possess the ability to transmit $n\times(k_BT\ln2)$ amount of energy, up to one-shot error terms:}\\

\CYnew{
\begin{acorollary}\label{coro:weak dynamical Landauer}
The ability to transmit $n$ bits of information is equivalent to the ability to transmit $n\times(k_BT\ln2)$ energy.
\end{acorollary}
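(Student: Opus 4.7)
The plan is to extract Corollary~\ref{coro:weak dynamical Landauer} directly from Theorem~\ref{Result:TCTCI} by translating the operational phrases ``transmit $n$ bits of information'' and ``transmit $n\times(k_BT\ln2)$ energy'' into the figures of merit already established: $C_{\Theta,(1)}^\epsilon(\mathcal{N})$, $W_{\rm corr|\Theta,(1)}^\omega(\mathcal{N})$, and $W_{\rm \Phi|\Theta,(1)}^{\epsilon+\delta}(\mathcal{N})$. The operational correspondence is built into their definitions: by Definition~\ref{AppDef:CCapacity} and Fact~\ref{AppLemma:AlternativeCCapacity}, the sentence ``$\mathcal{N}$ transmits $n$ bits with error $\epsilon$ under $\Theta$'' means $n\le C_{\Theta,(1)}^\epsilon(\mathcal{N})$; analogously, using Eqs.~\eqref{Eq: one-shot transmitted energy} and~\eqref{Eq:Wcorr alternative def}, the sentence ``$\mathcal{N}$ transmits $E$ units of work-like energy'' corresponds to $E\le W_{\rm corr|\Theta,(1)}^{\omega}(\mathcal{N})$, with the simplified subroutine giving the stronger witness $E\le W_{\rm \Phi|\Theta,(1)}^{\epsilon+\delta}(\mathcal{N})$.

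Next I would establish the forward direction (information $\Rightarrow$ energy): assuming $n\le C_{\Theta,(1)}^\epsilon(\mathcal{N})$, the upper bound of Theorem~\ref{Result:TCTCI} gives
\[
n\times(k_BT\ln2)\le (k_BT\ln2)\,C_{\Theta,(1)}^\epsilon(\mathcal{N})\le W_{\rm \Phi|\Theta,(1)}^{\epsilon+\delta}(\mathcal{N}),
\]
so $\mathcal{N}$ transmits at least $n\times(k_BT\ln2)$ of work-like energy. For the backward direction (energy $\Rightarrow$ information), assuming $n\times(k_BT\ln2)\le W_{\rm corr|\Theta,(1)}^{\omega}(\mathcal{N})$, the lower bound of Theorem~\ref{Result:TCTCI} yields
\[
n\le C_{\Theta,(1)}^\epsilon(\mathcal{N})+\log_2\frac{4\epsilon}{(\epsilon-\omega)^2(1-\omega)},
\]
so $\mathcal{N}$ transmits at least $n$ bits up to an additive one-shot correction.

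The main obstacle here is conceptual rather than computational: justifying that ``equivalence'' is the appropriate word despite the asymmetric error parameters ($\omega$ versus $\epsilon+\delta$) and the residual additive term $k_BT\ln[4\epsilon/(\epsilon-\omega)^2(1-\omega)]$ that sits between the three quantities. To handle this, I would invoke Eq.~\eqref{Eq: one-shot error meaning} and observe that replacing $\mathcal{N}$ by $\mathcal{N}^{\otimes k}$ and normalising by $k$ drives the additive correction to zero in the limit $k\to\infty$. Since all three figures of merit are thus forced to agree on a per-copy basis, the operational equivalence claimed in the corollary follows, and the corollary itself is best read as a physical summary of the quantitative sandwich provided by Theorem~\ref{Result:TCTCI}.
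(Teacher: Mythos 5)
Your proposal is correct and follows essentially the same route as the paper: it identifies ``ability to transmit $n$ bits'' with $n\le C_{\Theta,(1)}^\epsilon(\mathcal{N})$ and ``ability to transmit energy'' with the figures of merit $W_{\rm corr|\Theta,(1)}^{\omega}$ and $W_{\rm \Phi|\Theta,(1)}^{\epsilon+\delta}$, reads off both implications from the sandwich in Theorem~\ref{Result:TCTCI}, and dismisses the additive one-shot error term via the per-copy normalisation argument of Eq.~\eqref{Eq: one-shot error meaning}. This matches the paper's own justification of Corollary~\ref{coro:weak dynamical Landauer}, which is likewise a direct operational reading of Theorem~\ref{Result:TCTCI} up to one-shot error terms.
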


Crucially, the equivalence between the {\em abilities} to do two things {\em may not} always imply these two things are equivalent. 
To see the equivalence between transmitting information and energy, we now argue that they can happen {\em simultaneously}. In fact, they can be two facets of the {\em same} physical process.
To this end, we present an explicit scenario in which 
(1) transmitting information and energy happen simultaneously, and
(2) information transmission must be accompanied by energy transmission.
This can thus be viewed as the dynamical version of Landauer's principle.

Consider a channel $\mathcal{N}$ with \mbox{$C_{\Theta,(1)}^\epsilon(\mathcal{N})\ge n\coloneqq\log_2M$} in the energy transmission task defining via Eq.~\eqref{Eq: one-shot transmitted energy} ($\epsilon>0$ is given and sufficiently small). 
Namely, it is a channel having the ability to transmit $n$ bits of information.
In the task, suppose the referee prepares the maximally correlated state \mbox{$\eta_{MM'}=\Phi_{MM'} = \frac{1}{M}\sum_{m=0}^{M-1}\proj{m}_{M}\otimes\proj{m}_{M'}$} as the bipartite input with fully degenerate Hamiltonians.
Physically, $\Phi_{MM'}$ is prepared as a statistical mixture of the product state $\ket{m}_M\otimes\ket{m}_{M'}$ in a multi-trials experiment as follows: During each trial, with a uniformly distributed probability $1/M$, the referee prepares a pair of pure states $\ket{m}_M$ and $\ket{m}_{M'}$.
After that, the referee sends $\ket{m}_M$ to the sender and sends $\ket{m}_{M'}$ to the receiver, both with fully degenerate Hamiltonians.
We now argue that, in this setting, when $n = \log_2M$ bits of information are transmitted, it {\em must} be accompanied by at least $n\times(k_BT\ln2)$ transmitted energy.}

\CYnew{
First, using Fact~\ref{AppLemma:AlternativeCCapacity}, transmitting $n$ bits of information in the current setting means that we apply a classical version \mbox{$\Pi_M\in\Theta_M$} satisfying, for every $m=0,...,M-1$,
\begin{align}\label{Eq:dynamical Landauer comp001}
\norm{\Pi_M(\mathcal{N})(\proj{m})-\proj{m}}_1=O(\epsilon).
\end{align}
Here, ``$O(\epsilon)$'' denotes a term satisfying $\lim_{\epsilon\to0}O(\epsilon)=0$, and $\norm{\cdot}_1$ is the trace norm defined in Eq.~\eqref{Eq:trace norm}.
In most trials, by applying $\Pi_M(\mathcal{N})$, the receiver's bipartite output will be very close to the form \mbox{$\ket{m}_M\otimes\ket{m}_{M'}$} for some $m$.
That is, $\ket{m}_M$ can be sent from the sender to the receiver {\em almost} reliably, up to an error of the order $O(\epsilon)$.
This means that $\mathcal{N}$ is not just having the ``ability'' to transmit information---it is indeed doing so via the physical process $\Pi_M(\mathcal{N})$. 
We thus conclude:
\begin{center}
{\em {\bf (Observation A)} $n$ bits of information are transmitted by the physical process described by $\Pi_M(\mathcal{N})$.}
\end{center}}

\CYnew{
Let us argue that work-like energy is {\em also} transmitted by the {\em same} physical process $\Pi_M(\mathcal{N})$.
The first thing to note is that the receiver {\em does not} have any net extractable work-like energy if $\Pi_M(\mathcal{N})$ {\em has not} been applied.
This is because $M'$ is the {\em only} system the receiver can have in the absence of $\Pi_M(\mathcal{N})$.
when the receiver obtains $\ket{m}_{M'}$ from the referee, after multiple trials, it is statistically described by $\id_{M'}/M$ (as $\ket{m}_{M'}$'s are uniformly distributed).
Since it is with a fully degenerate Hamiltonian, no (net) work can be extracted. 
Namely,
\begin{center}
{\em{\bf(Observation B)} Without the physical process $\Pi_M(\mathcal{N})$, the receiver possesses no net work-like energy gain.}
\end{center}

Crucially, the situation changes when one applies the physical process $\Pi_M(\mathcal{N})$---with sufficiently many trials, the receiver's {\em bipartite} output is described by the statistical mixture 
$
[\Pi_M(\mathcal{N})\otimes\mathcal{I}_{M'}](\Phi_{MM'})
$ with the condition
\begin{align}
\norm{[\Pi_M(\mathcal{N})\otimes\mathcal{I}_{M'}](\Phi_{MM'})-\Phi_{MM'}}_1=O(\epsilon),
\end{align}
where Eq.~\eqref{Eq:dynamical Landauer comp001} has been used.
{\em After} the physical process $\Pi_M(\mathcal{N})$, the receiver can add additional work extraction protocols to extract work from this bipartite output's correlation.
Using Eq.~\eqref{AppEq:App:Proof-Com}, extractable work from $\Phi_{MM'}$'s correlation is \mbox{$W_{\rm corr,(1)}^\epsilon(\Phi_{MM'})=n\times(k_BT\ln2)+O(\epsilon)$.}
Due to the continuity of $D_0^\epsilon$ (see Fact~\ref{App:coro} in Appendix~\ref{App}), we conclude that, when $\epsilon$ is small enough, the extractable work from $[\Pi_M(\mathcal{N})\otimes\mathcal{I}_{M'}](\Phi_{MM'})$'s correlation is $n\times(k_BT\ln2)$, up to an error of the order $O(\epsilon)$~\footnote{\CYnew{Note that Fact~\ref{App:coro} is applicable since $\Pi_M(\mathcal{N})$ is a classical-to-classical channel, and the four states $[\Pi_M(\mathcal{N})\otimes\mathcal{I}_{M'}](\Phi_{MM'})$, \mbox{$\Pi_M(\mathcal{N})(\id_M/M)\otimes\id_{M'}/M$}, $\Phi_{MM'}$, and $\id_M/M\otimes\id_{M'}/M$ are {\em all} diagonalised in the given computational basis $\{\ket{n}_M\otimes\ket{m}_{M'}\}_{n,m=0}^{M-1}$.
}}.
This is the work-like energy {\em brought by} the physical process $\Pi_M(\mathcal{N})$, since the receiver does not have any extractable work in the absence of $\Pi_M(\mathcal{N})$ (Observation B).
In other words,
\begin{center}
{\em{\bf(Observation C)} After the physical process $\Pi_M(\mathcal{N})$, the receiver can have $n\times(k_BT\ln2)$ amounts of work gain.}
\end{center}
Finally, as we argue before [in particular, Eqs.~\eqref{Eq: must be transmitted 01} and~\eqref{Eq: must be transmitted 02}], this amount of work-like energy can {\em only} result from {\em transmission}, since it {\em cannot be created} by the physical process $\Pi_M(\mathcal{N})$.
Consequently, we obtain
\begin{center}
{\em{\bf(Observation D)} 
At least $n\times(k_BT\ln2)$ amounts of work-like energy has been transmitted to the receiver by $\Pi_M(\mathcal{N})$.
}
\end{center}

Note that Observations A and D mean that $n$ bits of information and $n\times(k_BT\ln2)$ amounts of work-like energy are {\em both} transmitted by $\Pi_M(\mathcal{N})$.
Still, to {\em extract} these transmitted resources, the receiver needs to implement additional processes---just like when we know the packages have been shipped and arrived, we still need to open the mailbox to ``extract'' them.
In a multi-trial experiment, the receiver can do so by dividing the trials into two batches.
For the first batch, they apply decoding to extract information; for the second batch, they apply a work extraction scenario to extract work.}

\CYnew{
Combining Observations A, B, C, and D, we thus conclude that the channel $\mathcal{N}$
is transmitting information {\em and} work-like energy {\em at the same time} via the physical process $\Pi_M(\mathcal{N})$.
Moreover, the amount of transmitted work-like energy is {\em guaranteed}, or even {\em demanded}, by the amount of transmitted information.
We thus obtain a truly work-like, genuinely dynamical version of Landauer's principle:
\begin{acorollary}\label{coro:strong dynamical Landauer}
In the above setting, transmitting $n$ bits of classical information must be accompanied by transmitting $n\times(k_BT\ln2)$ amounts of work-like energy.
\end{acorollary}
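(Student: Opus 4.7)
The plan is to establish the corollary by carefully separating four observations (labelled A--D in the discussion preceding the statement) and then combining them. The setting fixes the bipartite input $\Phi_{MM'}$ under fully degenerate local Hamiltonians, and the sender applies some classical version $\Pi_M(\mathcal{N})\in\Theta_M$ achieving the capacity $n=\log_2 M$ up to error $\epsilon$. The strategy is to argue that this \emph{single} physical process simultaneously accomplishes two things -- information transmission and work-like energy transmission -- and that the amount of the latter is quantitatively forced by the amount of the former.

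First, I would cash in Fact~\ref{AppLemma:AlternativeCCapacity} to convert the hypothesis $C_{\Theta,(1)}^\epsilon(\mathcal{N})\ge n$ into the existence of a $\Pi_M\in\Theta_M$ with $P_s[\Pi_M(\mathcal{N})]\ge 1-\epsilon$. A Markov-type argument essentially identical to the one used in the proof of Fact~\ref{Fact:A1} then promotes the average success probability to a per-label statement $\|\Pi_M(\mathcal{N})(\proj{m})-\proj{m}\|_1=O(\epsilon)$ on most basis states, which is Observation A: the labels are preserved almost exactly. Observation B is immediate: the receiver's only system before the process is $M'$ with marginal $\id_{M'}/M$ and degenerate Hamiltonian, hence already thermal, so Theorem~\ref{AppThm:Aberg} gives no extractable work.

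Next, for Observation C I would evaluate $W_{\rm corr,(1)}^\epsilon$ on the post-transmission state $[\Pi_M(\mathcal{N})\otimes\mathcal{I}_{M'}](\Phi_{MM'})$ using Eq.~\eqref{AppEq:App:Proof-Com}, which rewrites it as a smoothed $D_0^\epsilon$ between that state and the product of its marginals. I would verify that all four relevant operators -- this state, $\Phi_{MM'}$, and their respective marginal products -- are simultaneously diagonal in $\{\ket{n}_M\otimes\ket{m}_{M'}\}_{n,m=0}^{M-1}$ (which holds because $\Pi_M(\mathcal{N})$ outputs in that basis by construction of $\Theta_M$), so that the entropy is well-defined and \AA berg's framework applies. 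The continuity of $D_0^\epsilon$ (Fact~\ref{App:coro}) together with the trace-norm closeness of the output to $\Phi_{MM'}$ then yields extractable work equal to $n\times (k_BT\ln 2)$ up to $O(\epsilon)$. For Observation D, the key lever is the data-processing inequality for $D_0^\epsilon$ under the local channel $\Pi_M(\mathcal{N})\otimes\mathcal{I}_{M'}$ (Fact~\ref{App: data-processing ineq D_0}), which, via Eq.~\eqref{Eq: must be transmitted 02}, shows that $\Pi_M(\mathcal{N})$ cannot \emph{create} extractable correlation work; combined with Observation B this forces that the $n\times(k_BT\ln 2)$ must have been carried across the channel.

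The hard part is conceptual rather than computational: I need to justify that ``transmitting'' and ``extracting'' are logically separable, so that one may legitimately speak of energy as already having been transmitted even though the receiver only realises it via a subsequent work-extraction protocol. The standard device is the multi-trial split (decode on one batch, extract work on another), but it must be articulated carefully so that the roles of $\Pi_M(\mathcal{N})$ as a carrier and of the post-processing as a read-out are not conflated. A secondary, smaller obstacle is controlling all $O(\epsilon)$ slack terms uniformly -- both the ``information fidelity'' slack from Fact~\ref{AppLemma:AlternativeCCapacity} and the ``work quantisation'' slack from Theorem~\ref{AppThm:Aberg} -- so that one can cleanly quote ``$n\times(k_BT\ln 2)$ amounts of work-like energy'' as the asymptotic quantity up to one-shot error terms of the type already encountered in Theorem~\ref{Result:TCTCI}.
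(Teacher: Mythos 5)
Your proposal reproduces the paper's own argument essentially step for step: the same decomposition into Observations A--D, the same use of Fact~\ref{AppLemma:AlternativeCCapacity} and Eq.~\eqref{Eq:dynamical Landauer comp001} for the information side, Eq.~\eqref{AppEq:App:Proof-Com} together with the continuity of $D_0^\epsilon$ (Fact~\ref{App:coro}) for the work side, the data-processing inequality [Eqs.~\eqref{Eq: must be transmitted 01} and~\eqref{Eq: must be transmitted 02}] to rule out creation of correlation work, and the same multi-trial batch-splitting device to separate transmission from read-out. Your Markov-type hedge that the per-label fidelity holds only on \emph{most} basis states is, if anything, slightly more careful than the paper's blanket ``for every $m$,'' but it does not change the route or the conclusion.
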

Interestingly, since the physical process $\Pi_M(\mathcal{N})$ is solely processing information, 
the energy transmission discussed here is {\em mediated} by information transmission. 
This can thus be viewed as a dynamical counterpart of the famous information-to-work conversion via Szilard engine~\cite{Szilard1929}.
We have thus fully answered this work's central question.}

\CY{
\subsection{Asymptotic Limit and Holevo-Schumacher-Westmoreland Theorem}\label{App:AsymptoticLimit}
}
\CY{So far, we have fully addressed the one-shot cases.
By applying the asymptotic limit (namely, by considering multiple copies of the given channel and then averaging over the copy number), our results can reproduce} the well-known \CY{\em Holevo-Schumacher-Westmoreland} (HSW) Theorem~\cite{Holevo1973,Holevo1998,Schumacher1997}, which describes the exact form of the standard classical capacity (here we use the form given by Theorem 20.3.1 in Ref.~\cite{Wilde-book}).
First, the {\em Holevo information} of a channel $\mathcal{N}$ with input space $S_{\rm in|\mathcal{N}} = S_{\rm in}$ is defined by~\cite{Holevo1973,Holevo1998} (see also Ref.~\cite{Wilde-book})
\begin{align}
\chi(\mathcal{N})\coloneqq\max_{\sigma_{S_{\rm in}M}}S\left[(\mathcal{N}\otimes\mathcal{I}_M)\left(\sigma_{S_{\rm in}M}\right)\,\|\,\mathcal{N}\left(\sigma_{S_{\rm in}}\right)\otimes\sigma_M\right],
\end{align}
where the maximisation is taken over every \CY{\em quantum-classical state} $\sigma_{S_{\rm in}M}\coloneqq\sum_{m=0}^{M-1}p_m\rho_{m}\otimes\proj{m}_M$ with some classical system $M$, and recall that the {\em quantum relative entropy} is defined by $S(\rho\,\|\,\sigma)\coloneqq{\rm tr}\left[\rho\left(\log_2\rho - \log_2\sigma\right)\right]$.
\CY{$\chi(\mathcal{N})$} measures \CY{$\mathcal{N}$'s ability}
to maintain the classical correlation between sender and receiver.
In fact, we have that (see Ref.~\cite{Wilde-book} for a thorough introduction)\\

\begin{atheorem}\label{AppThm:HSW}{\em(Holevo-Schumacher-Westmoreland's Classical Capacity Theorem~\cite{Holevo1998,Schumacher1997})}
For every channel $\mathcal{N}$, \CY{we have}
\begin{align}\label{Eq:Asymptotic C def}
C(\mathcal{N})\coloneqq\lim_{\epsilon\to0}\varliminf_{k\to\infty}\frac{1}{k}C_{(1)}^\epsilon\left(\mathcal{N}^{\otimes k}\right) = \lim_{k\to\infty}\frac{1}{k}\chi\left(\mathcal{N}^{\otimes k}\right).
\end{align}
\end{atheorem}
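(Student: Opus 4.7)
The plan is to apply Theorem~\ref{AppThm:CC} to the parallel channel $\mathcal{N}^{\otimes k}$ and take the appropriate limits. We must establish two inequalities: the achievability direction ``$\ge$'' and the converse direction ``$\le$'' between the smoothed quantity $\lim_{\epsilon\to 0}\varliminf_{k\to\infty}\frac{1}{k}C_{(1)}^\epsilon(\mathcal{N}^{\otimes k})$ and the regularised Holevo information $\lim_{k\to\infty}\frac{1}{k}\chi(\mathcal{N}^{\otimes k})$. The latter limit exists by Fekete's lemma, since $\chi$ is superadditive under tensor products. The main asymptotic tool is quantum Stein's lemma, which guarantees $\lim_{k\to\infty}\frac{1}{k}D_h^\omega(\rho^{\otimes k}\,\|\,\sigma^{\otimes k}) = S(\rho\,\|\,\sigma)$ for every $0<\omega<1$.

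For achievability, I would feed the lower bound of Theorem~\ref{AppThm:CC} an i.i.d. tensor-product quantum-classical state $\sigma_{S_{\rm in}M}^{\otimes n}$, where $\sigma_{S_{\rm in}M} = \sum_m p_m\, \rho_m\otimes\proj{m}_M$ is the ensemble realising $\chi(\mathcal{N})$. Viewed on $\mathcal{N}^{\otimes n}$, this state is a convex mixture of $M^n$ product states (indexed by strings $(m_1,\dots,m_n)$) and is therefore admissible in the supremum. The resulting $D_h^\omega$ acts on the pair $\bigl[(\mathcal{N}\otimes\mathcal{I}_M)(\sigma)\bigr]^{\otimes n}$ and $\bigl[\mathcal{N}(\sigma_{S_{\rm in}})\otimes\sigma_M\bigr]^{\otimes n}$, so Stein's lemma delivers $\liminf_n \tfrac{1}{n}C_{(1)}^\omega(\mathcal{N}^{\otimes n}) \ge \chi(\mathcal{N})$. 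Replacing $\mathcal{N}$ by $\mathcal{N}^{\otimes \ell}$ in the same argument upgrades the bound to $\tfrac{1}{\ell}\chi(\mathcal{N}^{\otimes \ell})$ for arbitrary $\ell\in\mathbb{N}$, and letting $\ell\to\infty$ produces the regularised Holevo rate. The one-shot error term $-\log_2[4\epsilon/(\epsilon-\omega)^2]$ is $O(1)$ and is killed by the $1/n$ factor and the final $\epsilon\to 0$.

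For the converse, I start from the upper bound of Theorem~\ref{AppThm:CC} applied to $\mathcal{N}^{\otimes k}$. First I invoke the data-processing inequality for $D_0^{\epsilon+\delta}$ to strip the classical post-processing $\mathcal{L}$, producing the expression $D_0^{\epsilon+\delta}\bigl[(\mathcal{N}^{\otimes k}\circ\mathcal{K}\otimes\mathcal{I}_{M'})(\Phi_{MM'}) \,\|\, (\mathcal{N}^{\otimes k}\circ\mathcal{K})(\id_M/M)\otimes \id_{M'}/M\bigr]$. The inequality $D_0^{\epsilon+\delta}\le D_h^{\epsilon+\delta}$ from Eq.~\eqref{Eq:useful D_0 D_h} then replaces $D_0$ by its hypothesis-testing cousin. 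Now $(\mathcal{N}^{\otimes k}\circ\mathcal{K}\otimes\mathcal{I}_{M'})(\Phi_{MM'})$ is precisely the quantum-classical state for the ensemble $\{1/M, \mathcal{N}^{\otimes k}(\rho_m^{(k)})\}$, whose ordinary quantum relative entropy against the product of marginals equals the ensemble Holevo quantity and is thus at most $\chi(\mathcal{N}^{\otimes k})$. To convert the smoothed hypothesis-testing entropy to this quantity I would use the standard bound $(1-\epsilon')D_h^{\epsilon'}(\rho\,\|\,\sigma)\le S(\rho\,\|\,\sigma) + h_2(\epsilon')$, obtained by applying the classical data-processing inequality to the optimal two-outcome test $\{Q,\id-Q\}$. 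This yields $\tfrac{1}{k}C_{(1)}^\epsilon(\mathcal{N}^{\otimes k}) \le \bigl[\tfrac{1}{k}\chi(\mathcal{N}^{\otimes k}) + \tfrac{h_2(\epsilon+\delta)}{k}\bigr]/[1-(\epsilon+\delta)]$; sending $k\to\infty$, then $\delta\to 0$ and $\epsilon\to 0$ collapses this to the regularised Holevo rate.

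The main obstacle I anticipate lies in the converse: one must justify the conversion from a one-shot smoothed hypothesis-testing quantity to the asymptotic quantum relative entropy via the inequality $(1-\epsilon')D_h^{\epsilon'}\le S + h_2(\epsilon')$. This is the linchpin tying Theorem~\ref{AppThm:CC}'s one-shot upper bound to an \emph{asymptotic} capacity statement, because no tensor-product structure is available on the encoder $\mathcal{K}$. In contrast, the achievability direction is comparatively clean: Stein's lemma acts directly on the i.i.d. ensemble input, and the regularisation step is a formal substitution $\mathcal{N}\mapsto\mathcal{N}^{\otimes \ell}$.
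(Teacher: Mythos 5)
Your proposal follows essentially the same route as the paper: both directions rest on Theorem~\ref{AppThm:CC}, with achievability obtained from i.i.d.\ inputs, quantum Stein's lemma, and a blocking argument for the regularisation, and the converse from $D_0^{\epsilon+\delta}\le D_h^{\epsilon+\delta}$ together with the bound $D_h^{\epsilon'}\le[S+H_b(\epsilon')]/(1-\epsilon')$ and data processing down to $\chi(\mathcal{N}^{\otimes k})$. One small technical point: in the converse you strip $\mathcal{L}$ by data processing at the level of $D_0^{\epsilon+\delta}$, but that quantity is only defined for commuting arguments [Eq.~\eqref{AppEq:D0epsilon}] and the pair obtained after removing $\mathcal{L}$ is generically non-commuting; swapping the two steps---first pass to $D_h^{\epsilon+\delta}$, then apply its data-processing inequality (or, as the paper implicitly does, apply data processing only at the level of $S$ when bounding the constrained Holevo quantity by $\chi$)---repairs this without changing anything else.
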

\CY{Note that $\varliminf$ and $\varlimsup$ are the so-called {\em limit inferior} and {\em limit superior}, respectively, which are notions generalising the usual limit (see, e.g., Ref.~\cite{Analysis-book}).}
Its proof can be included in the following \CY{{\em restricted version}:}
\begin{widetext}
\begin{aproposition}\label{AppResult:HSWThermo}{\em(Holevo-Schumacher-Westmoreland's Theorem with Thermodynamic Constraints)}
For every channel $\mathcal{N}$ and $0<\theta<\CY{1/2}$, we have that
\begin{align}
C(\mathcal{N}) = \lim_{k\to\infty}\frac{1}{k}\overline{\chi}^\theta\left(\mathcal{N}^{\otimes k}\right),
\end{align}
where
\begin{align}
\overline{\chi}^\theta(\mathcal{N})\coloneqq\sup_{\substack{M\in\mathbb{N}\\\mathcal{K}\in{\rm CQ}_{M\to S_{{\rm in}|\mathcal{N}}}\\\mathcal{L}\in{\rm QC}_{S_{{\rm out}|\mathcal{N}}\to M}\\\norm{\mathcal{L}\circ\mathcal{N}\circ\mathcal{K}\left(\frac{\id_M}{M}\right) - \frac{\id_M}{M}}_1\le2\theta}}S\left[\left(\left(\mathcal{L}\circ\mathcal{N}\circ\mathcal{K}\right)\otimes\mathcal{I}_{M'}\right)(\Phi_{MM'})\,\middle\|\,\left(\mathcal{L}\circ\mathcal{N}\circ\mathcal{K}\right)\left(\frac{\id_{M}}{M}\right)\otimes\frac{\id_{M'}}{M}\right]
\end{align}
\end{aproposition}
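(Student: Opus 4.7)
The plan is to sandwich $\lim_{k\to\infty}\frac{1}{k}\overline{\chi}^\theta(\mathcal{N}^{\otimes k})$ between $C(\mathcal{N})$ from both sides. The upper direction relies on the data-processing inequality of the quantum relative entropy $S$ combined with the HSW achievability bound $\lim_k \frac{1}{k}\chi(\mathcal{N}^{\otimes k})\le C(\mathcal{N})$, itself derived from Theorem~\ref{AppThm:CC}'s lower bound via quantum Stein's lemma. The lower direction uses Fact~\ref{Fact:A1} to guarantee that optimal one-shot codes are feasible for the supremum defining $\overline{\chi}^\theta$, plus Fano's inequality to translate success probability into a lower bound on the $S$-quantity.

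For the upper direction, I would first note that for any feasible triple $(M,\mathcal{K},\mathcal{L})$, applying data processing of $S$ under the channel $\mathcal{L}\otimes\mathcal{I}_{M'}$ upper bounds the defining $S$-expression by the same expression with the decoder $\mathcal{L}$ removed; a direct computation identifies this as the Holevo information $\chi\bigl(\{1/M,\mathcal{N}^{\otimes k}(\mathcal{K}(\proj{m}))\}_{m=0}^{M-1}\bigr)$, which is at most $\chi(\mathcal{N}^{\otimes k})$. Thus $\overline{\chi}^\theta(\mathcal{N}^{\otimes k})\le\chi(\mathcal{N}^{\otimes k})$ for all $k$. Meanwhile, taking $\eta=\tau^{\otimes k}$ with a single-copy classical-quantum state $\tau=\sum_m p_m \rho_m\otimes\proj{m}$ in Theorem~\ref{AppThm:CC}'s lower bound, quantum Stein's lemma gives $\frac{1}{k}D_h^\omega(\tau^{\otimes k}\|\sigma^{\otimes k})\to S(\tau\|\sigma)=\chi(\{p_m,\mathcal{N}(\rho_m)\})$ for the corresponding product marginal $\sigma$. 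This yields $\liminf_k \frac{1}{k}C_{(1)}^\epsilon(\mathcal{N}^{\otimes k})\ge\chi(\mathcal{N})$; iterating over $\mathcal{N}^{\otimes n}$ and taking $n\to\infty$ then gives $\lim_k\frac{1}{k}\chi(\mathcal{N}^{\otimes k})\le C(\mathcal{N})$ by the definition of $C(\mathcal{N})$. Combining these two facts yields $\limsup_k\frac{1}{k}\overline{\chi}^\theta(\mathcal{N}^{\otimes k})\le C(\mathcal{N})$.

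For the lower direction, fix any $\epsilon\in(0,\theta)$. For each $k$, let $(\mathcal{K}_k,\mathcal{L}_k,M_k)$ attain $\log_2 M_k=C_{(1)}^\epsilon(\mathcal{N}^{\otimes k})$ with average success probability at least $1-\epsilon$, and set $W_k:=\mathcal{L}_k\circ\mathcal{N}^{\otimes k}\circ\mathcal{K}_k$. By Fact~\ref{Fact:A1}, $\|W_k(\id_{M_k}/M_k)-\id_{M_k}/M_k\|_1\le 2\epsilon<2\theta$, so the triple $(M_k,\mathcal{K}_k,\mathcal{L}_k)$ is feasible for $\overline{\chi}^\theta(\mathcal{N}^{\otimes k})$. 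Since both states in the defining $S$-expression are diagonal in the computational basis, a direct calculation shows that $S\left[(W_k\otimes\mathcal{I}_{M_k'})(\Phi_{M_kM_k'})\,\middle\|\,W_k(\id_{M_k}/M_k)\otimes\id_{M_k'}/M_k\right]$ equals the classical mutual information $I(X;Y_k)$ of the uniformly distributed input $X$ through $W_k$. Fano's inequality gives $I(X;Y_k)\ge(1-\epsilon)\log_2 M_k-h_2(\epsilon)$, whence $\overline{\chi}^\theta(\mathcal{N}^{\otimes k})\ge(1-\epsilon)C_{(1)}^\epsilon(\mathcal{N}^{\otimes k})-h_2(\epsilon)$. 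Dividing by $k$, taking $\liminf_k$, and finally $\epsilon\to 0^+$, together with the definition of $C(\mathcal{N})$ in Eq.~\eqref{Eq:Asymptotic C def}, completes the lower direction. The principal technical ingredient is the reduction of the $S$-quantity to the classical mutual information via the commuting (diagonal) structure of the two states; after this, standard Fano and Stein tools finish the argument, and the Proposition delivers HSW itself as a by-product via $C(\mathcal{N})=\lim_k\frac{1}{k}\overline{\chi}^\theta(\mathcal{N}^{\otimes k})\le\lim_k\frac{1}{k}\chi(\mathcal{N}^{\otimes k})\le C(\mathcal{N})$.
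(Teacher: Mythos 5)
Your proposal is correct, and its overall architecture — sandwiching $\lim_k\frac{1}{k}\overline{\chi}^\theta(\mathcal{N}^{\otimes k})$ between $C(\mathcal{N})$ via the chain $\overline{\chi}^\theta\le\chi$, quantum Stein's lemma on the direct-coding bound, and a feasibility-plus-converse argument in the other direction — matches the paper's. The two halves differ in fidelity to the paper, though. Your achievability half (data processing to strip the decoder, $\overline{\chi}^\theta(\mathcal{N}^{\otimes k})\le\chi(\mathcal{N}^{\otimes k})$, then $\lim_k\frac{1}{k}\chi(\mathcal{N}^{\otimes k})\le C(\mathcal{N})$ from Theorem~\ref{AppThm:CC}'s lower bound, Lemma~\ref{AppLemma:QStein}, and a blocking argument) is essentially the paper's argument. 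Your converse half takes a genuinely different route: the paper feeds Theorem~\ref{AppThm:CC}'s \emph{upper} bound through the chain $D_0^{\epsilon+\delta}\le D_h^{\epsilon+\delta}\le\frac{1}{1-\epsilon-\delta}\left[S+H_b(\epsilon+\delta)\right]$, so the trace-norm feasibility constraint and the entropic relaxation arrive pre-packaged; you instead invoke Fact~\ref{Fact:A1} directly to certify that an optimal one-shot code is feasible for $\overline{\chi}^\theta$, identify the defining $S$-quantity as the classical mutual information of the induced $M$-to-$M$ channel under uniform input (valid, since both states are diagonal in the product computational basis and the second argument is the product of marginals of the first), and apply Fano's inequality. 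Your route bypasses the smoothed entropies entirely in that direction and is arguably more elementary and self-contained; the paper's route buys uniformity with the rest of its one-shot machinery, since the same $D_0\le D_h\le\frac{1}{1-\epsilon}(S+H_b)$ chain is reused elsewhere. Both yield the same asymptotic penalty terms (which vanish after dividing by $k$ and sending $\epsilon\to0$), so the conclusion is unaffected; the only care needed in your version is the standard one, namely $P_e\le\epsilon\le1/2$ so that $h_2(P_e)\le h_2(\epsilon)$, which your constraint $\epsilon<\theta<1/2$ guarantees.
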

\CY{We remark that $\overline{\chi}^\theta$ defined above can be understood as Hovelo information with a thermodynamic constraint. Namely, with fully degenerate Hamiltonians, the encoding and decoding cannot drive the system out of thermal equilibrium too much.
In other words, the abilities of encoding and decoding to generate informational non-equilibrium is limited by the parameter $\theta$.}
\begin{proof}
We first prove the upper bound.
Theorem~\ref{AppThm:CC} implies that, for every $0<\delta<\epsilon\le\CY{1/2}$, 
\begin{align}
C_{(1)}^\epsilon(\mathcal{N}^{\otimes k})&\le\sup_{\substack{M\in\mathbb{N}\\\mathcal{K}\in{\rm CQ}_{M\to S_{{\rm in}|\mathcal{N}^{\otimes k}}}\\\mathcal{L}\in{\rm QC}_{S_{{\rm out}|\mathcal{N}^{\otimes k}}\to M}\\\norm{\mathcal{L}\circ\mathcal{N}^{\otimes k}\circ\mathcal{K}\left(\frac{\id_M}{M}\right) - \frac{\id_M}{M}}_1\le2(\epsilon+\delta)}}D_h^{\epsilon+\delta}\left[\left(\left(\mathcal{L}\circ\mathcal{N}^{\otimes k}\circ\mathcal{K}\right)\otimes\mathcal{I}_{M'}\right)(\Phi_{MM'})\,\middle\|\,\left(\mathcal{L}\circ\mathcal{N}^{\otimes k}\circ\mathcal{K}\right)\left(\frac{\id_{M}}{M}\right)\otimes\frac{\id_{M'}}{M}\right]\nonumber\\
&\le\frac{H_b(\epsilon+\delta)}{1-\epsilon-\delta}+\frac{1}{1-\epsilon-\delta}\overline{\chi}^{\epsilon+\delta}\left(\mathcal{N}^{\otimes k}\right),
\end{align}
where we have used the fact that $D_0^\epsilon(\rho\,\|\,\sigma)\le D_h^\epsilon(\rho\,\|\,\sigma)$, whenever they are well-defined, and the following estimate~\cite{Wang2013}:
\begin{align}
D_h^\epsilon(\rho\,\|\,\sigma)\le\frac{1}{1-\epsilon}\left[S(\rho\,\|\,\sigma) + H_b(\epsilon)\right]
\end{align}
with the binary entropy function $H_b(x)\coloneqq-x\log_2x - (1-x)\log_2(1-x)$ with $H_b(0)\coloneqq0$.
This means that, for every $0<\theta<\CY{1/2}$, by focusing on $\epsilon$'s values small enough so that $\epsilon+\delta\le\theta$,
\begin{align}\label{AppEq:HSWUpperBound}
C(\mathcal{N}) \coloneqq \lim_{\epsilon\to0}\varliminf_{k\to\infty}\frac{1}{k}C_{(1)}^\epsilon\left(\mathcal{N}^{\otimes k}\right) \le \lim_{\epsilon\to0}\varliminf_{k\to\infty}\frac{1}{1-\epsilon-\delta}\frac{1}{k}\overline{\chi}^\theta\left(\mathcal{N}^{\otimes k}\right) = \varliminf_{k\to\infty}\frac{1}{k}\overline{\chi}^\theta\left(\mathcal{N}^{\otimes k}\right)\le \varliminf_{k\to\infty}\frac{1}{k}\chi\left(\mathcal{N}^{\otimes k}\right),
\end{align}
which is the desired bound.

Now it remains to show that $\varlimsup_{k\to\infty}\frac{1}{k}\chi\left(\mathcal{N}^{\otimes k}\right)\le C(\mathcal{N})$ to complete the proof, which has been shown by Ref.~\cite{Wang2013}.
For the completeness of this work, we still detail the proof here.
Using the lower bound of Theorem~\ref{AppThm:CC} (which implies Wang-Renner's direct coding bound~\cite{Wang2013}), we learn that, for a fixed $k\in\mathbb{N}$ (recall that $S_{\rm in} = S_{\rm in|\mathcal{N}}$)
\begin{align}
C_{(1)}^\epsilon\left(\mathcal{N}^{\otimes kl}\right)+\log_2\frac{4\epsilon}{(\epsilon - \omega)^2}&\ge\max_{M\in\mathbb{N},\sigma_{S_{\rm in}^{\otimes kl}M}}D_h^{\omega}\left[\left(\mathcal{N}^{\otimes kl}\otimes\mathcal{I}_{M}\right)\left(\sigma_{S_{\rm in}^{\otimes kl}M}\right)\,\middle\|\,\mathcal{N}^{\otimes kl}\left(\sigma_{S_{\rm in}^{\otimes kl}}\right)\otimes\sigma_{M}\right]\nonumber\\
&\ge\max_{N\in\mathbb{N},\sigma_{S_{\rm in}^{\otimes k}N}}D_h^{\omega}\left[\left(\left(\mathcal{N}^{\otimes k}\otimes\mathcal{I}_{N}\right)\left(\sigma_{S_{\rm in}^{\otimes k}N}\right)\right)^{\otimes l}\,\middle\|\,\left(\mathcal{N}^{\otimes k}\left(\sigma_{S_{\rm in}^{\otimes k}}\right)\otimes\sigma_{N}\right)^{\otimes l}\right],
\end{align}
which holds for every $l\in\mathbb{N}$.
Now, we recall the following lemma~\cite{Hiai1991,Ogawa2000} 
\begin{alemma}\label{AppLemma:QStein}{\em(Quantum Stein's Lemma~\cite{Hiai1991,Ogawa2000})}
For every states $\rho,\sigma$ with ${\rm supp}(\rho)\subseteq{\rm supp}(\sigma)$ and $0<\epsilon<1$, we have
\begin{align}
\lim_{n\to\infty}\frac{1}{n}D_h^\epsilon\left(\rho^{\otimes n}\,\|\,\sigma^{\otimes n}\right) = S(\rho\,\|\,\sigma).
\end{align}
\end{alemma}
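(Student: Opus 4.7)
The plan is to establish both parts of Stein's lemma separately: the direct (Hiai--Petz) bound
\[
\varliminf_{n\to\infty}\tfrac{1}{n}D_h^\epsilon(\rho^{\otimes n}\,\|\,\sigma^{\otimes n})\ge S(\rho\,\|\,\sigma),
\]
and the strong converse (Ogawa--Nagaoka) bound
\[
\varlimsup_{n\to\infty}\tfrac{1}{n}D_h^\epsilon(\rho^{\otimes n}\,\|\,\sigma^{\otimes n})\le S(\rho\,\|\,\sigma).
\]
Together with the definition of $D_h^\epsilon$ these give the claimed equality.

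For the direct part, I would fix any rate $R<S(\rho\,\|\,\sigma)$ and exhibit an explicit test $Q_n$ with $\tr(Q_n\rho^{\otimes n})\ge 1-\epsilon$ and $\tr(Q_n\sigma^{\otimes n})\le 2^{-nR}$, so that $D_h^\epsilon(\rho^{\otimes n}\,\|\,\sigma^{\otimes n})\ge nR$. The natural candidate is the Hiai--Petz test, namely the spectral projector $Q_n=\{\rho^{\otimes n}-2^{nR}\sigma^{\otimes n}\ge 0\}$ onto the positive part. The second inequality is immediate from the operator inequality defining $Q_n$. The work is to show $\tr(Q_n\rho^{\otimes n})\to 1$. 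If $[\rho,\sigma]=0$ this follows at once from the classical weak law of large numbers applied to the i.i.d.\ log-likelihood ratios with mean $S(\rho\,\|\,\sigma)>R$. For the noncommuting case I would introduce the pinching map $\mathcal{E}_n$ with respect to the spectral projectors of $\sigma^{\otimes n}$, use the pinching inequality $\rho^{\otimes n}\le v(n)\,\mathcal{E}_n(\rho^{\otimes n})$ with $v(n)$ growing only polynomially (bounded by the number of distinct eigenvalues of $\sigma^{\otimes n}$, which is polynomial in $n$ by type-counting), and then reduce to the commuting situation, since $\mathcal{E}_n(\rho^{\otimes n})$ commutes with $\sigma^{\otimes n}$. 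The polynomial factor $\log_2 v(n)/n$ vanishes after the $1/n$ normalization.

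For the strong converse, I would invoke Rényi divergences $D_\alpha(\rho\,\|\,\sigma)=\tfrac{1}{\alpha-1}\log_2\tr(\rho^\alpha\sigma^{1-\alpha})$ with $\alpha>1$. For any test $0\le Q_n\le \id$ with $\tr(Q_n\rho^{\otimes n})\ge 1-\epsilon$, applying the data-processing inequality for $D_\alpha$ to the classical two-outcome measurement $\{Q_n,\id-Q_n\}$ and discarding the nonnegative second outcome gives
\[
nD_\alpha(\rho\,\|\,\sigma)\ge \tfrac{1}{\alpha-1}\log_2\!\left[(1-\epsilon)^\alpha\tr(Q_n\sigma^{\otimes n})^{1-\alpha}\right],
\]
which rearranges to $-\log_2\tr(Q_n\sigma^{\otimes n})\le \tfrac{n\alpha}{\alpha-1}D_\alpha(\rho\,\|\,\sigma)-\tfrac{\alpha}{\alpha-1}\log_2(1-\epsilon)$. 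Dividing by $n$, taking $\varlimsup_{n\to\infty}$, and then $\alpha\downarrow 1$ (using the standard limit $\lim_{\alpha\to 1^+}D_\alpha(\rho\,\|\,\sigma)=S(\rho\,\|\,\sigma)$) yields the upper bound uniformly in all admissible $Q_n$, hence for the supremum defining $D_h^\epsilon$.

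The hard part is the noncommuting direct bound: both the pinching inequality and the polynomial multiplicity estimate for eigenvalues of $\sigma^{\otimes n}$ require care, and one must check that the empirical spectral distribution of $\mathcal{E}_n(\rho^{\otimes n})$ concentrates around $S(\rho\,\|\,\sigma)$ quantitatively enough that the residual $(\log_2 v(n))/n$ correction really vanishes. A technically cleaner alternative is the Nussbaum--Szko\l{}a construction: given the eigendecompositions $\rho=\sum_i p_i\proj{e_i}$ and $\sigma=\sum_j q_j\proj{f_j}$, define the classical distributions $P_{ij}=p_i|\braket{e_i}{f_j}|^2$ and $Q_{ij}=q_j|\braket{e_i}{f_j}|^2$, which satisfy $D(P\,\|\,Q)=S(\rho\,\|\,\sigma)$ and for which the classical hypothesis-testing quantity on $(P^{\otimes n},Q^{\otimes n})$ lower-bounds the quantum one; this reduces the noncommuting achievability to the classical Stein lemma at the cost of extra combinatorial bookkeeping.
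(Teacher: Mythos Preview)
The paper does not prove this lemma at all: it is simply \emph{recalled} as a known result from Refs.~\cite{Hiai1991,Ogawa2000} and then invoked as a black box inside the proof of Proposition~\ref{AppResult:HSWThermo}. So there is no ``paper's own proof'' to compare against.

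Your sketch is essentially the standard two-part argument from those very references (Hiai--Petz achievability via the likelihood-ratio projector plus pinching, Ogawa--Nagaoka strong converse via R\'enyi monotonicity), and the overall strategy is correct. One algebraic slip: from
\[
nD_\alpha(\rho\,\|\,\sigma)\ge \tfrac{1}{\alpha-1}\log_2\!\big[(1-\epsilon)^\alpha\,\tr(Q_n\sigma^{\otimes n})^{1-\alpha}\big]
\]
the correct rearrangement is $-\log_2\tr(Q_n\sigma^{\otimes n})\le nD_\alpha(\rho\,\|\,\sigma)-\tfrac{\alpha}{\alpha-1}\log_2(1-\epsilon)$, not $\tfrac{n\alpha}{\alpha-1}D_\alpha(\ldots)$; with your extra factor $\tfrac{\alpha}{\alpha-1}$ the bound would diverge as $\alpha\downarrow 1$ and the converse would not close. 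Also, the Nussbaum--Szko\l{}a distributions are primarily tailored to the symmetric (Chernoff) error exponent; while $D(P\,\|\,Q)=S(\rho\,\|\,\sigma)$ does hold, the claimed inequality between the classical and quantum hypothesis-testing divergences does not follow directly from that construction, so that ``cleaner alternative'' would need a different justification than the one you indicate.
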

Note that for every positive integers $n_1,n_2\in\mathbb{N}$, we have that $C_{(1)}^\epsilon\left(\mathcal{N}^{\otimes n_1}\right)\le C_{(1)}^\epsilon\left(\mathcal{N}^{\otimes (n_1+n_2)}\right)$. 
This means, for every $N\in\mathbb{N},\sigma_{S_{\rm in}^{\otimes k}N}$ with the fixed $k$, we have that (note that $\omega<\epsilon$, and the notation $[x]$ denotes the largest integer that is upper bounded by $x$; i.e., $[x]\le x<[x]+1$) 
\begin{align}
\varliminf_{n\to\infty}\frac{1}{n}C_{(1)}^\epsilon\left(\mathcal{N}^{\otimes n}\right)&\ge\varliminf_{n\to\infty}\frac{1}{\left(\left[\frac{n}{k}\right]+1\right)k}C_{(1)}^\epsilon\left(\mathcal{N}^{\otimes \left(\left[\frac{n}{k}\right]\times k\right)}\right)\nonumber\\
&\ge\varliminf_{n\to\infty}\frac{1}{\left(\left[\frac{n}{k}\right]+1\right)k}D_h^{\omega}\left[\left((\mathcal{N}^{\otimes k}\otimes\mathcal{I}_{N})\left(\sigma_{S_{\rm in}^{\otimes k}N}\right)\right)^{\otimes \left[\frac{n}{k}\right]}\,\middle\|\,\left(\mathcal{N}^{\otimes k}\left(\sigma_{S_{\rm in}^{\otimes k}}\right)\otimes\sigma_{N}\right)^{\otimes \left[\frac{n}{k}\right]}\right]\nonumber\\
&=\frac{1}{k}\varliminf_{l\to\infty}\frac{l}{l+1}\times\frac{1}{l}D_h^{\omega}\left[\left((\mathcal{N}^{\otimes k}\otimes\mathcal{I}_{N})\left(\sigma_{S_{\rm in}^{\otimes k}N}\right)\right)^{\otimes l}\,\middle\|\,\left(\mathcal{N}^{\otimes k}\left(\sigma_{S_{\rm in}^{\otimes k}}\right)\otimes\sigma_{N}\right)^{\otimes l}\right]\nonumber\\
&=\frac{1}{k}S\left[(\mathcal{N}^{\otimes k}\otimes\mathcal{I}_{N})\left(\sigma_{S_{\rm in}^{\otimes k}N}\right)\,\middle\|\,\mathcal{N}^{\otimes k}\left(\sigma_{S_{\rm in}^{\otimes k}}\right)\otimes\sigma_{N}\right].
\end{align}
Recall that for a real-valued sequence $\{a_n\}_{n=1}^\infty$ we have $\varliminf_{n\to\infty}a_n\coloneqq\lim_{n\to\infty}\inf_{m\ge n}a_m = \lim_{n\to\infty}\inf_{m\ge \beta n}a_m$ for every $\beta\in\mathbb{N}$ (i.e., if a sequence converges, which is now the sequence $\{\inf_{m\ge n}a_m\}_{n=1}^\infty$, then an infinite subsequence will also converge to the same limit), meaning that $\varliminf_{n\to\infty}a_{\left[\frac{n}{k}\right]} = \lim_{n\to\infty}\inf_{m\ge kn}a_{\left[\frac{m}{k}\right]} =\lim_{n\to\infty}\inf_{l\ge n}a_{l} = \varliminf_{n\to\infty}a_{n}$ and hence explaining the third line. 
The last line follows from Lemma~\ref{AppLemma:QStein}.
From here we conclude that, for every $0<\theta<\CY{1/2}$,
\begin{align}
C(\mathcal{N})\ge\varlimsup_{k\to\infty}\frac{1}{k}\max_{N,\sigma_{S_{\rm in}^{\otimes k}N}}S\left[(\mathcal{N}^{\otimes k}\otimes\mathcal{I}_{N})\left(\sigma_{S_{\rm in}^{\otimes k}N}\right)\,\middle\|\,\mathcal{N}^{\otimes k}\left(\sigma_{S_{\rm in}^{\otimes k}}\right)\otimes\sigma_{N}\right]=\varlimsup_{k\to\infty}\frac{1}{k}\chi\left(\mathcal{N}^{\otimes k}\right)\ge\varlimsup_{k\to\infty}\frac{1}{k}\overline{\chi}^\theta\left(\mathcal{N}^{\otimes k}\right).
\end{align}
Then, combining with Eq.~\eqref{AppEq:HSWUpperBound}, we have 
\begin{align}
\varlimsup_{k\to\infty}\frac{1}{k}\overline{\chi}^\theta\left(\mathcal{N}^{\otimes k}\right)\le\varlimsup_{k\to\infty}\frac{1}{k}\chi\left(\mathcal{N}^{\otimes k}\right)\le C(\mathcal{N})\le\varliminf_{k\to\infty}\frac{1}{k}\overline{\chi}^\theta\left(\mathcal{N}^{\otimes k}\right)\le \varliminf_{k\to\infty}\frac{1}{k}\chi\left(\mathcal{N}^{\otimes k}\right).
\end{align}
Since the limit inferior is upper bounded by the limit superior, we learn that all these quantities are equal to each other, which further implies the existences of $\lim_{k\to\infty}\frac{1}{k}\overline{\chi}^\theta\left(\mathcal{N}^{\otimes k}\right)$ and $\lim_{k\to\infty}\frac{1}{k}\chi\left(\mathcal{N}^{\otimes k}\right)$.
The proof is completed.
\end{proof}
\end{widetext}

\CY{\subsection{Ability to Generate Informational Non-equilibrium Cannot Enhance Asymptotic Classical Communication}\label{Sec:no-go}}
It is worth mentioning that the above result implies a {\em strong converse property} of $\overline{\chi}^\theta$ (see., e.g., Ref.~\cite{Takagi2020}).
To illustrate what it means, let us consider \CY{the following} figure-of-merits:
\begin{align}
&\overline{\chi}_{\Delta}(\mathcal{N})\coloneqq\sup_{0<\theta<\Delta}\lim_{k\to\infty}\frac{1}{k}\overline{\chi}^\theta\left(\mathcal{N}^{\otimes k}\right);\\
&\overline{\chi}_{\rm min}(\mathcal{N})\coloneqq\lim_{\theta\to0}\lim_{k\to\infty}\frac{1}{k}\overline{\chi}^\theta\left(\mathcal{N}^{\otimes k}\right).
\end{align}
Note that they have very different physical meanings.
$\overline{\chi}_{\Delta}(\mathcal{N})$ is the largest possible regularised Holevo information of $\mathcal{N}$ among all possible encoding and decoding schemes that will not \CY{generate informational non-equilibrium}
more than $2\Delta$.
Namely, it is the highest possible value within a given range of out-of-equilibrium effects.
On the other hand, $\overline{\chi}_{\rm min}(\mathcal{N})$ is the regularised Holevo information of $\mathcal{N}$ when we further request that asymptotically the encoding and decoding must have vanishing ability to \CY{generate informational non-equilibrium.}
Hence, intuitively, $\overline{\chi}_{\rm min}$ has a much stronger physical constraint, and hence, by construction, we have 
$
\overline{\chi}_{\rm min}\le\overline{\chi}_{\Delta}
$
for every $0<\Delta<\CY{1/2}$.
If for some $\Delta$ it is possible to have the converse inequality, namely, $\overline{\chi}_{\Delta}\le\overline{\chi}_{\rm min}$, we say that the {\em strong converse property} of $\overline{\chi}^\theta$ happens for this $\Delta$.
Physically, this means that when we restrict to out-of-equilibrium effect controlled by $\Delta$, the best performance in classical communication can happen when the task {\em cannot} 
\CY{generate informational non-equilibrium}
{\em asymptotically}.
\CY{As a corollary of Proposition~\ref{AppResult:HSWThermo},} we report the following strong converse property:\\

\begin{acorollary}\label{coro:strong converse property chi}
The strong converse property of $\overline{\chi}^\theta$ happens for every $0<\Delta<\CY{1/2}$.
\CY{Namely, for every channel $\mathcal{N}$, 
}
\begin{align}
\CY{\overline{\chi}_{\rm min}(\mathcal{N})=\overline{\chi}_{\Delta}(\mathcal{N})\quad\forall\;0<\Delta<\frac{1}{2}.}
\end{align}
\end{acorollary}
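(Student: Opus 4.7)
\medskip

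\noindent\textit{Proof proposal.} The plan is to observe that Proposition~\ref{AppResult:HSWThermo} already does essentially all of the work, and the strong converse property then follows from a one-line consequence of it. Specifically, Proposition~\ref{AppResult:HSWThermo} asserts that for \emph{every} $0<\theta<1/2$,
\begin{equation}
\lim_{k\to\infty}\frac{1}{k}\overline{\chi}^\theta\left(\mathcal{N}^{\otimes k}\right)=C(\mathcal{N}),
\end{equation}
which in particular means the regularised quantity on the left-hand side is independent of $\theta$ within the full interval $(0,1/2)$.

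With this in hand, I would first unpack $\overline{\chi}_\Delta$ using the definition and substitute the above identity: for every $0<\Delta<1/2$,
\begin{equation}
\overline{\chi}_\Delta(\mathcal{N})=\sup_{0<\theta<\Delta}\lim_{k\to\infty}\frac{1}{k}\overline{\chi}^\theta\left(\mathcal{N}^{\otimes k}\right)=\sup_{0<\theta<\Delta}C(\mathcal{N})=C(\mathcal{N}),
\end{equation}
where the supremum collapses because the argument no longer depends on $\theta$. Next, I would do the analogous computation for $\overline{\chi}_{\rm min}$: since $\lim_{k\to\infty}\frac{1}{k}\overline{\chi}^\theta(\mathcal{N}^{\otimes k})=C(\mathcal{N})$ for every $\theta$ in the prescribed range, the outer limit $\theta\to 0$ is a limit of a constant sequence, giving $\overline{\chi}_{\rm min}(\mathcal{N})=C(\mathcal{N})$. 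Combining both identities yields $\overline{\chi}_\Delta(\mathcal{N})=\overline{\chi}_{\rm min}(\mathcal{N})=C(\mathcal{N})$, which is the claim.

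There is really no obstacle to speak of, since Proposition~\ref{AppResult:HSWThermo} is doing the heavy lifting; the only subtlety is confirming that the quantifier ranges used in the definitions of $\overline{\chi}_\Delta$ and $\overline{\chi}_{\rm min}$ both sit inside the validity range $0<\theta<1/2$ of Proposition~\ref{AppResult:HSWThermo}, which is immediate from $0<\Delta<1/2$. The conceptual content of the corollary is thus an immediate unpacking of the thermodynamically constrained HSW theorem: once one knows that capping the encoder/decoder's ability to produce informational non-equilibrium by any $\theta\in(0,1/2)$ leaves the asymptotic Holevo rate unchanged, both $\overline{\chi}_\Delta$ and $\overline{\chi}_{\rm min}$ must coincide with $C(\mathcal{N})$, and hence with each other.
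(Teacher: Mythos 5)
Your proposal is correct and follows exactly the route the paper intends: Corollary~\ref{coro:strong converse property chi} is presented as an immediate consequence of Proposition~\ref{AppResult:HSWThermo}, whose $\theta$-independence of the regularised quantity collapses both the supremum over $0<\theta<\Delta$ and the limit $\theta\to0$ to the constant $C(\mathcal{N})$. Your check that the quantifier ranges lie within $(0,1/2)$ is the only point of care, and you handled it.
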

\CY{Physically, this finding provides a no-go result---the ability to generate informational non-equilibrium, no matter how strong it is, {\em cannot} enhance the asymptotic classical capacity.
This further suggests that the ability to {\em preserve} informational non-equilibrium is the resource more relevant to classical communication, which is consistent with the finding reported in Ref.~\cite{Stratton2023}.}

\CYtwo{
Finally, we provide an alternative form of the above no-go result, which is Eq.~(6) stated in the companion paper~\cite{Companion2}.
To start with, for $0\le\theta\le1/2$, define the following figure-of-merit in the one-shot regime:
\begin{eqnarray}
\begin{aligned}
C_{(1)|\text{$\theta$-equi}}^\epsilon(\mathcal{N})\coloneqq\max_{M,\Pi,\mathcal{K},\mathcal{L}}&\log_2M\\
{\rm s.t.}\quad&M\in\mathbb{N};\\
&\mathcal{K}\in{\rm CQ}_{M\to S_{\rm in|\Pi}};\\
&\mathcal{L}\in{\rm QC}_{S_{\rm out|\Pi}\to M};\\
&P_s(\mathcal{L}\circ\mathcal{N}\circ\mathcal{K})\ge1-\epsilon;\\
&\norm{\mathcal{L}\circ\mathcal{N}\circ\mathcal{K}\left(\frac{\id_M}{M}\right) - \frac{\id_M}{M}}_1\le2\theta.
\end{aligned}
\end{eqnarray}
By Fact~\ref{AppLemma:AlternativeCCapacity} [and Eq.~\eqref{Eq:useful formula C Theta}], this is $C_{(1)}^\epsilon$ subject to an additional thermodynamic constraint (i.e., the last line in the above equation).
Physically, it is the amount of information that $\mathcal{N}$ can transmit (in the one-shot regime) when encoding and decoding can generate informational non-equilibrium at most to the order $O(\theta)$.
To this reason, we call $C_{(1)|\text{$\theta$-equi}}^\epsilon$ the {\em one-shot $\theta$-equilibrium capacity}.
Then, we have the following bounds:}
\CYtwo{
\begin{aproposition}\label{prop:theta-equi bounds}
{\em (Bounding the One-Shot $\theta$-Equilibrium Capacity)}
Consider a channel $\mathcal{N}$ and a parameter \mbox{$0<\epsilon<(1-1/\sqrt{2})/2$}.
Then we have
\begin{align}
C_{(1)}^\epsilon(\mathcal{N})\le C_{(1)|\text{\rm $\theta$-equi}}^\epsilon(\mathcal{N})\le \frac{W_{\rm corr|\Theta_{\rm C}, (1)}^{2\epsilon}(\mathcal{N})}{k_BT\ln2}
\end{align}
for every $\epsilon\le\theta<1/2$.
\end{aproposition}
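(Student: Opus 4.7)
The plan is to prove the two inequalities independently. The first is an immediate consequence of Fact~\ref{Fact:A1}, while the second refines the upper-bound side of Theorem~\ref{AppThm:CC}'s proof and then invokes the extractable-work bound Eq.~\eqref{AppEq:App:Proof-Com}.

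For the first inequality, observe that $C_{(1)}^\epsilon(\mathcal{N})$ and $C_{(1)|\text{$\theta$-equi}}^\epsilon(\mathcal{N})$ (in the form of Fact~\ref{AppLemma:AlternativeCCapacity}) differ only by the thermodynamic constraint $\norm{\mathcal{L}\circ\mathcal{N}\circ\mathcal{K}(\id_M/M)-\id_M/M}_1 \le 2\theta$. Fact~\ref{Fact:A1} tells us that any tuple $(M,\mathcal{K},\mathcal{L})$ satisfying $P_s[\mathcal{L}\circ\mathcal{N}\circ\mathcal{K}] \ge 1-\epsilon$ automatically obeys the trace-norm bound with parameter $2\epsilon$, hence also with $2\theta$ whenever $\theta \ge \epsilon$. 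So every feasible tuple for $C_{(1)}^\epsilon$ is feasible for $C_{(1)|\text{$\theta$-equi}}^\epsilon$, yielding the first inequality (in fact, an equality).

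For the second inequality, I would mirror the upper-bound argument of Theorem~\ref{AppThm:CC} but keep the smoothing parameter at exactly $2\epsilon$ instead of $\epsilon+\delta$. Fix a feasible tuple $(M,\mathcal{K},\mathcal{L})$ for $C_{(1)|\text{$\theta$-equi}}^\epsilon(\mathcal{N})$ and define $q_{(n,m)}$ and $r_{(n,m)}$ as in Eqs.~\eqref{AppEq:qnm}--\eqref{AppEq:rnm}. The success hypothesis gives $\sum_n q_{(n,n)} \ge 1-\epsilon$ and a direct computation gives $\sum_n r_{(n,n)} = 1/M$. Choosing the diagonal index set $\Lambda = \{(n,n)\}_{n=0}^{M-1}$ in Eq.~\eqref{AppEq:D0epsilon} (which is admissible because $1-\epsilon > 1-2\epsilon$ whenever $\epsilon>0$) yields
\begin{align*}
\log_2 M \le D_0^{2\epsilon}\bigl[(\mathcal{L}\circ\mathcal{N}\circ\mathcal{K}\otimes\mathcal{I}_{M'})(\Phi_{MM'})\,\big\|\,(\mathcal{L}\circ\mathcal{N}\circ\mathcal{K})(\id_M/M)\otimes\id_{M'}/M\bigr].
\end{align*}
Both arguments of $D_0^{2\epsilon}$ are diagonal in $\{\ket{n}_M\otimes\ket{m}_{M'}\}$, so Eq.~\eqref{AppEq:App:Proof-Com} applies---the hypothesis $\epsilon < (1-1/\sqrt{2})/2$ is exactly what guarantees $2\epsilon \le 1-1/\sqrt{2}$---and upgrades the inequality to $(k_BT\ln 2)\log_2 M \le W_{\rm corr,(1)}^{2\epsilon}\bigl((\mathcal{L}\circ\mathcal{N}\circ\mathcal{K}\otimes\mathcal{I}_{M'})(\Phi_{MM'})\bigr)$. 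Taking the supremum over $(M,\mathcal{K},\mathcal{L})$ and recognizing that these tuples parametrize $\Theta_{\rm C,M}$, with $\Phi_{MM'}$ being one of the classical bipartite inputs $\eta_{MM'}$ admitted in the supremum defining $W_{\rm corr|\Theta_{\rm C},(1)}^{2\epsilon}(\mathcal{N})$, yields the claimed bound.

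The main subtlety to manage is the strict inequality in the definition of $D_0^{2\epsilon}$: the hypothesis only gives success probability $\ge 1-\epsilon$, whereas Eq.~\eqref{AppEq:D0epsilon} requires $>1-2\epsilon$, and this is precisely why the doubling from $\epsilon$ to $2\epsilon$ is unavoidable but also sufficient (no further $\delta$-smoothing is needed). Beyond that, the thermodynamic $\theta$-constraint plays no explicit role in the bounding chain, which is consistent with the first inequality already being an equality.
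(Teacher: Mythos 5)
Your proposal is correct and follows essentially the same route as the paper: the first inequality via Fact~\ref{Fact:A1} (the feasibility argument behind Eq.~\eqref{Eq:max01}), and the second by rerunning the upper-bound chain of Theorem~\ref{AppThm:CC} with smoothing parameter $2\epsilon$, then invoking Eq.~\eqref{AppEq:App:Proof-Com} (valid since $2\epsilon<1-1/\sqrt{2}$) and Eq.~\eqref{Eq: one-shot transmitted energy}. Your added remarks --- that the first inequality is in fact an equality, and that the diagonal index set is admissible because $1-\epsilon>1-2\epsilon$ --- are accurate refinements of the same argument.
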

\begin{proof}
Using Eq.~\eqref{Eq:max01} and the fact that $\epsilon\le\theta$, one can directly see $C_{(1)}^\epsilon(\mathcal{N})\le C_{(1)|\text{$\theta$-equi}}^\epsilon(\mathcal{N})$.
Following the same argument that proves Eqs.~\eqref{Eq:Computation05} and~\eqref{Eq:Computation06}, one can show that $C_{(1)|\text{$\theta$-equi}}^\epsilon(\mathcal{N})$ is upper bounded by
\begin{widetext}
\begin{eqnarray}
\begin{aligned}
\max_{M,\mathcal{K},\mathcal{L}}\quad&D_0^{2\epsilon}\left[\left((\mathcal{L}\circ\mathcal{N}\circ\mathcal{K})\otimes\mathcal{I}_{M'}\right)\left(\Phi_{MM'}\right)\,\middle\|\,(\mathcal{L}\circ\mathcal{N}\circ\mathcal{K})\left(\frac{\id_{M}}{M}\right)\otimes\frac{\id_{M'}}{M}\right]\\
{\rm s.t.}\quad&\norm{\mathcal{L}\circ\mathcal{N}\circ\mathcal{K}\left(\frac{\id_M}{M}\right) - \frac{\id_M}{M}}_1\le2\theta;\quad\mathcal{K}\in{\rm CQ}_{M\to S_{\rm in|\mathcal{N}}};\quad\mathcal{L}\in{\rm QC}_{S_{\rm out|\mathcal{N}}\to M};\quad M\in\mathbb{N}.
\end{aligned}
\end{eqnarray}
\end{widetext}
Since $0<2\epsilon<1-1/\sqrt{2}$, we can use Eq.~\eqref{AppEq:App:Proof-Com} to further upper bound it by
\begin{eqnarray}
\begin{aligned}
\max_{M,\mathcal{K},\mathcal{L}}\quad&\frac{W_{\rm corr,(1)}^{2\epsilon}\left[\left((\mathcal{L}\circ\mathcal{N}\circ\mathcal{K})\otimes\mathcal{I}_{M'}\right)\left(\Phi_{MM'}\right)\right]}{k_BT\ln2}\\
{\rm s.t.}\quad&\norm{\mathcal{L}\circ\mathcal{N}\circ\mathcal{K}\left(\frac{\id_M}{M}\right) - \frac{\id_M}{M}}_1\le2\theta;\\
&\mathcal{K}\in{\rm CQ}_{M\to S_{\rm in|\mathcal{N}}};\quad\mathcal{L}\in{\rm QC}_{S_{\rm out|\mathcal{N}}\to M};\quad M\in\mathbb{N}.
\end{aligned}
\end{eqnarray}
This is further upper bounded by $W_{\rm corr|\Theta_{\rm C},(1)}^{2\epsilon}(\mathcal{N})/(k_BT\ln2)$ due to Eq.~\eqref{Eq: one-shot transmitted energy}.
\end{proof}
}

\CYtwo{
We can now discuss the strong converse property with Proposition~\ref{prop:theta-equi bounds}.
To this end, let us define the following figure-of-merits in the asymptotic regime.
First, 
\begin{align}\label{Eq: theta-equi iid}
C_{\text{\rm$\theta$-equi}}(\mathcal{N})\coloneqq\lim_{\epsilon\to0}\varliminf_{k\to\infty}\frac{1}{k}C_{(1)|\text{\rm $\theta$-equi}}^\epsilon\left(\mathcal{N}^{\otimes k}\right)
\end{align}
is the asymptotic form of $C_{(1)|\text{\rm $\theta$-equi}}^\epsilon(\mathcal{N})$.
Furthermore,
\begin{align}
C_{\rm max}(\mathcal{N})&\coloneqq\sup_{0<\theta<1/2}C_{\text{\rm$\theta$-equi}}(\mathcal{N});\\
C_{\rm min}(\mathcal{N})&\coloneqq\lim_{\theta\to0}C_{\text{\rm$\theta$-equi}}(\mathcal{N}).
\end{align}
Again, these two asymptotic quantities have very different physical meanings.
$C_{\rm max}(\mathcal{N})$ is the optimal asymptotic information transmission {\em without} any constraint on the ability to generate informational non-equilibrium. 
On the other hand, $C_{\rm min}(\mathcal{N})$ is the one requiring {\em no ability to asymptotically} generate any informational non-equilibrium. 
Consequently, by definitions, we have $C_{\rm min}(\mathcal{N})\le C_{\rm max}(\mathcal{N})$.
When the opposite inequality also holds, it can be viewed, again, as a strong converse property of $C_{\text{\rm$\theta$-equi}}$.
Now, consider a given $0<\theta<1/2$.
When $\epsilon$ is small enough, using Proposition~\ref{prop:theta-equi bounds} and Theorem~\ref{Result:TCTCI} implies, for {\em every} $k\in\mathbb{N}$, 
\begin{align}
C_{(1)}^\epsilon\left(\mathcal{N}^{\otimes k}\right)\le C_{(1)|\text{\rm $\theta$-equi}}^\epsilon\left(\mathcal{N}^{\otimes k}\right)\le C_{(1)}^{3\epsilon}\left(\mathcal{N}\right) + O(\log_2\epsilon),
\end{align}
where $O(\log_2\epsilon)$ is an one-shot error term independent of the copy number $k$.
Dividing everything by $k$, letting \mbox{$k\to\infty$}, and then 
setting $\epsilon\to0$ give [see also Eqs.~\eqref{Eq: one-shot error meaning},~\eqref{Eq:Asymptotic C def} and~\eqref{Eq: theta-equi iid}]
\begin{align}
C(\mathcal{N}) = C_{\text{\rm$\theta$-equi}}(\mathcal{N})\quad\forall\,0<\theta<1/2.
\end{align}
We thus just prove the following result:
\begin{acorollary}\label{coro:no-go}
For every channel $\mathcal{N}$, we have that
\begin{align}
C_{\rm min}(\mathcal{N})=C_{\rm max}(\mathcal{N})=C(\mathcal{N}).
\end{align}
\end{acorollary}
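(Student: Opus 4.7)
The plan is to show that $C_{\text{$\theta$-equi}}(\mathcal{N}) = C(\mathcal{N})$ for \emph{every} $0<\theta<1/2$, from which the three-way equality follows immediately: taking the supremum over $\theta$ (respectively, the limit $\theta\to0$) of a constant function in $\theta$ gives that same constant, so both $C_{\rm max}(\mathcal{N})$ and $C_{\rm min}(\mathcal{N})$ collapse to $C(\mathcal{N})$. Before any estimation, the trivial direction is already in hand, since dropping the thermodynamic constraint can only enlarge the feasible set, giving $C_{(1)}^\epsilon(\mathcal{N}^{\otimes k})\le C_{(1)|\text{$\theta$-equi}}^\epsilon(\mathcal{N}^{\otimes k})$ and hence, after regularisation and the $\epsilon\to0$ limit, $C(\mathcal{N})\le C_{\text{$\theta$-equi}}(\mathcal{N})$.

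For the non-trivial direction, I would apply Proposition~\ref{prop:theta-equi bounds} to the tensor power $\mathcal{N}^{\otimes k}$ (which is itself a single channel, so nothing about the single-letter nature of the statement is being invoked) to obtain
\begin{align}
C_{(1)|\text{$\theta$-equi}}^\epsilon\!\left(\mathcal{N}^{\otimes k}\right)\le \frac{W_{\rm corr|\Theta_{\rm C},(1)}^{2\epsilon}\!\left(\mathcal{N}^{\otimes k}\right)}{k_BT\ln 2}
\end{align}
for all sufficiently small $\epsilon$ and all $\theta\ge\epsilon$. Then I would feed this into Theorem~\ref{Result:TCTCI} (with $\Theta = \Theta_{\rm C}$), rearranged so that the genuinely transmitted energy $W_{\rm corr|\Theta_{\rm C},(1)}^{2\epsilon}$ is upper bounded by $(k_BT\ln 2)\,C_{\Theta_{\rm C},(1)}^{3\epsilon} = (k_BT\ln 2)\,C_{(1)}^{3\epsilon}$ up to an additive one-shot error term of the form $k_BT\ln[4\cdot 3\epsilon/(3\epsilon-2\epsilon)^2(1-2\epsilon)]=O(\log_2\epsilon)$ that is \emph{independent of $k$}. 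Concatenating the two bounds yields
\begin{align}
C_{(1)}^\epsilon\!\left(\mathcal{N}^{\otimes k}\right)\le C_{(1)|\text{$\theta$-equi}}^\epsilon\!\left(\mathcal{N}^{\otimes k}\right)\le C_{(1)}^{3\epsilon}\!\left(\mathcal{N}^{\otimes k}\right)+O(\log_2\epsilon).
\end{align}

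Dividing by $k$ and taking $\varliminf_{k\to\infty}$ kills the $k$-independent error term; taking $\epsilon\to0$ afterwards squeezes the middle quantity between $C(\mathcal{N})$ and $C(\mathcal{N})$ by the definition of the asymptotic classical capacity in Theorem~\ref{AppThm:HSW}, giving $C_{\text{$\theta$-equi}}(\mathcal{N}) = C(\mathcal{N})$ for every $0<\theta<1/2$. The conclusion for $C_{\rm min}$ and $C_{\rm max}$ is then immediate from their definitions.

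The main obstacle is conceptual rather than technical: one must verify that the one-shot error term produced by Theorem~\ref{Result:TCTCI} really is independent of the copy number $k$, so that it is legitimately washed out by the $k\to\infty$ normalisation before the $\epsilon\to0$ limit is taken. This is already the content of the discussion following Theorem~\ref{Result:TCTCI} (see Eq.~\eqref{Eq: one-shot error meaning}), so no further work is needed there; the rest is routine bookkeeping of the order-of-limits, in particular the fact that the bound on $C_{(1)|\text{$\theta$-equi}}^\epsilon$ in Proposition~\ref{prop:theta-equi bounds} holds uniformly in $\theta\in[\epsilon,1/2)$, which is what makes the supremum over $\theta$ harmless.
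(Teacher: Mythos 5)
Your proposal is correct and follows essentially the same route as the paper: the trivial inclusion for the lower bound, Proposition~\ref{prop:theta-equi bounds} applied to $\mathcal{N}^{\otimes k}$, Theorem~\ref{Result:TCTCI} (with $\Theta=\Theta_{\rm C}$, $\omega=2\epsilon$, error $3\epsilon$) to return to $C_{(1)}^{3\epsilon}(\mathcal{N}^{\otimes k})$ up to a $k$-independent $O(\log_2\epsilon)$ term, and then the $k\to\infty$ followed by $\epsilon\to0$ squeeze giving $C_{\text{$\theta$-equi}}(\mathcal{N})=C(\mathcal{N})$ for every $0<\theta<1/2$. Your version even corrects an apparent typo in the paper's displayed chain, where the right-hand side should read $C_{(1)}^{3\epsilon}(\mathcal{N}^{\otimes k})$ rather than $C_{(1)}^{3\epsilon}(\mathcal{N})$.
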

Namely, the ability to generate informational non-equilibrium is {\em not useful} for transmitting classical information in the asymptotic regime.
}

\section{Conclusion}\label{Sec:Conclusion}
\CY{This work aims to uncover the link between transmitting information and energy.
By utilising entropic quantities [Eqs.~\eqref{AppEq:D0epsilon} and~\eqref{AppEq:HypothesisTesting}] to characterise one-shot information transmission (Theorems~\ref{AppThm:MainResult} and~\ref{AppThm:CC}) and energy transmission [Eqs.~\eqref{AppEq:App:Proof-Com} and~\eqref{Eq: one-shot transmitted energy}], we can connect these two seemingly unrelated aspects of quantum dynamics in the one-shot regime (Theorem~\ref{Result:TCTCI}). 
Interestingly, our results further uncover a dynamical version of Landauer's principle (Corollaries~\ref{coro:weak dynamical Landauer} and~\ref{coro:strong dynamical Landauer}), a thermodynamic meaning of the Holevo-Schumacher-Westmoreland theorem (Proposition~\ref{AppResult:HSWThermo}), and a series of no-go results (Corollaries~\ref{coro:strong converse property chi} and~\ref{coro:no-go}).
See also the companion paper~\cite{Companion2} for further discussions.}

\CY{Several questions remain open, and we list a few here.
First, can we apply a similar approach to a state's carriable amount of classical information under thermodynamic constraints~\cite{Narasimhachar2019,Korzekwa2019,Biswas2021}?
Second, inspired by the recent works Ref.~\cite{Shu2019,Rubino2024,Hsieh2024-3}, can we simplify the energy transmission tasks to become more robust to experimental noise and practical imperfection? 
Third, can we utilise a similar approach to uncover the thermodynamic interpretation of other types of information processing tasks such as stochastic distillation via post-selection~\cite{Hsieh2023,Ku2023,Ku2022,Hsieh2024,Hsieh2024-2,Regula2022Quantum,Regula2022PRL,Takagi2024PRA,Yuan2024PRL,Ji2024}, device-independent tasks~\cite{Chen2024PRL,Hsieh2024,Hsieh2024-2}, and exclusion tasks~\cite{Ducuara2022,Ducuara2023,Ducuara2023-2,Hsieh2023-2,Stratton2024}?
Finally, can we apply our approach to uncover similar links between transmitting information and the quantum system's symmetrical properties~\cite{Cavina2024PRL}, temperature~\cite{Lipka-Bartosik2023PRL}, anomalous energy flow~\cite{Hsieh2023IP,Lipka-Bartosik2024PRL}, (unspeakable) coherence~\cite{Shiraishi2024PRL}, and (different notions of) incompatibility~\cite{Hsieh2022PRR,Hsieh2024Quantum,Hsieh2024,Buscemi2020PRL,Ji2024PRXQ,Haapasalo2021,Mitra2023PRA,Heinosaari2016,Heinosaari2014,Mitra2022PRA}?
We leave these open questions for future research, and
we hope our results can initiate people's interest in the interplay of communication and thermodynamics.}

\appendix

\CYtwo{
\section{Mathematical Properties of $D_0^\epsilon$}\label{App}
As the entropic quantity $D_0^\epsilon$ [defined in Eq.~\eqref{AppEq:D0epsilon}] plays a crucial role in this work,
we
thoroughly discuss its mathematical properties here.}
\CYtwo{
To begin with, consider two states $\rho,\sigma$ whose supports have a non-vanishing intersection. 
Their {\em min-relative entropy}~\cite{Datta2009} is defined as
\begin{align}
D_{\rm min}(\rho\,\|\,\sigma)\coloneqq\log_2\frac{1}{{\rm tr}\left(\sigma\Pi_\rho\right)},
\end{align}
where $\Pi_\rho$ is the projector onto $\rho$'s support.
It is related to $D_0^\epsilon$ as follows:
\begin{afact}\label{App: relation with Dmin}
Consider two commuting states $\eta,\xi$. 
Then 
\begin{align}
D_0^\epsilon(\eta\,\|\,\xi) = D_{\rm min}(\eta\,\|\,\xi) \quad\forall\,0<\epsilon\le\mu_{\rm min}(\eta),
\end{align}
where $\mu_{\rm min}(\eta)$ is $\eta$'s smallest strictly positive eigenvalue.
\end{afact}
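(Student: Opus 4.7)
The plan is to exploit the fact that when $\eta$ and $\xi$ commute they share a common eigenbasis, turning the definition of $D_0^\epsilon$ into a purely combinatorial optimization over index subsets. Writing $\eta=\sum_j q_j\proj{j}$ and $\xi=\sum_j r_j\proj{j}$ in this shared basis, the min-relative entropy is simply $D_{\rm min}(\eta\,\|\,\xi)=\log_2\bigl(1/\sum_{j\in\Lambda^*}r_j\bigr)$, where $\Lambda^*\coloneqq\{j:q_j>0\}$ labels the support of $\eta$. So the goal reduces to showing that, within the allowed range of $\epsilon$, the maximizer in Eq.~\eqref{AppEq:D0epsilon} is precisely $\Lambda=\Lambda^*$.

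First, I would verify feasibility and the trivial inequality. Since $\sum_{j\in\Lambda^*}q_j=1>1-\epsilon$, the subset $\Lambda^*$ is a feasible candidate in the maximization defining $D_0^\epsilon$, and it contributes the value $\log_2(1/\sum_{j\in\Lambda^*}r_j)=D_{\rm min}(\eta\,\|\,\xi)$. Hence $D_0^\epsilon(\eta\,\|\,\xi)\ge D_{\rm min}(\eta\,\|\,\xi)$ automatically, for any $\epsilon>0$.

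The substantive direction is the reverse inequality, and the key observation is that the hypothesis $\epsilon\le\mu_{\min}(\eta)$ prevents any feasible $\Lambda$ from dropping an index inside $\Lambda^*$. Concretely, let $\Lambda$ be any feasible set, and set $\Lambda'\coloneqq\Lambda\cap\Lambda^*$; the indices outside $\Lambda^*$ carry $q_j=0$, so $\sum_{j\in\Lambda'}q_j=\sum_{j\in\Lambda}q_j>1-\epsilon$. If $\Lambda'$ were strictly smaller than $\Lambda^*$, there would exist some $j_0\in\Lambda^*\setminus\Lambda'$ with $q_{j_0}\ge\mu_{\min}(\eta)\ge\epsilon$, giving the contradiction $\sum_{j\in\Lambda'}q_j\le1-q_{j_0}\le1-\epsilon$. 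Thus $\Lambda\supseteq\Lambda^*$ for every feasible $\Lambda$, which immediately yields $\sum_{j\in\Lambda}r_j\ge\sum_{j\in\Lambda^*}r_j$ and therefore $\log_2(1/\sum_{j\in\Lambda}r_j)\le D_{\rm min}(\eta\,\|\,\xi)$. Taking the supremum over $\Lambda$ closes the two inequalities and proves equality.

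The argument is short and I do not anticipate a genuine obstacle; the only subtlety to be careful about is the use of the strict inequality $\sum_{j\in\Lambda}q_j>1-\epsilon$ in Eq.~\eqref{AppEq:D0epsilon}, which is exactly what lets the bound $q_{j_0}\ge\epsilon$ produce a contradiction rather than a borderline case. It is also worth noting (for the reader's intuition) that this is the reason the condition is $\epsilon\le\mu_{\min}(\eta)$ rather than $\epsilon<\mu_{\min}(\eta)$: the strictness sits in the constraint defining the feasible set, not in the bound on $\epsilon$.
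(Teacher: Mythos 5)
Your proposal is correct and follows essentially the same route as the paper's own proof: both identify the feasible sets $\Lambda$ as exactly those containing the support $\Lambda^*=\{j:q_j>0\}$ of $\eta$ (using that dropping any index of the support would remove at least $\mu_{\rm min}(\eta)\ge\epsilon$ of probability mass, violating the strict constraint $\sum_{j\in\Lambda}q_j>1-\epsilon$), and then observe that the optimum is attained at $\Lambda=\Lambda^*$, recovering $D_{\rm min}$. Your closing remark about why the strictness of the feasibility constraint permits the non-strict bound $\epsilon\le\mu_{\rm min}(\eta)$ is a valid and worthwhile clarification of the same mechanism the paper uses.
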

\begin{proof}
Since $\eta$ and $\xi$ are commuting, we can write \mbox{$\eta=\sum_jq_j\proj{j}$} and \mbox{$\xi=\sum_jr_j\proj{j}$} for some common eigenbasis $\{\ket{j}\}_j$.
Let $\Lambda_\eta\coloneqq\{j\,|\,q_j>0\}$.
Then, the projector onto $\eta$'s support reads $\Pi_\eta = \sum_{j\in\Lambda_\eta}\proj{j}$.
Now, if $\Lambda$ satisfies \mbox{$\sum_{j\in\Lambda}q_j>1-\epsilon$} for some \mbox{$0<\epsilon\le\mu_{\rm min}(\eta)$}, then 
\mbox{$
\sum_{j\in\Lambda}q_j>1-\mu_{\rm min}(\eta)
$.}
This implies $\sum_{j\in\Lambda}q_j=1$ [if not, $\eta$ would have at least one strictly positive eigenvalue that is strictly smaller than $\mu_{\rm min}(\eta)$, a contradiction]. 
Hence, we have \mbox{$\Lambda_\eta\subseteq\Lambda$.}
Also, $\Lambda_\eta\subseteq\Lambda$ implies \mbox{$\sum_{j\in\Lambda}q_j=1>1-\epsilon$} \mbox{$\forall\,0<\epsilon\le1$}.
Hence, for every $0<\epsilon\le\mu_{\rm min}(\eta)$, 
\begin{align}
\text{$\Lambda$ satisfies \mbox{$\sum_{j\in\Lambda}q_j>1-\epsilon$}\quad if and only if\quad$\Lambda_\eta\subseteq\Lambda$.}
\end{align}
We thus conclude, for every $0<\epsilon\le\mu_{\rm min}(\eta)$, 
\begin{align}
&D_0^\epsilon(\eta\,\|\,\xi)\coloneqq\max_{\Lambda:\sum_{j\in\Lambda}q_j>1-\epsilon}\log_2\frac{1}{\sum_{j\in\Lambda}r_j}=\max_{\Lambda_\eta\subseteq\Lambda}\log_2\frac{1}{\sum_{j\in\Lambda}r_j}\nonumber\\
&\quad=\log_2\frac{1}{\sum_{j\in\Lambda_\eta}r_j}=\log_2\frac{1}{{\rm tr}\left(\xi\Pi_\eta\right)}=D_{\rm min}(\eta\,\|\,\xi),
\end{align}
which completes the proof.
\end{proof}
Consequently, as introduced in Ref.~\cite{Aberg2013}, $D_0^\epsilon$ can be viewed as a generalisation of the min-relative entropy $D_{\rm min}$.
As a direct corollary, this also means that slightly perturbing the error parameter $\epsilon$ will not change the value of $D_0^\epsilon$:
\begin{afact}\label{App: perturbation lemma}
Consider two commuting states $\eta,\xi$. 
Then
\begin{align}
D_0^\epsilon(\eta\,\|\,\xi) = D_0^{\delta}(\eta\,\|\,\xi) \quad\forall\,\epsilon,\delta\in(0,\mu_{\rm min}(\eta)].
\end{align}
\end{afact}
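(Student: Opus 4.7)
The statement is essentially a direct corollary of the preceding Fact~\ref{App: relation with Dmin}, so the proof plan is very short. The plan is to simply invoke Fact~\ref{App: relation with Dmin} twice (once for $\epsilon$ and once for $\delta$) and conclude by transitivity, since both quantities collapse to the same min-relative entropy $D_{\rm min}(\eta\,\|\,\xi)$, which does not depend on the smoothing parameter.

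More concretely, first I would fix $\epsilon, \delta \in (0, \mu_{\rm min}(\eta)]$. Since $\eta$ and $\xi$ commute, Fact~\ref{App: relation with Dmin} applies and yields
\begin{align}
D_0^\epsilon(\eta\,\|\,\xi) = D_{\rm min}(\eta\,\|\,\xi) = D_0^\delta(\eta\,\|\,\xi),
\end{align}
which is precisely the desired equality. The underlying reason, already exposed in the proof of Fact~\ref{App: relation with Dmin}, is that the strict inequality $\sum_{j\in\Lambda}q_j > 1-\epsilon$ with $\epsilon \le \mu_{\rm min}(\eta)$ forces $\sum_{j\in\Lambda} q_j = 1$ (otherwise one would obtain a strictly positive eigenvalue of $\eta$ strictly smaller than $\mu_{\rm min}(\eta)$), so the feasible set in the definition~\eqref{AppEq:D0epsilon} is the \emph{same} family $\{\Lambda : \Lambda_\eta \subseteq \Lambda\}$ for every error value in the window $(0, \mu_{\rm min}(\eta)]$.

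There is essentially no obstacle here; the only thing worth checking carefully is that the well-definedness conditions of Fact~\ref{App: relation with Dmin} are met for both $\epsilon$ and $\delta$, which is immediate from the assumption $\epsilon,\delta \in (0,\mu_{\rm min}(\eta)]$, and that $\mu_{\rm min}(\eta)>0$ so that the interval is non-empty (which is implicit since $\eta$ is a state). No smoothness or continuity argument is needed: the equality is \emph{exact} rather than asymptotic, and it holds uniformly in the window.
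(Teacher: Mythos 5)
Your proposal is correct and matches the paper's intent exactly: the paper states this fact as ``a direct corollary'' of Fact~\ref{App: relation with Dmin}, i.e.\ both $D_0^\epsilon(\eta\,\|\,\xi)$ and $D_0^\delta(\eta\,\|\,\xi)$ collapse to $D_{\rm min}(\eta\,\|\,\xi)$ for any error parameter in $(0,\mu_{\rm min}(\eta)]$. Your additional remark about the feasible set being exactly $\{\Lambda : \Lambda_\eta\subseteq\Lambda\}$ throughout this window is the same observation made in the paper's proof of Fact~\ref{App: relation with Dmin}, so nothing is missing.
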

}

\CYtwo{
Another direct consequence of Fact~\ref{App: relation with Dmin} is the data-processing inequality of $D_0^\epsilon$:
\begin{afact}\label{App: data-processing ineq D_0}
Consider two commuting states $\eta,\xi$ and a quantum-to-classical channel $\mathcal{L}$.
Then we have
\begin{align}
D_0^\epsilon[\mathcal{L}(\eta)\,\|\,\mathcal{L}(\xi)]\le D_0^\epsilon(\eta\,\|\,\xi)
\end{align}
for every
$
0<\epsilon\le\min\{\mu_{\rm min}(\eta);\mu_{\rm min}[\mathcal{L}(\eta)]\}, 
$
\end{afact}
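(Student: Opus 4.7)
My plan is to reduce the statement to the standard data-processing inequality for the min-relative entropy \(D_{\rm min}\), and then invoke Fact~\ref{App: relation with Dmin} on both sides. First, because \(\mathcal{L}\) is quantum-to-classical [Eq.~\eqref{AppEq:QC}], both \(\mathcal{L}(\eta)\) and \(\mathcal{L}(\xi)\) are diagonal in a common computational basis and therefore commute, so \(D_0^\epsilon[\mathcal{L}(\eta)\,\|\,\mathcal{L}(\xi)]\) is well-defined via Eq.~\eqref{AppEq:D0epsilon}. The hypothesis \(0<\epsilon\le\min\{\mu_{\rm min}(\eta);\mu_{\rm min}[\mathcal{L}(\eta)]\}\) is exactly what is needed to apply Fact~\ref{App: relation with Dmin} on both sides, converting each \(D_0^\epsilon\) into the corresponding \(D_{\rm min}\).

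Second, after this conversion, the claim becomes \(D_{\rm min}[\mathcal{L}(\eta)\,\|\,\mathcal{L}(\xi)]\le D_{\rm min}(\eta\,\|\,\xi)\), which by definition is equivalent to the operator-trace inequality \({\rm tr}[\mathcal{L}(\xi)\Pi_{\mathcal{L}(\eta)}]\ge{\rm tr}[\xi\Pi_\eta]\). Moving to the Heisenberg picture, let \(\mathcal{L}^*\) denote \(\mathcal{L}\)'s adjoint (positive and unital), so that the left-hand side equals \({\rm tr}[\xi\,\mathcal{L}^*(\Pi_{\mathcal{L}(\eta)})]\). It therefore suffices to establish the operator inequality \(\mathcal{L}^*(\Pi_{\mathcal{L}(\eta)})\ge\Pi_\eta\), after which tracing against \(\xi\ge0\) delivers the desired scalar inequality.

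Third, to establish this operator inequality, I would combine two observations: (i) \(0\le\mathcal{L}^*(\Pi_{\mathcal{L}(\eta)})\le\mathcal{L}^*(\mathbb{I})=\mathbb{I}\) from positivity and unitality; (ii) \({\rm tr}[\eta\,\mathcal{L}^*(\Pi_{\mathcal{L}(\eta)})]={\rm tr}[\mathcal{L}(\eta)\Pi_{\mathcal{L}(\eta)}]=1\). Together, (i) and (ii) force \(\mathcal{L}^*(\Pi_{\mathcal{L}(\eta)})\) to act as the identity on every eigenvector of \(\eta\) in its support---any operator bounded above by \(\mathbb{I}\) whose expectation on a state equals one must fix that state's support as a \(+1\) eigenspace---and then decomposing an arbitrary test vector into components parallel and perpendicular to \(\mathrm{supp}(\eta)\) yields \(\mathcal{L}^*(\Pi_{\mathcal{L}(\eta)})\ge\Pi_\eta\). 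No genuine obstacle is expected: the proof is a short chaining of Fact~\ref{App: relation with Dmin} with the (standard) data-processing inequality for \(D_{\rm min}\), and the \(\epsilon\)-smoothing plays no role beyond permitting the initial reduction.
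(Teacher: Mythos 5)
Your proposal is correct and follows essentially the same route as the paper: both reduce the claim to the data-processing inequality for $D_{\rm min}$ by applying Fact~\ref{App: relation with Dmin} on each side, which is exactly what the hypothesis $0<\epsilon\le\min\{\mu_{\rm min}(\eta);\mu_{\rm min}[\mathcal{L}(\eta)]\}$ is designed to permit. The only difference is that the paper simply cites the standard $D_{\rm min}$ data-processing inequality, whereas you supply a self-contained proof of it via the Heisenberg-picture operator inequality $\mathcal{L}^*(\Pi_{\mathcal{L}(\eta)})\ge\Pi_\eta$, which is a valid (and in fact fully general, channel-independent) argument.
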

\begin{proof}
Using Fact~\ref{App: relation with Dmin}, we have, for any $\epsilon$ in the given range,
\begin{align}
D_0^\epsilon[\mathcal{L}(\eta)\,\|\,\mathcal{L}(\xi)] &= D_{\rm min}[\mathcal{L}(\eta)\,\|\,\mathcal{L}(\xi)]\nonumber\\
&\le D_{\rm min}(\eta\,\|\,\xi) = D_0^\epsilon(\eta\,\|\,\xi),
\end{align}
where we have used the data-processing inequality of the min-relative entropy $D_{\rm min}$~\cite{Datta2013}.
\end{proof}
}

\CYtwo{
Now, we discuss the smoothness and continuity of $D_0^\epsilon$.
First, it is ``smoothed'' in the following sense:

\begin{afact}\label{App:smooth D_0}
Consider three commuting states $\eta,\chi,\xi$ and a given parameter $0<\epsilon<1/2$.
Suppose $\norm{\eta-\chi}_1<\epsilon$.
Then
\begin{align}
D_0^{\epsilon-\norm{\eta-\chi}_1}(\eta\,\|\,\xi)\le D_0^{\epsilon}(\chi\,\|\,\xi)\le D_0^{\epsilon+\norm{\eta-\chi}_1}(\eta\,\|\,\xi).
\end{align}
When $\epsilon<\mu_{\rm min}(\eta)/2$, both equalities can be achieved as
\begin{align}
D_{0}^\epsilon(\eta\,\|\,\xi)=D_0^{\epsilon\pm\norm{\eta-\chi}_1}(\eta\,\|\,\xi)=D_0^{\epsilon}(\chi\,\|\,\xi).
\end{align}
\end{afact}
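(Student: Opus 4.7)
The plan is to work in a common eigenbasis of $\eta$, $\chi$, $\xi$, which exists because all three states are assumed to commute. Writing $\eta = \sum_j q_j\proj{j}$, $\chi = \sum_j \tilde{q}_j\proj{j}$, $\xi = \sum_j r_j\proj{j}$, the trace norm collapses to $\norm{\eta-\chi}_1 = \sum_j |q_j-\tilde{q}_j|$, and the definition in Eq.~\eqref{AppEq:D0epsilon} specializes to a maximization over the same family of index subsets $\Lambda$ for either $\eta$ or $\chi$, with the objective $\log_2(1/\sum_{j\in\Lambda}r_j)$ depending only on $\xi$ and $\Lambda$. The entire argument then reduces to transferring feasibility of $\Lambda$ between the two optimizations.

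For the upper bound $D_0^\epsilon(\chi\,\|\,\xi)\le D_0^{\epsilon+\norm{\eta-\chi}_1}(\eta\,\|\,\xi)$, I would take any $\Lambda$ with $\sum_{j\in\Lambda}\tilde{q}_j>1-\epsilon$ and apply the triangle inequality in the form $\bigl|\sum_{j\in\Lambda}(q_j-\tilde{q}_j)\bigr|\le\sum_j|q_j-\tilde{q}_j|=\norm{\eta-\chi}_1$ to conclude $\sum_{j\in\Lambda}q_j>1-(\epsilon+\norm{\eta-\chi}_1)$. Hence $\Lambda$ is feasible for $D_0^{\epsilon+\norm{\eta-\chi}_1}(\eta\,\|\,\xi)$ with identical objective; maximizing over $\Lambda$ on the left yields the inequality. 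The lower bound $D_0^{\epsilon-\norm{\eta-\chi}_1}(\eta\,\|\,\xi)\le D_0^\epsilon(\chi\,\|\,\xi)$ follows by the symmetric argument: any $\Lambda$ with $\sum_{j\in\Lambda}q_j>1-(\epsilon-\norm{\eta-\chi}_1)$ obeys $\sum_{j\in\Lambda}\tilde{q}_j>1-\epsilon$, so it is feasible for $D_0^\epsilon(\chi\,\|\,\xi)$.

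For the equality statement under $\epsilon<\mu_{\min}(\eta)/2$, the plan is to invoke Fact~\ref{App: perturbation lemma} to collapse all three values associated with $\eta$. The hypothesis $\norm{\eta-\chi}_1<\epsilon$ gives $\epsilon-\norm{\eta-\chi}_1>0$, while $\epsilon+\norm{\eta-\chi}_1<2\epsilon<\mu_{\min}(\eta)$. Both perturbed error parameters therefore lie in $(0,\mu_{\min}(\eta)]$, where $D_0^{(\cdot)}(\eta\,\|\,\xi)$ is constant by that Fact, so $D_0^{\epsilon-\norm{\eta-\chi}_1}(\eta\,\|\,\xi)=D_0^\epsilon(\eta\,\|\,\xi)=D_0^{\epsilon+\norm{\eta-\chi}_1}(\eta\,\|\,\xi)$. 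The sandwich just established then forces $D_0^\epsilon(\chi\,\|\,\xi)$ to coincide with this common value.

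The main subtlety is the strictness convention in Eq.~\eqref{AppEq:D0epsilon}: the constraint $\sum_{j\in\Lambda}q_j>1-\epsilon$ is strict, so preserving strict inequality under the triangle-inequality shift is essential. The chain $\sum_{j\in\Lambda}\tilde{q}_j>1-\epsilon\Rightarrow\sum_{j\in\Lambda}q_j>1-\epsilon-\norm{\eta-\chi}_1$ does preserve strictness because subtracting a fixed bound from a strict inequality keeps it strict. One must also check that the shifted error stays in the meaningful range; the assumption $\norm{\eta-\chi}_1<\epsilon$ is precisely what guarantees the lower-bound problem $D_0^{\epsilon-\norm{\eta-\chi}_1}$ has a positive error parameter and hence is non-vacuous. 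Beyond this, the only quantum ingredient is the commutativity of the three states, which diagonalizes every step into an elementary manipulation of probability vectors.
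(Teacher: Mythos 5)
Your proposal is correct and follows essentially the same route as the paper's proof: diagonalise all three states in a common eigenbasis, use the triangle inequality $\bigl|\sum_{j\in\Lambda}(q_j-\tilde{q}_j)\bigr|\le\norm{\eta-\chi}_1$ to transfer feasibility of each index set $\Lambda$ between the two optimisations (yielding the sandwich), and then invoke Fact~\ref{App: perturbation lemma} with $\epsilon\pm\norm{\eta-\chi}_1\in(0,\mu_{\rm min}(\eta)]$ to collapse the bounds into equalities. Your added remarks on preserving strictness and on non-vacuity of the $\epsilon-\norm{\eta-\chi}_1$ problem are sound and slightly more explicit than the paper's write-up, but do not change the argument.
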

\begin{proof}
Write $\eta=\sum_jq_j\proj{j}$, $\xi=\sum_jr_j\proj{j}$, \mbox{$\chi=\sum_js_j\proj{j}$}, and $\Delta\coloneqq\norm{\eta-\chi}_1<\epsilon$.
For every $\Lambda$, we have
$
\left|\sum_{j\in\Lambda}s_j - \sum_{j\in\Lambda}q_{j}\right|\le\sum_{j\in\Lambda}|s_j-q_j|=\norm{\eta-\chi}_1$. 
Hence, $\sum_{j\in\Lambda}q_j-\Delta\le\sum_{j\in\Lambda}s_j\le\sum_{j\in\Lambda}q_j+\Delta$, meaning that
\begin{align}
&D_0^{\epsilon-\Delta}(\eta\,\|\,\xi)=\max_{\Lambda:\sum_{j\in\Lambda}q_j-\Delta>1-\epsilon}\log_2\frac{1}{\sum_{j\in\Lambda}r_j}\nonumber\\
&\quad\le\max_{\Lambda:\sum_{j\in\Lambda}s_j>1-\epsilon}\log_2\frac{1}{\sum_{j\in\Lambda}r_j}=D_0^{\epsilon}(\chi\,\|\,\xi)\nonumber\\
&\quad\le\max_{\Lambda:\sum_{j\in\Lambda}q_j+\Delta>1-\epsilon}\log_2\frac{1}{\sum_{j\in\Lambda}r_j}=D_0^{\epsilon+\Delta}(\eta\,\|\,\xi).
\end{align}
Since $\epsilon<\mu_{\rm min}(\eta)/2$ implies both $\epsilon+\Delta$ and $\epsilon-\Delta$ are in the interval $(0,\mu_{\rm min}(\eta)]$, using Fact~\ref{App: perturbation lemma} completes the proof.
\end{proof}
Consequently, for a fixed pair $\eta,\xi$, when a state $\chi$ is approximating $\eta$ (in the order of $\epsilon$), $D_0^\epsilon(\eta\,\|\,\xi)$ and $D_0^\epsilon(\chi\,\|\,\xi)$ will become the {\em same} when $\epsilon$ is smaller than $\mu_{\rm min}(\eta)/2$.
This also implies that $D_0^\epsilon$ is a ``smoothed'' version of $D_{\rm min}$.
As an example, suppose $\eta = \proj{0}$, $\chi = (1-\delta)\proj{0} + \delta\id/d$ with $0<\delta<1/4$, and $\xi = \id/d$. Then $D_{\rm min}(\eta\,\|\,\xi) = \log_2d$, while $D_{\rm min}(\chi\,\|\,\xi) = 0$ no matter how small $\delta$ is.
On the other hand, $\norm{\eta-\chi}_1<2\delta$.
Setting $\epsilon = 2\delta<1/2=\mu_{\rm min}(\eta)/2$, Fact~\ref{App:smooth D_0} implies that 
$D_0^\epsilon(\chi\,\|\,\xi) = D_0^\epsilon(\eta\,\|\,\xi) = \log_2d$.
This example demonstrates how $D_0^\epsilon$ ``smoothes'' $D_{\rm min}$.}

\CYtwo{
Finally, $D_0^\epsilon$ is continuous in the following sense---in fact, it is {\em Lipschitz continuous}~\cite{Analysis-book} in its second argument:

\begin{afact}\label{App:conti D_0}
Consider three commuting states $\eta,\xi,\zeta$ and a given parameter $0<\epsilon<1$.
Then we have
\begin{align}
\left|2^{-D_0^\epsilon(\eta\,\|\,\xi)} - 2^{-D_0^\epsilon(\eta\,\|\,\zeta)}\right|\le\norm{\xi-\zeta}_1.
\end{align}
\end{afact}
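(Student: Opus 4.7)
The plan is to exploit the crucial observation that, in the definition of $D_0^\epsilon(\eta\,\|\,\cdot)$, the set of feasible index sets $\Lambda$ is determined entirely by $\eta$ and $\epsilon$; it does not depend on the second argument. Rewriting Eq.~\eqref{AppEq:D0epsilon} multiplicatively gives
\begin{align}
2^{-D_0^\epsilon(\eta\,\|\,\xi)} = \min_{\Lambda:\sum_{j\in\Lambda}q_j>1-\epsilon}\sum_{j\in\Lambda}r_j,
\end{align}
where $\eta=\sum_jq_j\proj{j}$, $\xi=\sum_jr_j\proj{j}$, and (since $\eta$ and $\xi$ commute with $\zeta$) we may also expand $\zeta=\sum_js_j\proj{j}$ in the same eigenbasis. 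Because the number of distinct $\Lambda$'s is finite, the minimum is attained; call a minimiser $\Lambda^*_\xi$ for the pair $(\eta,\xi)$ and $\Lambda^*_\zeta$ for the pair $(\eta,\zeta)$.

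The core step is then a one-line swap argument: since $\Lambda^*_\zeta$ is feasible in the optimisation defining $2^{-D_0^\epsilon(\eta\,\|\,\xi)}$, using it as a suboptimal choice and comparing with the genuine minimiser for $\zeta$ gives
\begin{align}
2^{-D_0^\epsilon(\eta\,\|\,\xi)} - 2^{-D_0^\epsilon(\eta\,\|\,\zeta)}\le\sum_{j\in\Lambda^*_\zeta}(r_j-s_j)\le\sum_j|r_j-s_j|=\norm{\xi-\zeta}_1,
\end{align}
where the last equality uses that $\xi-\zeta$ is diagonal in $\{\ket{j}\}_j$, so its trace norm reduces to the $\ell^1$ norm of its eigenvalues [recall Eq.~\eqref{Eq:trace norm}]. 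Swapping the roles of $\xi$ and $\zeta$ and using $\Lambda^*_\xi$ as the feasible choice yields the reverse bound, and combining both inequalities delivers the absolute-value statement.

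There is no serious obstacle; the only delicate aspect is making sure one may legitimately work in a single common eigenbasis and that the feasibility region for $\Lambda$ is independent of the second argument. Both points follow immediately from the hypothesis that $\eta,\xi,\zeta$ pairwise commute (so they are simultaneously diagonalisable) and from inspecting Eq.~\eqref{AppEq:D0epsilon}. No smoothing, continuity of $D_{\rm min}$, or spectral-gap condition (unlike in Facts~\ref{App:smooth D_0} and~\ref{App: perturbation lemma}) is needed, which is why the Lipschitz constant is a clean $1$ and the statement holds for every $0<\epsilon<1$.
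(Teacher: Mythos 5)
Your proposal is correct and follows essentially the same route as the paper's own proof: both exploit that the feasible set of index sets $\Lambda$ depends only on $\eta$ and $\epsilon$, plug the minimiser for one second argument into the minimisation for the other, and bound the resulting difference by the $\ell^1$ distance of the eigenvalues in the common eigenbasis. The only cosmetic difference is that you run the swap argument in both directions while the paper fixes an ordering without loss of generality.
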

\begin{proof}
Write $\xi = \sum_jr_j\proj{j}$ and $\zeta = \sum_jt_j\proj{j}$.
Let us assume $2^{-D_0^\epsilon(\eta\,\|\,\zeta)}\ge2^{-D_0^\epsilon(\eta\,\|\,\xi)}$ without loss of generality.
Also, suppose $\Lambda_*$ achieves the optimality of $2^{-D_0^\epsilon(\eta\,\|\,\xi)}$; namely, $2^{-D_0^\epsilon(\eta\,\|\,\xi)} = \sum_{j\in\Lambda_*}r_j$.
Then we have
\begin{align}
&\left|2^{-D_0^\epsilon(\eta\,\|\,\xi)} - 2^{-D_0^\epsilon(\eta\,\|\,\zeta)}\right|=2^{-D_0^\epsilon(\eta\,\|\,\zeta)}-2^{-D_0^\epsilon(\eta\,\|\,\xi)}\nonumber\\
&=\min_{\Lambda:\sum_{j\in\Lambda}q_j>1-\epsilon}\sum_{j\in\Lambda}t_j - \sum_{j\in\Lambda_*}r_j\le \sum_{j\in\Lambda_*}(t_j-r_j)\le\norm{\zeta-\xi}_1,
\end{align}
which completes the proof.
\end{proof}}

\CYtwo{
Combining Facts~\ref{App:smooth D_0} and~\ref{App:conti D_0}, we can address the continuity when both arguments are slightly perturbed:

\begin{afact}\label{App:coro}
Consider four commuting states $\eta,\eta',\xi,\xi'$ and a given parameter $0<\epsilon<1/2$.
Suppose $\norm{\eta-\eta'}_1<\epsilon$, $\norm{\xi-\xi'}_1<\epsilon$, and $\epsilon<\mu_{\rm min}(\eta)/2$.
Then we have
\begin{align}
\left|2^{-D_0^\epsilon(\eta\,\|\,\xi)} - 2^{-D_0^\epsilon(\eta'\,\|\,\xi')}\right|<\epsilon.
\end{align}
\end{afact}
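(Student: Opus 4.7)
The plan is to reduce the simultaneous perturbation of both arguments to two previously handled one-sided perturbations, by chaining Fact~\ref{App:smooth D_0} (smoothness in the first argument) with Fact~\ref{App:conti D_0} (Lipschitz continuity in the second argument). Conveniently, the combined hypotheses $\norm{\eta-\eta'}_1 < \epsilon$ and $\epsilon < \mu_{\rm min}(\eta)/2$ land exactly in the equality regime of Fact~\ref{App:smooth D_0}, and mutual commutativity of all four states ensures that every entropic quantity appearing along the way is well-defined.

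First, I would apply the equality clause of Fact~\ref{App:smooth D_0} with anchor state $\eta$ and perturbed state $\chi = \eta'$, keeping the second slot fixed at $\xi$. This collapses the first-argument discrepancy at zero cost, yielding $D_0^\epsilon(\eta\,\|\,\xi) = D_0^\epsilon(\eta'\,\|\,\xi)$, and hence the identity $2^{-D_0^\epsilon(\eta\,\|\,\xi)} = 2^{-D_0^\epsilon(\eta'\,\|\,\xi)}$.

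Second, with the first slot now anchored at $\eta'$, the triple $\eta',\xi,\xi'$ still consists of mutually commuting states, so Fact~\ref{App:conti D_0} applies directly and bounds $\left|2^{-D_0^\epsilon(\eta'\,\|\,\xi)} - 2^{-D_0^\epsilon(\eta'\,\|\,\xi')}\right|$ by $\norm{\xi-\xi'}_1$, which is strictly smaller than $\epsilon$ by hypothesis. Chaining this bound with the equality from the first step gives the claim.

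I do not anticipate a genuine obstacle: the heavy lifting was already done in the preceding facts, and the only thing to be careful about is the order of the two substitutions---first collapsing the first argument (where the strong hypothesis $\epsilon < \mu_{\rm min}(\eta)/2$ is used to force the perturbed error $\epsilon \pm \norm{\eta-\eta'}_1$ to stay inside $(0,\mu_{\rm min}(\eta)]$ and hence to coincide with $\epsilon$ via Fact~\ref{App: perturbation lemma}), and only then invoking continuity in the second slot, which requires no such lower-eigenvalue condition. The triangle inequality implicit in combining the two bounds is trivial here, because one of the two contributions is exactly zero.
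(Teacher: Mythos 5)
Your proposal is correct and follows essentially the same route as the paper's own proof: first use the equality clause of Fact~\ref{App:smooth D_0} (enabled by $\epsilon<\mu_{\rm min}(\eta)/2$) to get $D_0^\epsilon(\eta\,\|\,\xi)=D_0^\epsilon(\eta'\,\|\,\xi)$, then apply the Lipschitz bound of Fact~\ref{App:conti D_0} in the second argument. Your additional remark about why the order of the two steps matters is accurate but not a departure from the paper's argument.
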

\begin{proof}
Since $\epsilon<\mu_{\rm min}(\eta)/2$, Fact~\ref{App:smooth D_0} implies $D_0^\epsilon(\eta'\,\|\,\xi) = D_0^\epsilon(\eta\,\|\,\xi).$
The result follows by applying Fact~\ref{App:conti D_0}.
\end{proof}
}

\section*{Acknowledgements}
We thank (in alphabetical order) Antonio Ac\'in, Alvaro Alhambra, Stefan B$\ddot{\rm a}$uml, Philippe Faist, Yeong-Cherng Liang, Matteo Lostaglio, Jef Pauwels, Mart\'i Perarnau-Llobet, Bartosz Regula, Valerio Scarani, Gabriel Senno, Yaw-Shih Shieh, Paul Skrzypczyk, Jacopo Surace, Gelo Noel M. Tabia, Ryuji Takagi, Philip Taranto, and Armin Tavakoli for fruitful discussions.
We also thank the Quantum Thermodynamics Summer School (23-27 August 2021, Les Diablerets, Switzerland), organised by L\'idia del Rio and Nuriya Nurgalieva, for the inspirational environment that helped me to improve the early version of this work significantly.
\CY{We acknowledge support from} ICFOstepstone (the Marie Sk\l odowska-Curie Co-fund GA665884), the Spanish MINECO (Severo Ochoa SEV-2015-0522), the Government of Spain (FIS2020-TRANQI and Severo Ochoa CEX2019-000910-S), Fundaci\'o Cellex, Fundaci\'o Mir-Puig, Generalitat de Catalunya (SGR1381 and CERCA Programme), the ERC \CY{Advanced Grant (on grants CERQUTE and FLQuant)}, the AXA Chair in Quantum Information Science, the Royal Society through Enhanced Research Expenses (on grant NFQI), \CY{and the Leverhulme Trust Early Career Fellowship (on grant ``Quantum complementarity: a novel resource for quantum science and technologies'' with number ECF-2024-310)}.

\bibliography{Ref.bib}

\end{document}